\newtheorem{theorem}{Theorem}[section]
\newtheorem{lemma}[theorem]{Lemma}
\newtheorem{meta-theorem}[theorem]{Meta-Theorem}
\newtheorem{remark}[theorem]{Remark}
\newtheorem{corollary}[theorem]{Corollary}
\newtheorem{definition}[theorem]{Definition}
\definecolor{darkgreen}{rgb}{0,0.5,0}
\crefname{theorem}{Theorem}{Theorems}
\Crefname{lemma}{Lemma}{Lemmas}
\algnewcommand\algorithmicswitch{\textbf{switch}}
\algnewcommand\algorithmiccase{\textbf{case}}
\newcommand{\eps}{\varepsilon}
\newcommand{\congest}{$\mathsf{CONGEST}$\xspace}
\newcommand{\local}{$\mathsf{LOCAL}$\xspace}
\newcommand{\poly}{\operatorname{\text{{\rm poly}}}}
\newcommand{\ceil}[1]{\lceil #1 \rceil}
\newcommand{\set}[1]{\left\{#1\right\}}
\newcommand{\round}{\mathsf{round}}
\newcommand{\paren}[1]{\mathopen{}\left(#1\right)\mathclose{}}
\renewcommand{\paragraph}[1]{\vspace{0.15cm}\noindent {\bf #1}:}
\newcommand{\FullOrShort}{full}
  \newcommand{\fullOnly}[1]{#1}
  \newcommand{\shortOnly}[1]{}
    \newcommand{\fullOnly}[1]{}
    \newcommand{\IncludePictures}[1]{}
\begin{document}

\date{}

\title{Deterministic Distributed Edge-Coloring \\ via Hypergraph Maximal Matching
}

\author{
	 Manuela Fischer\\
  \small ETH Zurich \\
  \small manuela.fischer@inf.ethz.ch
\and
 Mohsen Ghaffari\\
  \small ETH Zurich \\
  \small ghaffari@inf.ethz.ch
\and 
Fabian Kuhn\\
  \small University of Freiburg \\
  \small kuhn@cs.uni-freiburg.de
 }

\maketitle

\begin{abstract}
We present a deterministic distributed algorithm that computes a $(2\Delta-1)$-edge-coloring, or even list-edge-coloring, in any $n$-node graph with maximum degree $\Delta$, in $O(\log^7 \Delta \cdot \log n)$ rounds. This answers one of the long-standing open questions of \emph{distributed graph algorithms} from the late 1980s, which asked for a polylogarithmic-time algorithm.  See, e.g., Open Problem 4 in the Distributed Graph Coloring book of Barenboim and Elkin. The previous best round complexities were $2^{O(\sqrt{\log n})}$ by Panconesi and Srinivasan [STOC'92] and $\tilde{O}(\sqrt{\Delta}) + O(\log^* n)$ by Fraigniaud, Heinrich, and Kosowski [FOCS'16]. A corollary of our deterministic list-edge-coloring also improves the randomized complexity of $(2\Delta-1)$-edge-coloring to $\poly(\log\log n)$ rounds.

\medskip
The key technical ingredient is a deterministic distributed algorithm for \emph{hypergraph maximal matching}, which we believe will be of interest beyond this result. In any hypergraph of rank $r$ --- where each hyperedge has at most $r$ vertices --- with $n$ nodes and maximum degree $\Delta$, this algorithm computes a maximal matching in $O(r^5 \log^{6+\log r } \Delta \cdot \log n)$ rounds. 

\medskip
This hypergraph matching algorithm and its extensions also lead to a number of other results. In particular, we obtain a polylogarithmic-time deterministic distributed maximal independent set (MIS) algorithm for graphs with bounded neighborhood independence, hence answering Open Problem 5 of Barenboim and Elkin's book, a $\big((\log \Delta/\eps)^{O(\log 1/\eps)}\big)$-round deterministic algorithm for $(1+\eps)$-approximation of maximum matching, and a quasi-polylogarithmic-time deterministic distributed algorithm for orienting $\lambda$-arboricity graphs with out-degree at most $\ceil{(1+\eps)\lambda}$, for any constant $\eps>0$, hence partially answering Open Problem 10 of Barenboim and Elkin's book.

\end{abstract}

\setcounter{page}{0}
\thispagestyle{empty}
\newpage

\section{Introduction and Related Work}
\vspace{-8pt}
\emph{Distributed graph algorithms} have been studied extensively over the past 30 years, since the seminal work of Linial\cite{linial1987LOCAL}. Despite this, determining whether there are efficient deterministic distributed algorithms for the most classic problems of the area remains a long-standing open question. 

Distributed graph algorithms are typically studied in a standard synchronous message passing model known as the \local\ model \cite{linial1987LOCAL,Peleg:2000}: the network is abstracted as an undirected graph $G=(V, E)$, $n=|V|$, with maximum degree $\Delta$, and where each node has a $\Theta(\log n)$-bit unique identifier. Initially, nodes only know their neighbors in $G$. At the end, each node should know its own part of the solution, e.g., the colors of its edges in edge-coloring. Communication happens in synchronous rounds, where in each round each node sends a message to each of its neighbors.\footnote{In the \local\ model, messages might be of arbitrary size. A variant of the model where the messages have to be of bounded size is known as the \congest\ model \cite{Peleg:2000}. Our edge-coloring algorithms and our MIS and vertex-coloring algorithms for graphs of bounded neighborhood independence in fact work with small $O(\log n)$-bit size messages. Though, for the sake of the readability, we avoid explicitly discussing the details of this aspect.} The main complexity measure is the number of rounds needed for solving a given graph problem.

The four classic local distributed graph problems are maximal independent set (MIS), ($\Delta+1$)-vertex-coloring, ($2\Delta-1$)-edge-coloring, and maximal matching\cite{panconesirizzi2000, barenboim2013monograph}. All of these problems have trivial greedy sequential algorithms, as well as simple $O(\log n)$-round randomized distributed algorithms\cite{luby1986simple, alon1986fast}, and even some faster ones\cite{barenboim2012locality, elkin2015EdgeColoring, Ghaffari-MIS, harris2016distributed}. But the deterministic distributed complexity of these problems remains widely open, despite extensive interest (see, e.g., the first five open problems of \cite{barenboim2013monograph}). Particularly, with regards to MIS --- which is the hardest of the four problems, as the other three can be reduced to MIS locally\cite{linial1987LOCAL} --- Linial\cite{linial1992locality} asked \begin{center} \begin{minipage}{0.85\linewidth}
\vspace{-5pt}
\begin{mdframed}[hidealllines=true, backgroundcolor=gray!00]
\emph{``can it [MIS] always be found [deterministically] in polylogarithmic time?".}  
\vspace{-7pt}
\end{mdframed}
\end{minipage}
\end{center}
This remains the most well-known open question of the area. The best known round complexity is $2^{O(\sqrt{\log n})}$, due to Panconesi and Srinivasan\cite{panconesi1992improved}. Panconesi and Rizzi pointed out in the year 2000 \cite{panconesirizzi2000} that ``\emph{while maximal matchings can be computed in polylogarithmic time, in $n$, in the distributed model\cite{HanckowiakKP98}, it is a decade old open problem whether the same running time is achievable for the remaining 3 structures.}" The status remains the same as of today, after almost two more decades. In particular, for edge-coloring which is our main target, Barenboim and Elkin stated the following problem in their recent Distributed Graph Coloring book\cite{barenboim2013monograph}: 

\begin{center}
\begin{minipage}{0.95\linewidth}
\vspace{-6pt}
\begin{mdframed}[hidealllines=true, backgroundcolor=gray!00]
\textsc{Open Problem 11.4}\cite{barenboim2013monograph} $\;$ Devise or rule out a deterministic ($2\Delta-1$)-edge-coloring algorithm that runs in polylogarithmic time. 
\vspace{-8pt}
\end{mdframed}
\end{minipage}
\end{center}

\subsection{Our Contributions}
\vspace{-8pt}
\paragraph{Improved Deterministic Edge-Coloring Algorithm} One of our main end results is a positive resolution of the above question.

\vspace{-3pt}
\begin{theorem}\label{thm:edge-coloring} There is a deterministic distributed algorithm that computes a $(2\Delta-1)$-edge-coloring in $O(\log^{7} \Delta \cdot \log n)$ rounds, in any $n$-node graph with maximum degree $\Delta$. Moreover, the same algorithm solves list-edge-coloring, where each edge $e\in E$ must get a color from an arbitrary given list $L_e$ of colors with $|L_e|=d_{e}+1$, where $d_{e}$ denotes the number of edges incident to $e$.
\end{theorem}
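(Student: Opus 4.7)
The plan is to reduce $(2\Delta-1)$-list-edge-coloring --- which subsumes plain $(2\Delta-1)$-edge-coloring by taking $L_e = \{1,\ldots,2\Delta-1\}$ --- to computing a \emph{maximal matching in a rank-$3$ hypergraph}, and then invoke the deterministic hypergraph maximal matching algorithm advertised in the abstract as the paper's key technical ingredient.

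\textbf{The reduction.} I would construct an auxiliary hypergraph $H$ with vertex set $\{x_e : e \in E\} \cup \{y_{v,c} : v \in V,\ c \in \bigcup_{e'} L_{e'}\}$ and, for every pair $(e,c)$ with $c \in L_e$, a hyperedge $h_{e,c} := \{x_e,\, y_{v,c},\, y_{w,c}\}$, where $e = \{v,w\}$. This $H$ has rank $r = 3$, maximum degree $O(\Delta)$ (each $x_e$ lies in $|L_e| \le 2\Delta-1$ hyperedges and each $y_{v,c}$ in at most $d(v) \le \Delta$), and $n^{O(1)}$ vertices; moreover, each hyperedge $h_{e,c}$ is naturally ``hosted'' by the edge $e$ in $G$, so one round of communication in $H$ can be simulated in $O(1)$ rounds of $G$.

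\textbf{From maximal matching to coloring.} Given a maximal matching $M$ of $H$, I would interpret $h_{e,c} \in M$ as assigning color $c$ to edge $e$. Properness follows at once: two adjacent edges $e, e'$ meeting at a vertex $v$ cannot both receive color $c$, since $h_{e,c}$ and $h_{e',c}$ share the vertex $y_{v,c}$ and therefore cannot both be in $M$. The only substantive claim is that $M$ colors \emph{every} edge: if $e = \{v,w\}$ were left uncolored, then by maximality, for each $c \in L_e$ the hyperedge $h_{e,c}$ is blocked; as $x_e$ is unmatched, the block must come via $y_{v,c}$ or $y_{w,c}$, meaning some edge incident to $e$ at $v$ or $w$ is colored $c$. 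Distinct colors are used by distinct edges, so this defines an injection from $L_e$ into the set of $d_e$ edges incident to $e$, contradicting $|L_e| = d_e + 1 > d_e$.

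\textbf{Invocation and accounting.} Feeding $H$ into the paper's deterministic hypergraph maximal matching algorithm with $r = 3$, $\Delta(H) = O(\Delta)$, and $|V(H)| = n^{O(1)}$ yields a maximal matching in $O\!\big(3^5 \log^{6+\log 3} \Delta \cdot \log n\big)$ rounds, which is of the form $O(\log^{O(1)} \Delta \cdot \log n)$ with exponent just under $8$; the $O(\log^7 \Delta \cdot \log n)$ bound asserted in \Cref{thm:edge-coloring} should then follow from a tighter analysis specialized to small $r$. Since $H$ is simulated on $G$ with constant overhead, the same asymptotic round complexity transfers to the edge-coloring algorithm on $G$.

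\textbf{The main obstacle.} The reduction above is short and transparent, and in fact does no nontrivial combinatorial work beyond the pigeonhole counting that turns maximality into completeness of the coloring. The entire difficulty of \Cref{thm:edge-coloring} therefore lies in the hypergraph maximal matching subroutine: the neighborhoods of the $y_{v,c}$-vertices in $H$ do not correspond to any local structure of $G$, and no black-box reduction to classical graph maximal matching is available. Designing a deterministic, polylog-round, rank-$r$ maximal matching algorithm that depends only on $r$, $\Delta(H)$, and $|V(H)|$ is precisely the paper's central technical contribution, and this is where the proof effort has to be concentrated.
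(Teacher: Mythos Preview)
Your proposal is correct and follows essentially the same route as the paper: your hypergraph $H$ is exactly the construction of \Cref{lem:ListEdgeColoringViaHypergraphMatching} (and \Cref{lem:EdgeColoringViaHypergraphMatching}) under the renaming $x_e \leftrightarrow w_e$, $y_{v,c} \leftrightarrow v_c$, and your pigeonhole argument for completeness of the coloring is the same as the paper's ``exactly one copy $e_i$ in the maximal matching'' claim. Your observation that the raw exponent $6+\log 3\approx 7.58$ needs a constant-tightening for small $r$ to reach $7$ is also exactly what the paper says in \Cref{rmrk:constant}.
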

\vspace{-2pt}

For list-edge-coloring, the previously best known round complexity was $2^{O(\sqrt{\log n})}$, by a classic network decomposition of Panconesi and Srinivasan \cite{panconesi1992improved}, which itself improved on an $2^{O(\sqrt{\log n \cdot \log\log n})}$-round algorithm of Awerbuch et al.\cite{awerbuch1989network}. For low-degree graphs, the best known is an $(\tilde{O}(\sqrt{\Delta}) + O(\log^* n))$-round algorithm of Fraigniaud, Heinrich, and Kosowski \cite{fraigniaud2016local}, which is more general and applies also to ($\Delta+1$)-list-vertex-coloring. There are some known $\poly\log n$-round deterministic edge-coloring algorithms, but all have two major shortcomings: (A) they require more colors, and (B) they are quite restricted and do not work for list-coloring. These algorithms are as follows: (I) a $((2+o(1)) \Delta)$-edge-coloring by Ghaffari and Su\cite{GS17}; (II) a $\Delta\cdot 2^{O\paren{\frac{\log \Delta}{\log\log \Delta}}}$-edge-coloring by Barenboim and Elkin\cite{Barenboim:edge-coloring}---see also \cite[Appendix B]{GS17} for a short proof; and (III) an $O(\Delta\log n)$-edge-coloring by Czygrinow et al.\cite{czygrinow2001coloring}. See also \cite[Chapter 8]{Barenboim:edge-coloring}.

We also note that a recent work of Barenboim, Elkin, and Maimon \cite{barenboim2016edgecoloring} presents an efficient deterministic algorithm for computing a $(\Delta+o(\Delta))$-edge-coloring in graphs with arboricity $a\leq \Delta^{1-\delta}$, for some constant $\delta>0$. In \Cref{lem:edge-coloring-lowArb}, we sketch how a simple combination of the list-edge-coloring algorithm of \Cref{thm:edge-coloring} with $H$-partitionings\cite[Chapter 5.1]{barenboim2013monograph} significantly extends their result.

\paragraph{Improved Randomized Edge-Coloring Algorithm} The deterministic list-edge-coloring algorithm of \Cref{thm:edge-coloring}, in combination with some randomized edge-coloring algorithms of \cite{elkin2015EdgeColoring, barenboim2012locality}, also improves the complexity of randomized algorithms for $(2\Delta-1)$-edge-coloring, making it the first among the four classic problems whose randomized complexity falls down to $\poly(\log\log n)$. 

\begin{corollary}\label{thm:Randedge-coloring} There is a randomized distributed algorithm that computes a $(2\Delta-1)$-edge-coloring in $O(\log^{8} \log n)$ rounds, with high probability, in any $n$-node graph with maximum degree $\Delta$.
\end{corollary}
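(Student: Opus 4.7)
The plan is to combine \Cref{thm:edge-coloring} with the \emph{graph shattering} framework for $(2\Delta-1)$-edge-coloring developed by Elkin, Pettie, and Su~\cite{elkin2015EdgeColoring} and Barenboim, Elkin, Pettie, and Schneider~\cite{barenboim2012locality}. The structure is a two-phase algorithm: a short randomized pre-processing that colors most edges and leaves the uncolored edges in small connected components, followed by a deterministic clean-up that handles each residual component in parallel using \Cref{thm:edge-coloring}.

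For the randomized phase I would run the one-shot trial procedure of~\cite{elkin2015EdgeColoring,barenboim2012locality}. In each round, every still-uncolored edge $e$ samples a color uniformly from its current list $L_e$ (initially $|L_e|=2\Delta-1$, and once an edge keeps a color that color is permanently removed from the lists of its incident edges) and keeps it iff no adjacent still-uncolored edge samples the same color. A key invariant preserved throughout is $|L_e|\ge d_e+1$, where $d_e$ counts the incident still-uncolored edges, so the remaining instance is always of the kind that \Cref{thm:edge-coloring} can solve. Because $|L_e|$ is comfortably larger than the number of incident uncolored edges, each trial succeeds with constant probability, and the edge-coloring-specific shattering analysis of~\cite{elkin2015EdgeColoring,barenboim2012locality} then shows that after $\poly(\log\log n)$ rounds of trials, the residual uncolored edges form connected components each of size at most $N=\poly(\log n)$, with high probability.

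For the deterministic clean-up I would apply \Cref{thm:edge-coloring} to each residual component in parallel; this is sound because the components are vertex-disjoint and the list-size invariant $|L_e|\ge d_e+1$ is exactly the input requirement of the theorem. Each component has at most $N=\poly(\log n)$ vertices and maximum degree $\Delta'\le N$, so \Cref{thm:edge-coloring} runs in
\[
O\bigl(\log^{7}\Delta'\cdot\log N\bigr)\;=\;O\bigl((\log\log n)^{7}\cdot\log\log n\bigr)\;=\;O\bigl(\log^{8}\log n\bigr)
\]
rounds per component, which subsumes the $\poly(\log\log n)$-round randomized phase.

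The main delicate point will be confirming that the pre-shattering really costs only $\poly(\log\log n)$ rounds --- as opposed to the $O(\log\Delta)$ one would get from a direct Chernoff/Azuma argument --- and this is precisely where the extra slack in $(2\Delta-1)$-edge-coloring, as exploited by~\cite{elkin2015EdgeColoring}, is crucial: the slack yields constant-probability success per trial with only $\poly(\log\log n)$-wise dependencies, which is enough for the shattering lemma to cap residual components at $\poly(\log n)$ vertices. Granted this black box, summing the two phases yields the claimed $O(\log^{8}\log n)$-round high-probability bound.
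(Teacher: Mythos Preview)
Your two-phase outline---randomized shattering, then \Cref{thm:edge-coloring} on the residual components---is exactly the paper's plan, and your clean-up analysis matches it. The gap is in the phase you yourself flag as delicate.

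You run Johansson's simple one-shot trials and assert that after $\poly(\log\log n)$ rounds the uncolored edges shatter into $\poly(\log n)$-size components, citing \cite{elkin2015EdgeColoring,barenboim2012locality}. But for the trial procedure you describe, the known shattering bound is $O(\log\Delta)$ rounds (this is what \cite{barenboim2012locality} gives), and when $\Delta=n^{\Omega(1)}$ that is $\Theta(\log n)$, not $\poly(\log\log n)$. The ``extra slack'' argument you allude to does not fix this: the $O(\log^*\Delta)$ result of \cite{elkin2015EdgeColoring} comes from a substantially different randomized process, not from simple trials plus a sharper shattering lemma.

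The paper avoids the issue by a case split on $\Delta$. If $\Delta=\Omega(\log^2 n)$, it invokes the $O\big(\log^*\Delta+\log n/\Delta^{1-o(1)}\big)$-round $(1+\eps)\Delta$-edge-coloring of \cite{elkin2015EdgeColoring} directly; in this regime that runs in $O(\log^*\Delta)$ rounds and colors \emph{all} edges, so no clean-up is needed. If $\Delta=o(\log^2 n)$, then $O(\log\Delta)=O(\log\log n)$, so the standard $O(\log\Delta)$ rounds of Johansson's trials already fit the budget; by \cite{barenboim2012locality} the residual components then have size $N=\poly\log n$, and \Cref{thm:edge-coloring} finishes each in $O(\log^7 N\cdot\log N)=O(\log^8\log n)$ rounds. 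The case distinction is what makes the pre-shattering cost harmless without the stronger (and, as stated, unsupported) black box you invoke.
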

\vspace{-3pt}
The previous (worst-case) complexity for randomized $(2\Delta-1)$-edge-coloring was $2^{O(\sqrt{\log\log n})}$ rounds\footnote{It is worth noting that the randomized algorithm of \cite{elkin2015EdgeColoring}, as well as its predecessors\cite{panconesi1997randomized, dubhashi1998near}, can also obtain better colorings, even as good as $((1+\eps)\Delta)$-edge-coloring. Though this becomes slow in low-degree graphs.}, due to Elkin, Pettie, and Su\cite{elkin2015EdgeColoring}. By improving this, \Cref{thm:Randedge-coloring} widens the provable gap between the complexity of $(2\Delta-1)$-edge-coloring, which is now in $\poly(\log\log n)$ rounds, and the complexity of maximal matching, which needs $\Omega(\sqrt{\log n/\log\log n})$ rounds\cite{Kuhn:2016:LCL:2906142.2742012}. 

\paragraph{Unified Formulation as Hypergraph Maximal Matching} 
Our first step towards proving \Cref{thm:edge-coloring} is a simple unification of all the aforementioned four classic problems: MIS, ($\Delta+1$)-vertex-coloring, ($2\Delta-1$)-edge-coloring, and maximal matching. We can cast each of these problems as a maximal matching problem  on hypergraphs\footnote{In the \local model, when communicating on a hypergraph, per round each node $v$ can send a message on each of its hyperedges, which then gets delivered to all the other endpoints of that hyperedge. The variant of the model with bounded-size messages can be specialized in a few different ways, see e.g. \cite{kutten2014distributed}. For our purposes, hypergraphs are mainly used for formulating the requirements of the problem, and the real communication happens on the base graph.} of some rank $r$, which depends on the problem, and increases as we move from maximal matching to maximal independent set. In other words, the hypergraph maximal matching problem can be used to obtain a smooth interpolation between maximal matching in graphs and maximal independent set in graphs. Recall that the rank of a hypergraph is the maximum number of vertices in any of its hyperedges. Moreover, a matching in a hypergraph is a set of hyperedges, no two of which share an endpoint. 

We present a reduction from $(2\Delta-1)$-edge-coloring to maximal matching in rank-$3$ hypergraphs, as we sketch next in \Cref{lem:EdgeColoringViaHypergraphMatching}. A similar reduction can be used for list-edge-coloring, as formalized in \Cref{lem:ListEdgeColoringViaHypergraphMatching}. We note that these reductions are inspired by the well-known reduction of Luby from $(\Delta+1)$-vertex-coloring to MIS\cite{luby1986simple, linial1987LOCAL}.

\begin{mdframed}[hidealllines=true, backgroundcolor=gray!20]
\vspace{-9pt}
\begin{lemma}\label{lem:EdgeColoringViaHypergraphMatching} Given a deterministic distributed algorithm $\mathcal{A}$ that computes a maximal matching in $N$-vertex hypergraphs of rank $3$ and maximum degree $d$ in $T(N, d)$ rounds, there is a deterministic distributed algorithm $\mathcal{B}$ that computes a $(2\Delta-1)$-edge-coloring of any $n$-node graph $G=(V, E)$ with maximum degree $\Delta$ in at most $T(3n\Delta, 2\Delta-1)$ rounds.
\end{lemma}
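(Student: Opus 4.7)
The plan is to reduce $(2\Delta-1)$-edge-coloring on $G$ to hypergraph maximal matching on an auxiliary rank-$3$ hypergraph $H$, in the spirit of Luby's reduction from $(\Delta+1)$-vertex-coloring to MIS. Concretely, I would take the vertex set of $H$ to be of two types: a \emph{vertex-color pair} $(v, c)$ for every $v \in V$ and every color $c \in \{1, \dots, 2\Delta-1\}$, together with a \emph{slot token} $\tilde{e}$ for every edge $e \in E$. The hyperedges encode the atomic decision ``assign color $c$ to edge $e = \{u,v\}$'', namely
\[
h_{e,c} \;=\; \{(u,c),\,(v,c),\,\tilde{e}\}, \qquad e=\{u,v\}\in E,\; c\in\{1,\dots,2\Delta-1\}.
\]
This gives a hypergraph of rank exactly $3$, with at most $n(2\Delta-1) + |E| \leq n(2\Delta-1) + n\Delta/2 \leq 3n\Delta$ vertices, and with maximum degree $\max(\Delta, 2\Delta-1) = 2\Delta-1$, matching the bounds in the statement.

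Next I would argue that any maximal matching $M$ in $H$ induces a valid $(2\Delta-1)$-edge-coloring of $G$: since all hyperedges that contain $\tilde{e}$ pairwise share that vertex, at most one hyperedge of $M$ touches $\tilde{e}$, so each edge $e$ receives at most one color. \emph{Properness} is immediate: if two incident edges $e, e'$ sharing endpoint $v$ both picked color $c$, the matched hyperedges $h_{e,c}$ and $h_{e',c}$ would both contain $(v,c)$, contradicting the hypergraph-matching property. The main technical point is \emph{completeness}, i.e., that every $\tilde{e}$ is actually matched. Suppose for contradiction some edge $e=\{u,v\}$ has $\tilde{e}$ unmatched. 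The vertex $(u, c)$ can only be matched through a hyperedge $h_{e',c}$ for some edge $e'\neq e$ incident to $u$, and there are at most $\Delta-1$ such edges; the same bound applies to $v$. Hence, at most $2(\Delta-1)=2\Delta-2$ colors $c$ can have $(u,c)$ or $(v,c)$ matched. Since $2\Delta-1$ colors are available, some color $c^\star$ leaves both $(u,c^\star)$ and $(v,c^\star)$ unmatched; together with $\tilde{e}$ unmatched, the hyperedge $h_{e,c^\star}$ could be added to $M$, contradicting maximality.

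Finally, I would handle the simulation of $\mathcal{A}$ in the \local\ model on $G$. Each hypergraph vertex is naturally hosted by a single node of $G$: vertex $(v,c)$ is hosted at $v$, and $\tilde{e}$ is hosted at either (say, the lower-ID) endpoint of $e$. Every hyperedge $h_{e,c}$ is contained in the $1$-neighborhood of the $G$-edge $e$, so one round of $\mathcal{A}$ on $H$ is simulated by $O(1)$ rounds of communication along the edges of $G$; absorbing this constant, the $G$-round complexity of $\mathcal{B}$ is at most $T(3n\Delta, 2\Delta-1)$. Extracting the coloring at the end requires no extra rounds, since each $u\in e$ hosts the vertex $(u,c)$ of the unique hyperedge of $M$ that contains $\tilde{e}$.

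The only real obstacle is designing the gadget so that \emph{simultaneously} (i) the rank stays at $3$ (rather than growing with $\Delta$), (ii) the vertex count is linear in $n\Delta$ rather than $n\Delta^2$, and (iii) maximality of the matching forces every edge to be colored. Introducing the explicit per-edge tokens $\tilde{e}$ is what ties these three properties together: it serves as the unique ``conflict point'' that enforces uniqueness of the chosen color per edge, while keeping all hyperedges of size $3$ and the vertex count below $3n\Delta$.
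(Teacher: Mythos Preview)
Your construction is exactly the paper's: the paper phrases it as taking $2\Delta-1$ copies of $G$ and adding one extra vertex $w_e$ per edge to turn each copy of $e$ into a rank-$3$ hyperedge, which is precisely your $\{(u,c),(v,c),\tilde e\}$. Your correctness argument and parameter bounds are right and in fact more detailed than the paper's sketch; the only cosmetic point is that the simulation overhead is genuinely one $G$-round per $H$-round (every hyperedge lives on a single $G$-edge), so there is no constant to ``absorb'' and the bound $T(3n\Delta,2\Delta-1)$ holds exactly.
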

\vspace{-3pt}
\begin{proof}[Proof Sketch]
To edge-color $G=(V, E)$, we generate a hypergraph $H$: Take $2\Delta-1$ copies of $G$. For each edge $e\in E$, let $e_1$ to $e_{2\Delta-1}$ be its copies. For each $e\in E$, add one extra vertex $w_e$ to $H$ and then, change all copy edges $e_1$ to $e_{2\Delta-1}$ to $3$-hyperedges by adding $w_e$ to them. Algorithm $\mathcal{B}$ runs the maximal matching algorithm $\mathcal{A}$ on $H$, and then, for each $e\in E$, if the copy $e_i$ of $e$ is in the computed maximal matching, $\mathcal{B}$ colors $e$ with color $i$. One can verify that each $G$-edge $e$ must have exactly one copy $e_i$ in the maximal matching, and thus we get a $(2\Delta-1)$-edge-coloring.
\end{proof}  
\vspace{-6pt}
\end{mdframed}
\vspace{-5pt}

Besides edge-coloring that gets reduced to hypergraph maximal matching for rank $r=3$, and graph maximal matching which trivially is the special case of $r=2$, we can also formulate MIS and $(\Delta+1)$-vertex-coloring as maximal matching in hypergraphs. For instance, to translate MIS on a graph $G$ to maximal matching on a hypergraph $H$, view each $G$-edge as one $H$-vertex and each $G$-node $v$ as one $H$-edge on the $H$-vertices corresponding to the $G$-edges incident to $v$. However, unfortunately, in this naive formulation, the rank becomes $\Delta$. As such, we do not obtain any improvement over the known algorithms for these problems, in the general case. It remains an intriguing open question whether any alternative formulation, perhaps in combination with other ideas, can help. However, as we shall discuss soon, using some more involved ideas, we obtain improvements for some special cases, which lead to answers for a few other open problems.

\paragraph{Our Hypergraph Maximal Matching Algorithm} Our main technical contribution is an efficient deterministic algorithm for maximal matching in low-rank hypergraphs. In combination with the reduction of \Cref{lem:EdgeColoringViaHypergraphMatching}, this leads to our edge-coloring algorithm stated in \Cref{thm:edge-coloring}.    

\begin{theorem}\label{thm:HypergraphMatching} There is a deterministic distributed algorithm that computes a maximal matching in $O(r^5 \log^{6+\log r} \Delta \cdot \log n)$ rounds\footnote{Throughout this paper, all logarithms are to base $2$.}, in any $n$-node hypergraph with maximum degree $\Delta$ and rank $r$, i.e., where each hyperedge contains at most $r$ vertices.
\end{theorem}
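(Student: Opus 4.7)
The plan is to reduce the problem, in the standard way, to $O(\log n)$ iterations of a \emph{matching-halving} subroutine, and then spend the bulk of the work on the subroutine itself. Concretely, iterate the following $O(\log n)$ times: find a matching $M_i$ in the current hypergraph whose removal (together with all hyperedges incident to $V(M_i)$) eliminates at least a constant fraction of the vertices that are still incident to some hyperedge. After $O(\log n)$ such halvings no such vertex remains, so $M := \bigcup_i M_i$ is a maximal matching. This reduces the theorem to implementing the subroutine in $O(r^5 \log^{6+\log r}\Delta)$ rounds.

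For the subroutine I would use a fractional-matching-plus-rounding scheme. First, deterministically compute a fractional matching $x \in [0,1]^E$ satisfying $\sum_{e\ni v} x_e \le 1$ for all $v$ and $\|x\|_1 = \Omega(N/r)$, where $N$ is the number of currently active vertices; such an $x$ can be produced by a Plotkin--Shmoys--Tardos-style multiplicative-weights iteration truncated to $O(\log \Delta)$ bits of precision, and implemented in $\poly(r,\log\Delta)$ rounds in the \local model using only local primal/dual updates. Second, round $x$ into an integral matching that still covers an $\Omega(1/\poly(r))$ fraction of the fractionally-covered vertices; this integral matching then serves as $M_i$.

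The rounding is the main technical step and drives both the $r^5$ and the $\log^{6+\log r}\Delta$ factors. It proceeds bit by bit: at step $k$ every surviving edge's $k$-th bit is set to either $0$ or $1$. I would derandomize the natural unbiased random choice via the method of conditional expectations applied to a pessimistic estimator $\Phi$ that (a) upper-bounds the probability that any vertex collects load above $1$ and (b) decomposes as a sum of local per-vertex contributions. To make a joint bit assignment that does not increase $\Phi$ in polylogarithmic time, despite the fact that each hyperedge couples $r$ vertices, I would recurse on the \emph{rank}: partition the edges so that conflicts within each part live on a rank-$\ceil{r/2}$ sub-hypergraph, derandomize each part recursively, and combine. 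The recursion has $O(\log r)$ levels, each costing roughly $\poly(r)\cdot\log\Delta$ rounds; this yields the $\log^{O(\log r)}\Delta$ factor, while the $O(\log\Delta)$ bit-iterations and $O(\log n)$ outer halving steps contribute the remaining $\log\Delta$ and $\log n$ factors.

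The hard part is the combined design of $\Phi$ and of the rank-halving partition so that: (i) each bit decision can be optimized locally, avoiding the $r^{\Theta(r)}$ blowup that a naive enumeration over conflict patterns at a vertex would incur; (ii) the estimator accumulates only a constant-factor loss per bit-step and per recursive level, so the final rounded matching still carries $\Omega(1/\poly(r))$ of the fractional mass and therefore shrinks the active-vertex set by the promised constant fraction; and (iii) the rank-halving partition itself can be built deterministically within the claimed complexity. Balancing these three coupled requirements is what produces the $r^5$ polynomial and the unusual $\log^{6+\log r}\Delta$ exponent.
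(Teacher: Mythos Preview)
Your proposal has the right high-level shape (fractional matching, then deterministic rounding, then iterate), but the rounding step---which is the entire technical content of the theorem---is not actually carried out, and the sketch you give does not match any mechanism that is known to work.

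Concretely, two things break. First, your outer loop claims that each iteration removes a \emph{constant} fraction of the active vertices, so $O(\log n)$ iterations suffice; but in item (ii) you only promise that the rounded matching retains an $\Omega(1/\poly(r))$ fraction of the fractional mass $\|x\|_1 = \Omega(N/r)$. That yields a matching of size $\Omega(N/\poly(r))$, which deactivates only an $O(1/\poly(r))$ fraction of the vertices, not a constant fraction. The paper handles this honestly: its approximation subroutine is only an $O(r^3)$-approximation of the \emph{maximum matching}, and it therefore runs $O(r^3\log n)$ outer iterations, not $O(\log n)$; this is where three of the five $r$-factors in the final bound come from.

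Second, and more seriously, the ``bit-by-bit conditional expectations with a rank-halving recursion'' is not a known technique and you do not define what a ``rank-halving partition'' is or why one should exist and be locally computable. In fact, the paper explicitly discusses that locally-balanced bit-splitting of hyperedges---which is what a conditional-expectations derandomization of unbiased random rounding amounts to---is the ``complete'' problem of Ghaffari, Kuhn, and Maus: solving it efficiently for general hypergraphs would immediately give polylogarithmic deterministic MIS. The paper's actual rounding avoids this barrier by a completely different recursion. It recurses on the \emph{rounding factor} $L$, not on the rank: to round by a factor $L$, it performs $\Theta(r)$ iterations, each of which removes the current partial output $\vec{y}$ from the input, recursively rounds the residual by $\sqrt{2L}$ twice (losing a factor $\Theta(r^2)$ in size), and adds the result back into $\vec{y}$. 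The crucial point is that the $\Theta(r)$ ``refilling'' iterations at each level ensure that the size loss across the \emph{entire} recursion is only a single factor of $\Theta(r)$, not $r^{\Theta(\log\Delta)}$. The base case uses defective edge-coloring, not conditional expectations. The $\log^{5+\log r}\Delta$ term arises because the recursion has $\approx \log\log L$ levels, each branching by $32r$, giving $(32r)^{\log\log L}$; your sketch produces neither this recursion nor this arithmetic.
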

This result has a number of other implications, as we overview in \Cref{subsec:Implications}. Besides those, it also supplies an alternative $\poly\log n$-round deterministic algorithm for maximal matching in graphs, where $r=2$. We remark that $\poly\log n$ deterministic algorithms for graph maximal matching have been known for about two decades, due to the breakthroughs of Ha\'n\'ckowiak, Karonski, and Panconesi\cite{HanckowiakKP98, hanckowiak1999faster}. Moreover, a faster algorithm was recently presented in \cite{FischerGhaffari2017Matching}. However, the methods of \cite{HanckowiakKP98, hanckowiak1999faster, FischerGhaffari2017Matching}, or their natural extensions, inherently rely on rank $r=2$ in a seemingly crucial manner, and they do not extend to hypergraphs of rank $3$ or higher.
The method we develop for \Cref{thm:HypergraphMatching} is quite different and significantly more flexible. We overview this method in \Cref{subsec:method}, and contrast it with the previously known techniques. The difference and the generality of our method becomes more discernible when considering another closely related open problem which did not seem solvable using the methods of \cite{HanckowiakKP98, hanckowiak1999faster} and which can now be solved using a natural, though non-trivial, extension of \Cref{thm:HypergraphMatching}, as we discuss next. 

\paragraph{Extension to MIS in Graphs with Bounded Neighborhood Independence} Consider the problem of computing an MIS in graphs with \emph{neighborhood independence} bounded by an integer $r$, i.e., where the number of mutually non-adjacent neighbors of each node is at most $r$. Notice that maximal matching in graphs is the same as MIS in the corresponding line graph, which is a graph of neighborhood independence $r=2$. It is not clear how to extend the methods of \cite{HanckowiakKP98, hanckowiak1999faster} to MIS in such graphs, even for $r=2$. As an open question alluding to this point, and as ``\emph{a good stepping stone towards the MIS problem in general graphs}", Barenboim and Elkin asked in their book\cite{barenboim2013monograph}: 

\begin{center}
\begin{minipage}{0.95\linewidth}
\begin{mdframed}[hidealllines=true, backgroundcolor=gray!00]
\textsc{Open Problem 11.5}\cite{barenboim2013monograph} $\;$ Devise or rule out a deterministic polylogarithmic algorithm for the MIS problem in graphs with neighborhood independence bounded by $2$. 
\vspace{-5pt}
\end{mdframed}
\end{minipage}
\end{center}
Our method for \Cref{thm:HypergraphMatching} generalizes to MIS in graphs with bounded neighborhood independence, as we state formally next, hence positively answering this open question. 

\begin{theorem}\label{thm:BIMIS} There is a deterministic distributed algorithm that computes a maximal independent set in $O(r^5 \log^{6+\log r} \Delta \cdot \log n)$ rounds, in any $n$-node graph with maximum degree $\Delta$ and neighborhood independence bounded by $r$.
\end{theorem}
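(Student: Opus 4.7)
The plan is to adapt the hypergraph maximal matching algorithm from \Cref{thm:HypergraphMatching} rather than invoke it as a black box, by recasting MIS on a graph $G$ with neighborhood independence at most $r$ as a matching problem on an associated rank-$O(r)$ hypergraph. The starting observation is that the bounded-neighborhood-independence hypothesis lets us summarize every closed neighborhood $N[v]$ by at most $r+1$ representatives: if $D_v$ is any maximal independent set of $G[N(v)]$, then $|D_v|\le r$ and every vertex of $N(v)$ is either in $D_v$ or $G$-adjacent to some vertex of $D_v$. Such a $D_v$ can be computed locally in $O(\Delta)$ time (or even in a single round if we allow arbitrary-size messages), and plays exactly the role of the "rank" in \Cref{thm:HypergraphMatching}.

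Concretely, I would build a hypergraph $H$ on the vertex set $V(G)$ with one hyperedge $h_v := \{v\}\cup D_v$ per $v\in V(G)$, so that $H$ has rank at most $r+1$, maximum degree $O(\Delta)$, and $n$ vertices. The algorithm of \Cref{thm:HypergraphMatching} is then adapted so that "selecting the hyperedge $h_v$ into the matching" is interpreted as "placing $v$ into the MIS". Throughout execution I would maintain two invariants: (i) the set of already-selected centers is $G$-independent, and (ii) every still-active vertex $u$ has $h_u$ vertex-disjoint from the currently selected hyperedges. At termination, invariant (ii) together with the domination property of $D_u$ forces every remaining vertex to have a $G$-neighbor already in the MIS, giving maximality.

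The key obstacle is enforcing invariant (i): hypergraph matching on $H$ by itself only guarantees vertex-disjointness of the selected hyperedges, and two $G$-adjacent centers $u,v$ with $u\notin D_v$ and $v\notin D_u$ could in principle both be chosen. I would resolve this by augmenting $H$ with the rank-$2$ hyperedges $\{u,v\}$ for every $\{u,v\}\in E(G)$ (which keeps the rank at $\max(r+1,2)$), or, equivalently, by modifying the inner conflict rule of the algorithm so that two "candidate centers" are in conflict whenever they are $G$-adjacent, not only when their hyperedges share a hypergraph vertex. The structural properties used by \Cref{thm:HypergraphMatching} — in particular, that shrinking or subsampling the instance preserves the rank bound — remain intact, because $G$-adjacency is itself a rank-$2$ constraint and the residual graph inherits the bounded-neighborhood-independence property.

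Finally, I would verify that this adaptation preserves the parameters up to constants: $|V(H)|=n$, the maximum degree of $H$ is $O(\Delta)$, and the effective rank is $O(r)$. Plugging these into the bound of \Cref{thm:HypergraphMatching} yields the claimed $O(r^5\log^{6+\log r}\Delta\cdot\log n)$ round complexity for MIS, which is \Cref{thm:BIMIS}. I expect the bulk of the technical work to be in the conflict-handling modification and in confirming that each progress step of the hypergraph-matching analysis still measures progress toward MIS coverage, with no extra factor lost in the rank or depth bookkeeping.
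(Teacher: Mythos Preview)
Your reduction has a genuine gap: a maximal matching in the hypergraph $H$ with hyperedges $h_v=\{v\}\cup D_v$ does \emph{not} yield a maximal independent set of $G$, even after you add the rank-$2$ edges of $G$. The problem is not independence of the selected centers (your invariant (i)), but maximality. Two non-adjacent vertices $u,v$ can have $h_u\cap h_v\neq\emptyset$ whenever they share a common neighbor $w\in D_u\cap D_v$; the matching then forbids selecting both, although both could legitimately enter an independent set. Concretely, take the star on $\{a,b,c,w\}$ with center $w$ (neighborhood independence $r=3$). Then $h_a=\{a,w\}$, $h_b=\{b,w\}$, $h_c=\{c,w\}$, $h_w=\{a,b,c,w\}$, and every hyperedge contains $w$, so $\{h_a\}$ is already a maximal matching. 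Your output is $S=\{a\}$, but $b$ and $c$ have no $G$-neighbor in $S$, so $S$ is not maximal. Your stated justification (``the domination property of $D_u$ forces every remaining vertex to have a $G$-neighbor already in the MIS'') is simply false: $h_u$ intersecting a selected $h_v$ at some $w\in D_u\cap D_v$ only tells you that $u$ and $v$ are both adjacent to $w$, not to each other. The same phenomenon wrecks the approximation step: in this example the maximum matching of $H$ has size $1$ while the maximum independent set of $G$ has size $3$, so iterating a matching approximation cannot make the kind of progress the analysis of \Cref{thm:HypergraphMatching} needs.

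The paper avoids this obstacle by not reducing to a hypergraph at all. Instead it introduces a tailored fractional relaxation called a \emph{greedy packing}: a node vector $\vec{x}$ for which some global order $\prec$ satisfies $x_v+\sum_{u\in N(v),\,u\prec v}x_u\le 1$ for every $v$. The bounded-neighborhood-independence hypothesis is used to show that any greedy packing has $\Sigma_{\vec{x}}(v)\le r$ for all $v$, which plays the role of the ``local sum $\le 1$'' constraint in fractional matchings. The entire rounding machinery (basic rounding via defective coloring, recursive rounding with refilling) is then redeveloped for greedy packings, yielding a $(32r^3)$-approximate independent set in $O(r^2\log^{6+\log r}\Delta)$ rounds, and the MIS follows by $O(r^3\log n)$ repetitions. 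The fractional object, not the combinatorial instance, is what gets generalized.
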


Moreover, since Luby's reduction of $(\Delta+1)$-vertex-coloring to MIS\cite{luby1986simple, linial1987LOCAL} increases the neighborhood independence by at most $1$, we also get efficient algorithms for $(\Delta+1)$-vertex-coloring in graphs with bounded neighborhood independence.
\begin{corollary}\label{thm:BIVC} There is a deterministic distributed algorithm that computes a $(\Delta+1)$-vertex-coloring in $O(r^5 \log^{6+\log (r+1)} \Delta \cdot \log n)$ rounds, in any $n$-node graph with maximum degree $\Delta$ and neighborhood independence bounded by $r$. Moreover, the same algorithm solves list-vertex-coloring, where each node $v\in V$ must get a color from an arbitrary given list $L_v$ of colors with $|L_v|\geq\deg(v)+1$.
\end{corollary}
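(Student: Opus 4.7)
The plan is to invoke Theorem \ref{thm:BIMIS} on an auxiliary graph produced by Luby's classical reduction from list-vertex-coloring to maximal independent set. Given $G=(V,E)$ with maximum degree $\Delta$ and neighborhood independence at most $r$, together with lists $L_v$ satisfying $|L_v|\ge\deg(v)+1$, first truncate each $L_v$ to exactly $\deg(v)+1$ colors (the case $|L_v|=\Delta+1$ for all $v$ recovers ordinary $(\Delta+1)$-vertex-coloring). Then build the graph $H$ whose vertex set is $\{(v,c):v\in V,\,c\in L_v\}$, placing a clique on the copies of each $v$ and an edge between $(u,c)$ and $(v,c)$ whenever $uv\in E$ and $c\in L_u\cap L_v$. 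A standard argument shows that any maximal independent set $M$ of $H$ selects exactly one copy per $v$: the clique on copies of $v$ rules out two, and if none of $v$'s copies were selected, then each color in $L_v$ would be blocked by some $G$-neighbor of $v$, which is impossible since $|L_v|>\deg(v)$. Assigning $v$ the color $c$ for which $(v,c)\in M$ therefore produces a valid list-coloring.

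Next I would verify the parameters of $H$. The number of vertices is at most $(\Delta+1)n$, and each $(v,c)$ has at most $|L_v|-1\le\Delta$ clique-neighbors plus at most $\deg(v)\le\Delta$ external neighbors, so $H$ has maximum degree at most $2\Delta$. The main check, and the only non-routine step, is that $H$ has neighborhood independence at most $r+1$. The neighborhood $N_H((v,c))$ splits into the clique $\{(v,c'):c'\in L_v\setminus\{c\}\}$ and the external set $E_{v,c}=\{(u,c):uv\in E,\,c\in L_u\}$. Any independent set inside $N_H((v,c))$ contains at most one clique element; its intersection with $E_{v,c}$ projects injectively to $N_G(v)$, and the image is independent in $G$ (since $(u,c)$ and $(u',c)$ are adjacent in $H$ iff $uu'\in E$), hence has size at most $r$. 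Together this gives the bound $r+1$.

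Finally, applying Theorem \ref{thm:BIMIS} to $H$ produces an MIS of $H$ in
\[
O\bigl((r+1)^5\log^{6+\log(r+1)}(2\Delta)\cdot\log((\Delta+1)n)\bigr)=O\bigl(r^5\log^{6+\log(r+1)}\Delta\cdot\log n\bigr)
\]
rounds. Each node $v\in V$ simulates all its copies $\{(v,c):c\in L_v\}$ locally, and a single round of communication in $H$ on an edge $((u,c),(v,c))$ is realized by one round on the edge $uv\in E$ of $G$ (piggybacking the color $c$ onto the message), so the simulation introduces no asymptotic overhead in the \local model. After $M$ has been computed, each $v$ outputs the unique color $c$ with $(v,c)\in M$, yielding the claimed $(\Delta+1)$-vertex-coloring and list-vertex-coloring algorithms. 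The neighborhood-independence calculation is the only delicate point; everything else is routine bookkeeping on top of Theorem \ref{thm:BIMIS}.
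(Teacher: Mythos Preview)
Your proposal is correct and follows exactly the approach the paper indicates: apply Luby's reduction from list-vertex-coloring to MIS, observe that the auxiliary graph has neighborhood independence at most $r+1$, and invoke \Cref{thm:BIMIS}. In fact you supply considerably more detail than the paper, which treats the corollary as a one-line remark (``Luby's reduction \dots\ increases the neighborhood independence by at most $1$''); your verification of the $r+1$ bound via the clique/external split of $N_H((v,c))$ is precisely the argument that justifies that sentence.
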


\subsection{Other Implications}
\label{subsec:Implications}
\vspace{-5pt}
Our hypergraph maximal matching algorithm enables us to obtain answers and improvements for some other problems. A family of improvements comes for graph problems in which the main technical challenge is to find a maximal set of ``disjoint" augmenting paths of short length $\ell$. These problems can be phrased as maximal matching in hypergraphs with rank $r=\Theta(\ell)$, essentially by viewing each augmenting path as one hyperedge on its elements (depending on the required disjointness). We next mention the results that we obtain based on this connection.

\paragraph{Maximum Matching Approximation} By integrating our hypergraph maximal matching into the framework of Hopcroft and Karp\cite{HopcroftKarp1973}, we can compute a $(1+\eps)$-approximation of maximum matching in graphs in $\big((\log \Delta/\eps)^{O(\log 1/\eps)}\big)$ rounds. For that, we mainly need to find maximal sets of vertex-disjoint augmenting paths of length at most $\ell=O(1/\eps)$. This is faster than the previously best known deterministic algorithm for $(1+\eps)$-approximation, which required $\log^{O(1/\eps)}n$ rounds\cite{czygrinow2003distributed}. 
We remark that an $O(\log n/\eps^3)$-round randomized $(1+\eps)$-approximation algorithm was presented by Lotker et al.\cite{lotker2008improved}, mainly by computing this maximal set of vertex-disjoint augmenting paths using Luby's randomized MIS algorithm \cite{luby1986simple}.

\paragraph{Low-Out-Degree Orientation and (Pseudo-)Forest Decomposition} By integrating our hypergraph maximal matching into the low-out-degree orientation framework of Ghaffari and Su\cite{GS17}, we can compute orientations with out-degree at most $\lceil (1+\eps)\lambda\rceil$, for any $0<\eps<1$, in graphs with arboricity $\lambda$. For that, we mainly need to find maximal sets of disjoint augmenting paths of length $\ell=O(\log n/\eps)$. This low-out-degree orientation directly implies a decomposition into $\lceil (1+\eps)\lambda\rceil$ edge-disjoint pseudo-forests. For constant $\eps$ and even $\eps=\Omega(1/\poly\log n)$, the round complexity of the resulting algorithm is quasi-polylogarithmic--- that is, $2^{O(\log^2 \log n)}$. Although this is not a polylogarithmic complexity, it gets close and it is almost exponentially faster than the previously best known $2^{O(\sqrt{\log n})}$ deterministic algorithm\cite{GS17, panconesi1992improved}. This improvement can be viewed as partial solution for Open Problem 11.10 of Barenboim and Elkin\cite{barenboim2013monograph}, which asks for an efficient deterministic distributed algorithm for decomposing the graph into less than $2\lambda$ forests.

\subsection{Our Method for Hypergraph Maximal Matching, in a Nutshell}\label{subsec:method}
\vspace{-5pt}
The main ingredient in our results is our hypergraph maximal matching algorithm. Here, we present a brief overview of this algorithm. 

Before the overview, we note that the key technical novelty in our hypergraph maximal matching algorithm is developing an efficient deterministic distributed \emph{rounding} method, which transforms \emph{fractional} hypergraph matchings to \emph{integral} hypergraph matchings. This becomes more instructive when viewed in the context of the recent results of Ghaffari, Kuhn, and Maus\cite{ghaffari2016complexity}, which show that deterministically rounding fractional solutions of certain linear programs to integral solutions while approximately preserving some linear constraints is the \emph{``the only obstacle"} for efficient deterministic distributed algorithms. In other words, if we find an efficient deterministic method for approximately rounding certain linear programs, we would get efficient algorithms for essentially all the classic local graph problems, including MIS. See \cite{ghaffari2016complexity} for the precise statement. 

Our rounding for hypergraph matchings can be seen as a drastic generalization of the rounding methods we presented recently for matching in normal graphs\cite{FischerGhaffari2017Matching}. The methods of \cite{FischerGhaffari2017Matching} do not extend to hypergraphs (even for rank $r=3$), for reasons that we will discuss soon. The new rounding method we present is more general and significantly more flexible. As such, we are hopeful that this deterministic rounding method will prove useful for a wider range of problems, and may potentially serve as a stepping stone towards a $\poly\log n$-time deterministic MIS algorithm. 

We next present a high-level overview of our hypergraph maximal matching algorithm.   

\paragraph{Matching Approximation} The core of our maximal matching algorithm is an algorithm that computes a matching whose size is within an $O(r^3)$-factor of the maximum matching. Once given such an approximation algorithm, we can easily find a maximal matching within $O(r^3\log n)$ iterations, by repeated applications of this approximation algorithm, each time adding the found matching to the output matching, and then removing the found matching and its incident edges from the hypergraph.
 
\paragraph{Fractional Matchings} In approximating the maximum matching, the main challenge is finding an \emph{integral} matching with such an approximation guarantee. Finding a \emph{fractional} matching --- where each edge $e$ has a value $x_e \in [0,1]$ such that for each vertex $v$ we have $\sum_{e\in E(v)} x_{e} \leq 1$ --- with such an approximation is trivial, and can be done in $O(\log \Delta)$ rounds: initially, set $x_e=1/\Delta$ for all edges $e$. Then, for $\log \Delta$ iterations, each time double the values $x_e$ of all the edges $e$ for which all vertices $v\in e$ have $\sum_{v\in E(v)} \leq 1/2$. One can see that this produces a $(2r)$-approximation. The challenge thus is in \emph{rounding} fractional matchings to integral matchings, without losing much in the size. 

\paragraph{Known Rounding Methods for Graphs, and Their Shortcomings} In graphs with rank $r=2$, this rounding can be done essentially with no loss. Indeed, this is the core part of the recent maximal matching algorithm of Fischer and Ghaffari\cite{FischerGhaffari2017Matching}, which finds a maximal matching in $O(\log^2 \Delta \log n)$ rounds. The method of \cite{FischerGhaffari2017Matching} rounds any fractional matching in graphs in $O(\log \Delta)$ iterations, in each iteration moving a $2$-factor closer to integrality while decreasing the matching size only negligibly, by a $(1-\frac{\eps}{\Theta(\log \Delta)})$-factor. Hence, even after all the rounding iterations, the overall loss is a negligible $\eps$-factor, for a desirably small $\eps>0$. Although the algorithms of \cite{HanckowiakKP98, hanckowiak1999faster} are not explicitly phrased in this rounding framework, one can see that the principle behind them is the same. The reader familiar with \cite{HanckowiakKP98, hanckowiak1999faster} might recall that the key component is, roughly speaking, to decompose edges of any regular graph into two groups, say red and blue, so that \emph{almost all} nodes see a fair split of their edges into the two colors. This is a special case of rounding for regular graphs.

This whole methodology of rounding without more than a $o(1)$-factor loss in the size seems to be quite limited, and it certainly gets stuck at rank $r=2$. For the interested reader, we briefly sketch the obstacle: all of those matching rounding methods\cite{FischerGhaffari2017Matching, HanckowiakKP98, hanckowiak1999faster} decompose the edges of the graph into bipartite low-diameter degree-$2$ graphs (i.e., short even-length cycles) --- aside from a smaller portion of some not-so-nice parts, which are handled separately --- and then $2$-color edges of each short cycle so that each node has half of its edges in each color. Then, in rounding, one color is raised by a $2$-factor while the other is dropped to zero. Unfortunately, this type of locally-balanced splittings of edges does not seem within reach for hypergraphs, as of now. Indeed, if we could solve that, we would get far more consequential results: Ghaffari et al.\cite{ghaffari2016complexity} recently proved that this splitting problem for hypergraphs is \emph{`complete'}, meaning that if one can do such a splitting in polylogarithmic time for all hypergraphs, we get polylogarithmic deterministic algorithms for all the classic local problems, including MIS. 

\paragraph{Challenges in Rounding for Hypergraphs} When trying to deterministically round fractional matchings in hypergraphs, we face essentially two challenges: (1) It is not clear how to efficiently perform any slight rounding--- e.g., rounding all fractional values so that the minimum moves from at least $1/d$ to at least $2/d$ without violating the constraints --- without a considerable loss in the matching size. (2) An even more crucial issue comes from the need to do many levels of rounding. Even once we have an efficient solution for a single iteration of rounding, which moves say a constant factor closer to integrality, a $\Theta(r)$-factor reduction of the matching size seems inevitable. However, if we do this repeatedly, and our matching size drops by a $\Theta(r)$-factor in each rounding iteration, the matching size would become too small. Notice that we need about $O(\log \Delta)$ levels of $2$-factor roundings. If we decrease by an $\Omega(r)$-factor per iteration, we would be left with a matching of size a factor $1/r^{\Theta(\log \Delta)} = 1/\poly(\Delta)$ of the maximum matching, which is essentially useless. We next mention a bird's eye view of our rounding method.    

\paragraph{Our Rounding Method for Matchings in Hypergraphs} We devise a rounding procedure for hypergraph matchings which rounds the fractional matching by an $L$-factor--- i.e., raising fractional values by an $L$-factor---while reducing the matching size only by a $\Theta(r)$-factor. On a high level, this rounding is by recursion on $L$. The base level of the recursion is an algorithm that rounds the fractional matching by a constant factor, for $L=O(1)$, with only a $\Theta(r)$-factor decrease of the matching size. This part is somewhat simpler and is performed efficiently using defective coloring results of \cite{kuhn2009weak}. This is a solution for the first challenge above. To overcome the second challenge, our method interleaves some iterations of \emph{rounding} with \emph{refilling} the fractional matching. In particular, suppose that we would like to do an $L$-factor rounding of a given fractional matching $\vec{x}$, thus producing an output fractional matching $\vec{y}$ with fractional values raised by an $L$-factor compared to $\vec{x}$. We do this in $\Theta(r)$ iterations, using a number of $\sqrt{2L}$-factor rounding procedures. Concretely, per iteration, we first `\emph{remove}' the current output fractional matching $\vec{y}$ from the input fractional matching $\vec{x}$, in a sense to be made precise, and then we apply two successive $(\sqrt{2L})$-factor rounding operations on the left-over fractional matching. This creates a fractional matching which is rounded by a factor of $2L$, but may be a $(1/\Theta(r^2))$-factor smaller than $\vec{x}$. We add (a half of) this to the current fractional matching $\vec{y}$, in a sense to be made precise. The removal and also the addition are done carefully, so as to ensure that the size of the output fractional matching grows by about a $(1/\Theta(r^2))$-factor of the size of $\vec{x}$ while the fractionality is by an $L$-factor better than the one of $\vec{x}$. After $\Theta(r)$ such iterations, we get that the output fractional matching is a $\Theta(r)$-approximation of the input. 

\paragraph{Extension to MIS in Graphs with Bounded Neighborhood Independence} When moving from matchings in hypergraphs to independent sets in graphs of neighborhood independence at most $r$, it is not directly clear how to define a fractional solution of an MIS in such graphs. Note that the integrality gap of the natural LP relaxation might be linear in $\Delta$. However, any MIS is within an $r$-factor of a maximum independent set, and this can in fact be generalized to maximal fractional solutions of the following kind. We start by setting the fractional values of all nodes to $0$ and then, we iteratively increment the value of some nodes. As long as right after incrementing the value of a node $v$ the total value in the $1$-neighborhood of $v$ does not exceed $1$, the total value of the resulting fractional solution is guaranteed to be within an $r$-factor of a maximum independent set. We call such a fractional solution a \emph{greedy packing} and show that our rounding scheme for hypergraph matching can be adapted to greedy packings of graphs of bounded neighborhood independence. 

Integral greedy packings are exactly independent sets. Thus, integral greedy packings of the line graph of a hypergraph $H$ correspond to matchings of $H$. However, we note that a fractional greedy packing of the line graph of $H$ is not the same as a fractional matching of $H$. We believe that this stresses the robustness of our approach. For example, when running the MIS algorithm for graphs of bounded neighborhood independence on the line graph of a bounded rank hypergraph $H$, we get a slightly different but equally efficient algorithm for computing a maximal matching of $H$.

\vspace{-3pt}
\section{Maximal Matching and Edge-Coloring in Hypergraphs}
\label{sec:hypergraphmatching}
\vspace{-4pt}

In this section, we present our hypergraph maximal matching algorithm, thus proving \Cref{thm:HypergraphMatching}. Then, at the end \Cref{subsec:EdgeColoring}, we use this hypergraph maximal matching algorithm to prove our edge-coloring results, including \Cref{thm:edge-coloring} and \Cref{thm:Randedge-coloring}.

For our hypergraph maximal matching algorithm, the key part is a \emph{matching approximation} procedure that finds a matching whose size is at least a $(1/(32r^3))$-factor of the maximum matching.

\begin{lemma}\label{lemma:MaxMatchingApprox}
There is a deterministic distributed algorithm that computes a $(32r^3)$-approximate matching in $O\paren{r^2 \log^{6 + \log r} \Delta }$ rounds, given an $O(r^2\Delta^2)$-edge-coloring. 
\end{lemma}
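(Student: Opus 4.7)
The plan is to separate the problem into two stages: first produce a good \emph{fractional} matching, then deterministically \emph{round} it to an integral one while losing at most an additional $O(r^2)$ factor. For the fractional stage I follow the scheme sketched in the overview: initialize $x_e = 1/\Delta$ on every hyperedge and, for $O(\log\Delta)$ parallel iterations, double $x_e$ whenever every vertex $v\in e$ still has slack $\sum_{f\ni v} x_f \le 1/2$. When this terminates, every hyperedge has at least one incident vertex of load $\ge 1/2$, so the total weight is $\ge |M^\ast|/(2r)$, where $M^\ast$ is a maximum matching.

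For the rounding stage I would build a routine $\mathrm{Round}_L(\vec x)$ that takes a fractional matching $\vec x$ whose positive values lie in $[1/d,1]$ and returns one whose positive values lie in $[L/d,1]$, losing only a $\Theta(r)$ factor in total weight. The base case $L = O(1)$ I would handle in $\poly\log\Delta$ rounds by combining the given $O(r^2\Delta^2)$-edge-coloring with the defective coloring tool of Kuhn~\cite{kuhn2009weak}: the coloring partitions the hyperedges into a few classes inside which one can decide locally which fractional values to raise by a constant factor and which to drop to zero, while keeping all vertex loads feasible, at the price of a $\Theta(r)$-factor shrinkage in weight.

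The recursive case is the crux and addresses the second challenge highlighted in \Cref{subsec:method}: a naive chaining of $O(\log\Delta)$ doublings would compound the loss to $r^{\Theta(\log\Delta)}$. Instead I would implement $\mathrm{Round}_L$ via the refill loop from the overview. Maintain an output fractional matching $\vec y$, initially zero; in each of $\Theta(r)$ outer iterations, subtract $\vec y$ from $\vec x$, invoke $\mathrm{Round}_{\sqrt{2L}}$ twice in succession on the residual, and merge half of the result back into $\vec y$. A careful amortization shows that each iteration adds weight $\Omega(|\vec x|/r^2)$ to $\vec y$ while maintaining feasibility, so after $\Theta(r)$ iterations $\vec y$ has granularity improved by $L$ and total weight $\Omega(|\vec x|/r)$. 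The resulting recurrence
\[
T(L) \;=\; \Theta(r)\cdot T(\sqrt{2L}) + \poly\log\Delta
\]
bottoms out after $\log\log\Delta$ levels at $r^{\log\log\Delta}\cdot\poly\log\Delta = O(r^2\log^{6+\log r}\Delta)$ rounds. Applying $\mathrm{Round}_{\Delta}$ to the fractional matching produced in the first stage then yields an integral matching of size $\ge |M^\ast|/O(r^3)$, matching the $32r^3$ bound after tuning constants.

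The main obstacle I anticipate is calibrating the recursive refill loop: simultaneously (i) maintaining feasibility of $\vec y$ under hypergraph additions, (ii) proving that each outer iteration genuinely grows $\vec y$ by $\Omega(|\vec x|/r^2)$ rather than cancelling with the removed residual, and (iii) squeezing two nested recursive calls and $\Theta(r)$ outer iterations per level into the $r^{\log\log\Delta}\cdot\poly\log\Delta$ budget. This balancing act is what makes an $O(r^3)$ approximation achievable in quasi-polylogarithmic time, and it is where the novelty over the graph-only rounding methods of \cite{FischerGhaffari2017Matching,HanckowiakKP98,hanckowiak1999faster} concentrates.
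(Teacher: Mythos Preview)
Your proposal is correct and follows essentially the same approach as the paper: greedy $(2r)$-approximate fractional matching, a defective-coloring base case, and the recursive refill loop with $\Theta(r)$ outer iterations each invoking $\mathrm{Round}_{\sqrt{2L}}$ twice. Two small technicalities the paper handles that you elide: the ``subtract $\vec y$ from $\vec x$'' step is realized by zeroing out every edge incident to a $\vec y$-half-tight vertex (not literal subtraction), and the recursion carries the constraint $L\log^2 L\le d$ (since the inner calls produce a $(2L/d)$-fractional intermediate), which prevents invoking $\mathrm{Round}_\Delta$ directly---the paper instead rounds recursively with $L=\Delta/\log^2\Delta$ and finishes with a single basic-rounding step at $L=\log^2\Delta$.
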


Once we have this approximation algorithm, we can find a maximal matching by iteratively applying this matching approximation procedure to the remainder hypergraph, for $O(r^3\log n)$ iterations, each time removing the found matching and all its incident hyperedges. This is formalized in the proof of \Cref{thm:HypergraphMatching} in \Cref{sec:wrap-up}.

Over the next two subsections, we discuss the matching approximation procedure of \Cref{lemma:MaxMatchingApprox}. We note that finding a \emph{fractional} matching with size close to the maximum matching is straightforward, as we soon overview in \Cref{sec:fractionalApprox}. The challenge is in finding an integral matching with the same guarantee. In other words, the core technical component of our method is an algorithm for \emph{rounding} fractional hypergraph matchings to integral matchings, without losing much in the size. In particular, we present our deterministic rounding technique for hypergraph matchings in \Cref{sec:rounding}.

\subsection{Fractional Matching Approximation}\label{sec:fractionalApprox}

In the following, we present a simple $O(\log \Delta)$-round algorithm that computes a $(2r)$-approximate fractional matching.

\paragraph{Some Notions and Terminology for Fractional Matchings} Given a hypergraph $H=(V, E)$, a fractional matching of $H$ is an assignment of values $\vec{x}\in [0,1]^{|E|}$  to edges such that for each vertex $v\in V$, we have $\sum_{e\in E(v)} x_e \leq 1$. Here, $E(v):=\{e \in E\colon v \in e\}$ is the set of edges incident to $v$. We say a vertex $v$ is \emph{half-tight} in the given fractional matching $\vec{x}$ if $\sum_{e\in E(v)} x_e \geq \frac{1}{2}$. Moreover, we say $\vec{x}$ is a $(1/d)$-fractional matching if each edge $e \in E$ has $x_e\geq 1/d$ or $x_e=0$.

\paragraph{Greedy Fractional Matching Algorithm}
Initially, we set $x_e=\frac{1}{\Delta}$ for all edges $e$. This obviously is a valid fractional matching. Then, for $\log \Delta$ iterations, in each iteration, we freeze all the edges that have at least one half-tight vertex and then raise the value of all unfrozen edges by a 2-factor.

This way, we always keep a valid fractional matching, since only the values of edges incident to non-half-tight vertices are increased. Moreover, within $O(\log \Delta)$ iterations all edges will be frozen. We next show that this property already implies an approximation ratio $2r$. 

\begin{lemma}\label{lemma:halftight-2rapprox}The greedy algorithm described above computes a $(2r)$-approximate fractional matching. Moreover, any (fractional) matching $\vec{x}$ with the property that each edge has at least one half-tight endpoint is a $(2r)$-approximation.
\end{lemma}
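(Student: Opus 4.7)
My plan is to prove only the second, more general statement, since the greedy algorithm produces a fractional matching in which every edge gets frozen (and freezing happens exactly when at least one endpoint becomes half-tight), so the first claim follows from the second.

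Let $\vec{x}$ be a fractional matching with the property that every edge $e \in E$ has at least one half-tight endpoint, let $T \subseteq V$ denote the set of half-tight vertices, and let $M^*$ be a maximum matching of $H$. I will upper bound $|M^*|$ in two steps: first by $|T|$, and then $|T|$ in turn by $2r\cdot|\vec{x}|$, where $|\vec{x}| := \sum_{e \in E} x_e$.

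For the first inequality, observe that every edge $e \in M^*$ contains at least one vertex of $T$ by hypothesis, and since $M^*$ is a matching the vertices of $T$ used by distinct edges of $M^*$ must be distinct. Hence $|M^*| \leq |T|$. For the second inequality, double count the quantity $S := \sum_{v \in T} \sum_{e \in E(v)} x_e$. Summing by vertices, the inner sum is at least $1/2$ by definition of $T$, so $S \geq |T|/2$. Summing by edges, each $x_e$ is counted exactly $|e \cap T|$ times, and since the hypergraph has rank $r$ this is at most $r$; hence $S = \sum_{e \in E} x_e \cdot |e \cap T| \leq r \cdot |\vec{x}|$. Combining, $|T|/2 \leq r\,|\vec{x}|$, and therefore
\[
|M^*| \;\leq\; |T| \;\leq\; 2r \cdot |\vec{x}|,
\]
which is exactly the $(2r)$-approximation claim.

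The argument is essentially a one-line LP duality / averaging argument, so I do not expect any genuine obstacle; the only point requiring a tiny bit of care is noticing that $|M^*| \le |T|$ uses both the matching property of $M^*$ (so vertices counted for different edges are distinct) and the half-tightness hypothesis (so each edge of $M^*$ contributes at least one element to $T$). For the first sentence of the lemma, I just need to remark that after $\log \Delta$ doubling rounds every edge is frozen, and an edge was frozen only because one of its endpoints became half-tight; hence the greedy output satisfies the hypothesis of the general statement and is a $(2r)$-approximation.
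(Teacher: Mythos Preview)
Your proof is correct. Both steps are fine: the injection from $M^*$ into $T$ uses exactly the two facts you name, and the double count of $S=\sum_{v\in T}\sum_{e\in E(v)} x_e$ gives $|T|\le 2r\,|\vec{x}|$ cleanly.

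The paper argues the same inequality but via a slightly different mechanism: instead of passing through the intermediate quantity $|T|$, it gives one dollar to each edge of $M^*$, routes that dollar to one of its half-tight endpoints $v$, and then has $v$ redistribute the dollar among its incident edges $e'$ \emph{proportionally to} $x_{e'}$; since $v$ is half-tight and lies in at most one $M^*$-edge, each $e'$ receives at most $2x_{e'}$ from $v$, and at most $2r x_{e'}$ in total over its $\le r$ endpoints. Your version replaces this proportional redistribution by a bare cardinality bound $|M^*|\le |T|$ followed by an averaging bound on $|T|$, which is a bit more elementary here. The trade-off is that the paper's blaming scheme generalizes directly when the ``source'' is itself fractional (this exact pattern is reused in \Cref{lemma:E_y-rounding} and \Cref{lemma:y-to-x} to compare two fractional matchings), whereas your first step $|M^*|\le |T|$ relies on $M^*$ being integral.
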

\begin{proof}
We show that $\vec{x}$ must have size at least a $(1/(2r))$-factor of a maximum matching $M^*$ employing an argument based on counting in two ways. To that end, we give $1$ dollar to each edge $e \in M^*$ and ask it to redistribute this money among edges in such a way that no edge $e'$ receives more than $2rx_{e'}$ dollars. 
This can be achieved as follows. 
Each edge $e \in M^*$ asks a half-tight vertex, say $v\in e$, to distribute $e$'s dollar on $e$'s behalf. Vertex $v$ does so by splitting this money among its incident edges $e'\in E(v)$ proportionally to the edge values $x_{e'}$. In this way, every edge $e' \in E(v)$ receives no more than $2x_{e'}$ dollars from $v$. This is because $v$ is half-tight and because it cannot have more than one incident edge in $M^*$, hence does not receive more than $1$ dollar. Since an edge can receive money only from its vertices, every edge $e'$ receives at most $2r x_{e'}$ dollars in total.  
\end{proof}

\subsection{Rounding Fractional Matchings in Hypergraphs}\label{sec:rounding}
Our method for rounding fractional matchings is recursive, and parametrized mainly by a parameter $L$ which captures the extent of the performed rounding. In simple words, given a fractional matching $\vec{x}\in [0,1]^{|E|}$, the method $\round(\vec{x}, L)$ rounds $\vec{x}$ by an $L$-factor. That is, if in the input fractional matching $\vec{x}$ the smallest (non-zero) value is $1/d$, then in the output fractional matching the smallest (non-zero) value is at least $L/d$. On the other hand, the guarantee is that the output fractional matching has size at least a $(1/(4r))$-factor of the input fractional matching. The functionality of this rounding method is abstracted by the following definition.

\begin{definition}[$L$-factor rounding] 
Given a $(1/d)$-fractional matching $\vec{x} \in [0, 1]^{|E|}$ --- i.e. where for each $e\in E$, we have $x_{e}\geq 1/d$ or $x_e=0$ --- the method $\round(\vec{x}, L)$ computes an $(L/d)$-fractional matching $\vec{y} \in [0, 1]^{|E|}$ such that $\sum_{e\in E} y_{e} \geq \frac{1}{4r} \sum_{e\in E} x_{e}$. 
\end{definition}

\begin{remark}\label{rmrk:fractionality} The method requires some condition on the values of $L$ and $d$. Since $L/d$ refers to the fractionality, the statement is meaningful only when $L \leq d$. Due to some small technicalities, we will perform the recursive parts of rounding only for values of $L$ that satisfy a slightly stronger condition of $L \log^2 L \leq d$. For the remaining cases, we resort to our basic rounding. 
\end{remark}

We explain our rounding method in two main parts. The first part, explained in \Cref{sec:basicRounding}, is a procedure that we use as the base case, to round the matching by a constant factor $L=O(1)$ in $O(r^2 + \log \Delta)$ rounds. The second part, discussed in \Cref{sec:recursiveRounding}, is the recursive step which explains how our $L$-factor rounding works by making a few calls to $\sqrt{2L}$-factor rounding procedures, and a few smaller steps. Finally, in \Cref{sec:wrap-up}, we combine these rounding procedures with the previously seen algorithm of \Cref{sec:fractionalApprox} for fractional matchings to obtain our matching approximation procedure of \Cref{lemma:MaxMatchingApprox}. 

\subsubsection{Basic Rounding}\label{sec:basicRounding}
In this subsection, we explain our base case rounding procedure for small rounding parameters, i.e., $L=O(1)$. Throughout, we will assume that the base hypergraph already has an $O(r^2 \Delta^2)$-edge-coloring, which can be computed easily using Linial's algorithm \cite{linial1987LOCAL}, in $O(\log^* n)$ rounds.

\begin{lemma}[Basic Rounding]\label{lemma:hypergraphbasicRounding} 
There is an $O(L^2r^2 + \log \Delta)$-round deterministic distributed algorithm that turns a $(1/d)$-fractional matching $\vec{x}$ into an $(L/d)$-fractional matching $\vec{y}$ with $\sum_{e\in E} y_e \geq \frac{1}{2r}\sum_{e\in E} x_e$, for any $L\leq d$. 
\end{lemma}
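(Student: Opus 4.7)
My plan is to carry out the basic rounding via a bucketing of $\vec{x}$ by fractional value, combined with a distributed defective edge-coloring subroutine enabled by the provided $O(r^2\Delta^2)$-edge-coloring. Specifically, I would split the edges into dyadic buckets $B_j = \{e : x_e \in [2^j/d, 2^{j+1}/d)\}$ for $j = 0,\ldots,\lfloor \log d\rfloor$. Edges in a bucket with $2^j \geq L$ already satisfy $x_e \geq L/d$ and hence the target fractionality; I keep these untouched with $y_e = x_e$, and they contribute their full weight to $\vec{y}$. The rest of the work concerns only the $O(\log L)$ ``small'' buckets.

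Within each small bucket $B_j$, every vertex of $H$ has at most $d/2^j$ incident edges, so the conflict graph of $B_j$ (vertices are the hyperedges of $B_j$, with two adjacent if they share a vertex of $H$) has maximum degree $O(rd/2^j)$. Using the assumed $O(r^2\Delta^2)$-edge-coloring as an initial proper vertex coloring of this conflict graph, I would invoke the defective-coloring algorithm of~\cite{kuhn2009weak} to partition $B_j$ in $O(L^2 r^2)$ rounds into $\Theta(L)$ color classes such that each vertex of $H$ sees at most $O(d/(2^j L))$ edges of $B_j$ per class. Consequently, setting $y_e = L x_e$ for the edges of a single color class contributes only $O(1)$ scaled weight per vertex from this bucket.

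I then select one color class per small bucket and define $\vec{y}$ as $L x_e$ on the union of the selected edges (together with the retained big edges). Feasibility follows from the defect guarantee, with constants in the color count tuned so that the per-vertex scaled weight, summed over all small buckets, stays within $1 - \sum_{\text{big}\ni v} x_e$. For the approximation ratio, I would argue by a two-way counting analogous to \Cref{lemma:halftight-2rapprox}: the big edges already contribute $\sum_{\text{big}} x_e$ to $\sum y_e$, and for the small edges each chosen color class captures a $\Theta(1/r)$ fraction of the bucket's weight, because each hyperedge has $r$ vertex-endpoints so an edge in $\vec{x}$ can always be charged to a ``tight'' vertex of $\vec{y}$ that absorbs at most $2r$ units.

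The main obstacle I anticipate is preventing a $\log L$-factor loss in the approximation ratio from the per-bucket color selection. A naive pigeonhole on $\Theta(L)$ color classes combined over $O(\log L)$ buckets gives only a $1/(2r \log L)$ ratio. I expect this to be resolved either by performing a single weighted defective coloring across all small edges simultaneously (which sidesteps the bucketing penalty when all small $x_e < 1/L$, a regime that is in force whenever $L^2 \leq d$), or by distributively identifying the genuinely best color per bucket rather than an arbitrary one; the additive $O(\log \Delta)$ term in the round complexity matches both such an aggregation and the sequential handling of the $O(\log d)$ buckets, while the $O(L^2 r^2)$ term is consistent with the cost of Kuhn's defective-coloring procedure at this many colors and rank.
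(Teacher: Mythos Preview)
Your proposal has a genuine gap in the approximation argument, and the approach is substantially more complicated than what is actually needed.

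The core problem is your claim that ``each chosen color class captures a $\Theta(1/r)$ fraction of the bucket's weight.'' This is not justified: an arbitrary (or even the best) color class out of $\Theta(L)$ classes captures only a $\Theta(1/L)$ fraction of the bucket's weight by pigeonhole, not $\Theta(1/r)$. The charging argument you invoke from \Cref{lemma:halftight-2rapprox} requires that \emph{every} edge of $\vec{x}$ (or at least of the bucket) have a half-tight endpoint in $\vec{y}$; but the edges in the $\Theta(L)-1$ discarded color classes of a bucket have no reason to be near a half-tight vertex of $\vec{y}$, so there is nothing to charge them to. You correctly flag the resulting $\log L$ loss as an obstacle, but neither of your proposed fixes resolves it: a ``single weighted defective coloring across all small edges'' still faces the same selection problem, and ``identifying the genuinely best color'' still only yields a $1/L$ fraction per bucket.

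The paper's proof sidesteps all of this by not trying to preserve or scale the values $x_e$ at all. It uses only the \emph{support} $E_x = \{e : x_e > 0\}$, notes that $H_x$ has maximum degree at most $d$, and builds $\vec{y}$ from scratch: compute a single $(d/(2L))$-defective edge-coloring of $H_x$ with $O(L^2r^2)$ colors, then iterate through \emph{all} colors, setting $y_e = L/d$ for each unfrozen edge of the current color and freezing edges once an endpoint becomes half-tight. A final $O(\log(d/L))$ rounds of doubling ensures every edge of $E_x$ has a half-tight endpoint in $\vec{y}$. Now the charging argument applies cleanly to \emph{every} $e\in E_x$, giving the $1/(2r)$ bound directly, with no bucketing and no $\log L$ loss. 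The key idea you are missing is that the target $\vec{y}$ need not be a scaled sub-vector of $\vec{x}$; it suffices that it be supported on $E_x$ and make every edge of $E_x$ half-tight.
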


\subsubsection*{Algorithm Outline and Intuitive Discussions}
Let $E_{x}$ be the set of all edges $e$ for which $x_{e} > 0$, and let $H_{x}=(V, E_{x})$ be the subgraph of $H$ with this edge set. Notice that $H_{x}$ has degree at most $d$, because $\vec{x}$ is a $(1/d)$-fractional matching. Our goal is to compute a fractional matching $\vec{y}$, supported on the edge set $E_x$, such that for each edge $e \in E_x$, at least one of its endpoints $v\in e$ is half-tight in $\vec{y}$, meaning that $\sum_{e'\in E_x(v)} y_{e'}\geq 1/2$. One can easily see that such a fractional matching is a $(2r)$-approximation of $\vec{x}$, i.e., $\sum_{e\in E} y_e \geq \frac{1}{2r}\sum_{e\in E} x_e$. Thus, the goal is to find a fractional matching $\vec{y}$ such that for each edge $e\in E_x$, at least one of its endpoints is \emph{half-tight} in $\vec{y}$. Furthermore, we want $\vec{y}$ to be $(L/d)$-fractional, meaning that all the non-zero $y_e$-values must be greater than or equal to $L/d$. 

If we had no concern for the time complexity, we could go through the color classes of edges one by one, each time setting $y_e=1$ for all edges of that color, and then removing edges of $E_x$ that have half-tight vertices. This would ensure that, at the end, all edges in $E_x$ have at least one half-tight endpoint. However, this would require time proportional to the number of colors. Even if we were given an ideal edge-coloring for free, that would be $\Omega(d)$ rounds, which is too slow for us. 

To speed up the process, we use a relaxed notion of edge-coloring, namely \emph{defective edge-coloring}, which allows us to have much less colors, while each color class has a bounded number of edges incident to each vertex, say $k$. Now, we cannot raise the $y_e$-values of all the edges of the same color at the same time to $y_e=1$, because that would be too fast and could violate the condition $\sum_{e' \in E_x(v)} y_{e'} \leq 1$. However, we can raise each of these edge values to say $y_e = \frac{1}{2k}$ and still be sure that the summation $\sum_{e'\in E_x(v)} y_{e'}$ for each node does not increase faster than an additive $1/2$. That is because there are only $k$ edges incident to each node, per color class. If we freeze and remove all edges that now have one half-tight vertex, these fractional value raises would never violate the condition $\sum_{e'\in E_x(v)} y_{e'} \leq 1$, thus always lead to a valid fractional matching. 

\subsubsection*{The Basic Rounding Algorithm}
To materialize the above intuitive approach, we first compute a \emph{defective edge-coloring} with $O\paren{r^2\Delta^2}$ colors and defect $k= d/(2L)$. Then, we go through the colors, one by one, applying the above fractional-value increases. This ensures that all the non-zero fractional values $y_e$ are at least $\frac{1}{2k} \geq L/d$. At the very end, we perform $O(\log {d/L})$ doubling steps to ensure that each edge has at least one half-tight endpoint. We next explain the steps of this algorithm, and then provide the related analysis. 

\medskip
\paragraph{Part I, Defective Edge-Coloring Algorithm}
We compute a defective edge-coloring of $H_x$ with $O(L^2 r^2)$ colors and defect --- that is, maximum degree induced by edges of the same color --- at most $d/(2L)$, as follows. Let $F=(V_F, E_F)$ be the line graph of $H_x$, that is, the graph which has a vertex $v_e \in V_F$ for every edge $e\in E_x$ and an edge $\{v_e, v_{e'}\}\in E_F$ if $e$ and $e'$ are incident, thus $e \cap e' \neq \emptyset$. Note that $F$ has maximum degree at most $ r \cdot d$, since $H_x$'s maximum degree is bounded by $d$. With the defective coloring algorithm of Kuhn \cite{kuhn2009weak}, we can compute a $(d/(2L)-1)$-defective vertex-coloring of $F$ with $O\big((\frac{r\cdot d}{(d/(2L)-1)})^2\big) =O\paren{L^2r^2}$ colors\footnote{If we happen to have $d/(2L) \leq 1$, then $(d/(2L)-1)$-defective coloring becomes a degenerate case of the definition, as $(d/(2L)-1) \leq 0$, and then by convention, this simply means proper coloring. In that case the algorithm of Kuhn \cite{kuhn2009weak} provides a proper coloring with $O(d^2 r^2) = O(L^2r^2) $ colors.}. Exploiting the given $O\paren{r^2 \Delta^2}$-edge-coloring of $H$, and thus $H_x$, which is an $O\paren{r^2\Delta^2}$-vertex-coloring of the line graph $F$, we can make this algorithm run in $O(\log^*(r \Delta))$ rounds. The vertex-coloring of the line graph with defect $d/(2L) -1$ is an edge-coloring of $H_x$ where every edge has at most $d/(2L)-1$ incident edges of the same color, resulting in at most $d/(2L)$ many edges of the same color incident to each vertex. 

\medskip
\paragraph{Part II, Fractional Matching Computation via Defective Coloring} We process the colors of the $d/(2L)$-defect-defective coloring one by one, in $O\paren{L^2 r^2}$ iterations. In the $i^{th}$ iteration, for each non-frozen edge $e$ with color $i$, we raise $y_e$ from $y_e=0$ to $y_e=L/d$. Then for each node $v$ that is already half-tight, meaning that $\sum_{e\in E_x(v)} y_e\geq 1/2$, we freeze all the edges incident to $v$. This means the fractional value of these edges will not be raised in the future. Notice that since we raise values only incident to nodes that are not already half-tight, and as for each such node the summation goes up by at most $\frac{d}{2L} \cdot \frac{L}{d} =1/2$, the vector $\vec{y}$ always remains a fractional matching, meaning that we always have $\sum_{e\in E_x(v)} y_e \leq 1$ for each node $v$.

At the very end, once we are done with processing all colors, some edges in $E_x$ may remain without any half-tight endpoint. Though any such edge $e$ would itself have $y_e=L/d$. We perform $\log{(d/L)}$ iterations of doubling, where in each iteration, we double all the fractional values $y_e$ for all edges that do not have a half-tight endpoint. At the end, we are ensured that each edge has at least one half-tight endpoint, and moreover, each non-zero fractional value $y_e$ is at least $L/d$. 
  
\begin{lemma}\label{lemma:E_y-rounding}
The above algorithm computes an $(L/d)$-fractional matching $\vec{y}$ such that $\sum_{e\in E} y_e \geq \frac{1}{2r}\sum_{e\in E} x_e$, in $O(L^2 r^2 + \log (d/L) + \log^*(r\Delta)) = O(L^2r^2 + \log \Delta)$ rounds.
\end{lemma}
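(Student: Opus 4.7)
The plan is to verify four properties of the output $\vec{y}$: (i) it is a valid fractional matching, (ii) every nonzero entry satisfies $y_e \ge L/d$, (iii) $\sum_e y_e \ge \tfrac{1}{2r}\sum_e x_e$, and (iv) the procedure uses $O(L^2 r^2 + \log \Delta)$ rounds. Properties (ii) and (iv) are essentially by inspection. Nonzero $y_e$-values are set to exactly $L/d$ during the coloring phase and can only be doubled thereafter, so they remain at least $L/d$. The three pieces of the algorithm cost, respectively, $O(\log^{*}(r\Delta))$ rounds for the defective edge-coloring of the line graph $F$ via Kuhn's algorithm \cite{kuhn2009weak} (bootstrapped from the given $O(r^2\Delta^2)$-edge-coloring, viewed as a proper vertex-coloring of $F$), $O(L^2 r^2)$ rounds for the coloring phase (one local round per color class), and $O(\log(d/L)) = O(\log \Delta)$ rounds for the doubling phase.

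For matching validity (i), I will argue by induction over iterations. During iteration $i$ of the coloring phase, a color-$i$ edge $e$ is raised from $0$ to $L/d$ only if it is still active, which by the freezing rule forces every endpoint $v\in e$ to have pre-iteration sum $\sum_{e'\in E_x(v)}y_{e'}<\tfrac12$. The defective coloring guarantees at most $d/(2L)$ incident color-$i$ edges at $v$, so its sum grows by at most $\tfrac{d}{2L}\cdot\tfrac{L}{d}=\tfrac12$ in iteration $i$ and hence stays at most $1$. For the doubling phase, an edge gets doubled only if none of its endpoints is half-tight, so every endpoint $v$ of a doubled edge enters the iteration with $\sum_{e'\in E_x(v)}y_{e'}<\tfrac12$ and leaves it with sum strictly below $1$.

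The substantive step is the approximation bound (iii). The plan is to show that at the end of the doubling phase, every edge $e\in E_x$ has at least one half-tight endpoint in $\vec{y}$. If this failed for some $e$, that edge would have been doubled in each of the $\log(d/L)$ iterations, so $y_e$ would grow from $L/d$ to $2^{\log(d/L)}\cdot L/d = 1$; but $y_e=1$ alone forces each endpoint of $e$ to be half-tight, a contradiction. Viewing $\vec{y}$ as a fractional matching of the subhypergraph $H_x$ and applying \Cref{lemma:halftight-2rapprox} there then yields $|\vec{y}|\ge\tfrac{1}{2r}|M^{*}_{H_x}|\ge\tfrac{1}{2r}|\vec{x}|$, where the second inequality uses that $\vec{x}$ is itself a fractional matching of $H_x$. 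The most delicate point to keep clean, which I expect to be the main obstacle, is ensuring that the doubling phase preserves the matching constraint when many edges incident to the same vertex are doubled simultaneously; the ``no half-tight endpoint'' rule resolves this uniformly by forcing the pre-doubling sum at every endpoint of every doubled edge strictly below $\tfrac12$.
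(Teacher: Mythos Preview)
Your arguments for validity, $(L/d)$-fractionality, the half-tight property, and the round count are correct and essentially match the paper. The gap is in step~(iii). Invoking \Cref{lemma:halftight-2rapprox} gives only $|\vec{y}|\ge\tfrac{1}{2r}|M^{*}_{H_x}|$ where $M^{*}_{H_x}$ is a maximum \emph{integral} matching of $H_x$; the proof of that lemma uses that each vertex has at most one incident edge of $M^{*}$. Your second inequality $|M^{*}_{H_x}|\ge|\vec{x}|$ is false in general: a fractional matching can exceed every integral matching. For a rank-$2$ counterexample, let $H_x$ be a triangle with $x_e=\tfrac12$ on each edge; then $|\vec{x}|=\tfrac32$ but $|M^{*}_{H_x}|=1$.

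The fix, which is exactly what the paper does, is to rerun the blaming argument of \Cref{lemma:halftight-2rapprox} with the fractional weights $x_e$ in place of the unit weights of $M^{*}$: each $e\in E_x$ hands $x_e$ dollars to one of its $\vec{y}$-half-tight endpoints $v$, and $v$ redistributes proportionally to the $y_{e'}$. Since $\vec{x}$ is a fractional matching, $v$ receives at most $1$ dollar in total; since $v$ is half-tight in $\vec{y}$, each incident $e'$ receives at most $2y_{e'}$ from $v$, hence at most $2r\,y_{e'}$ overall. This yields $\sum_{e}x_e\le 2r\sum_{e}y_e$ directly, without the detour through $M^{*}_{H_x}$.
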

\begin{proof}
The round complexity of the algorithm comes from the $O(\log^*(r\Delta))$ rounds spent for computing the defective edge-coloring, $O(L^2 r^2)$ rounds for processing the colors of the defective coloring one by one, and then $O(\log (d/L))$ rounds for the final doubling steps. 

It is clear by construction that the computed vector $\vec{y}$ is a fractional matching, because we always have $\sum_{e\in E_x(v)} y_e \leq 1$, and that it is $(L/d)$-fractional, because the smallest non-zero $y_e$ value that we use is $L/d$. What remains to be proved is that $\sum_{e\in E} y_e \geq \frac{1}{2r}\sum_{e\in E} x_e$. For that, we use the property that the fractional matching $\vec{y}$ that we compute is such that for each $e\in E_x$, at least one of the vertices $v\in e$ must be half-tight, meaning that $\sum_{e\in E(v)} y_e \geq 1/2$. We use this property to argue that the fractional matching $\vec{y}$ has size at least a $(1/(2r))$-factor of $\vec{x}$. This is done via a blaming argument along the same lines as the proof of \Cref{lemma:halftight-2rapprox}. We let every edge $e \in E_x$ put $x_e$ dollars on edges $e' \in E_y$ as follows. Each edge $e$ passes its $x_e$ dollars to one of its half-tight vertices $v\in e$. Then, the half-tight vertex $v$ distributes these $x_e$ dollars among all its incident edges $e'\in E_x(v)$ proportionally to the values $y_{e'}$. As $\vec{x}$ is a fractional matching, in this way, $v$ cannot receive more than $1$ dollar in total from its incident edges in $E_x$. Therefore, and since $v$ is half-tight, no edge $e'$ incident to $v$ receives more than $2y_{e'}$ dollars from $v\in e'$. In total, an edge $e'\in E_y$ can receive at most $2ry_{e'}$ dollars from edges in $E_x$, at most $2y_{e'}$ from each of its endpoints. Therefore, $\sum_{e \in E} x_e \leq 2r \sum_{e\in E} y_e$.  
\end{proof}

\subsubsection{Recursive Rounding}\label{sec:recursiveRounding}

We explain a recursive method $\round(\vec{x}, L)$ that given a ($1/d$)-fractional matching $\vec{x}$ computes an ($L/d$)-fractional matching $\vec{y}$ such that $\sum_{e\in E} y_{e} \geq \frac{1}{4r} \sum_{e\in E} x_{e}$. This procedure will be applied when $L$ is greater than some fixed constant. The procedure works mainly by a number of recursive calls to $\sqrt{2L}$-factor rounding procedures, and a few additional steps. 

\begin{lemma}[Recursive Rounding]\label{lemma:recursiveRounding}
There is an $O\paren{ (r^2 + \log \Delta)\log^{5 + \log r} L}$-round deterministic distributed algorithm that turns a $(1/d)$-fractional matching $\vec{x}$ into an $(L/d)$-fractional matching $\vec{y}$ with $\sum_{e\in E} y_e \geq \frac{1}{4r}\sum_{e\in E} x_e$, for any $L$ such that $L\log^2 L\leq d$. 
\end{lemma}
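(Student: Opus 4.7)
The plan is to proceed by recursion on $L$. The base case of the recursion is handled by the basic rounding of Lemma \ref{lemma:hypergraphbasicRounding} when $L$ is below a small constant, which supplies an $L$-factor rounding in $O(r^2 + \log \Delta)$ rounds with only a $2r$-factor loss in the matching's size. For the inductive step I reduce an $L$-factor rounding task to $\Theta(r)$ outer iterations, each of which invokes the rounding procedure twice at parameter $\sqrt{2L}$, since two successive square-root-factor roundings compose into an $L$-factor rounding on a suitably scaled input. This matches the informal ``interleaving of rounding with refilling'' described in the overview.

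Concretely, I maintain a running output $\vec{y}$ that starts at $\vec{0}$. In outer iteration $t \in \{1,\dots,T\}$ with $T = \Theta(r)$, I first form the \emph{residual} fractional matching $\vec{x}^{(t)}$ obtained from $\vec{x}$ by zeroing out every edge that already has an appreciably saturated endpoint in the current $\vec{y}$, where ``saturated'' means $\sum_{e'\in E(v)} y_{e'}$ exceeds a fixed small constant. A double-counting argument in the same spirit as Lemma \ref{lemma:halftight-2rapprox} shows that as long as $\sum_{e} y_e$ has not yet reached $\sum_e x_e/(4r)$, the residual retains a constant fraction of the input mass. I then apply $\round(\cdot,\sqrt{2L})$ recursively to $\vec{x}^{(t)}$, obtaining a $(\sqrt{2L}/d)$-fractional matching of size at least $(1/(4r))\sum_e x^{(t)}_e$, and follow it with a second invocation of the same procedure on the result, viewed now as a $(1/d')$-fractional matching with $d' = d/\sqrt{2L}$, yielding a $(2L/d)$-fractional matching $\vec{z}^{(t)}$ of size at least a $\Theta(1/r^2)$-fraction of $\sum_e x_e$. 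Finally I update $\vec{y} \leftarrow \vec{y} + \tfrac12 \vec{z}^{(t)}$; the factor $\tfrac12$ and the choice of saturation threshold defining the residual jointly ensure that the invariant $\sum_{e\in E(v)} y_e \le 1$ is preserved at every node.

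After $T = \Theta(r)$ outer iterations, the total mass added to $\vec{y}$ is at least $T \cdot \Theta(1/r^2) \cdot \sum_e x_e \ge (1/(4r))\sum_e x_e$, which is exactly the guarantee demanded; a monotonicity argument handles both the case where the threshold $\sum_e y_e \ge \sum_e x_e/(4r)$ is reached early (the algorithm can then halt) and the case where it is reached only at the end. The fractionality of $\vec{y}$ is maintained at $L/d$ by discarding, at the very end, any accumulated values that fell below the threshold $L/d$ during the halving in the update rule; this is exactly where the hypothesis $L\log^2 L \le d$ from Remark \ref{rmrk:fractionality} supplies the slack needed to absorb the two halvings and to keep each recursive invocation at a scale where its own precondition holds. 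For the round complexity, letting $T(L)$ denote the cost, I obtain the recurrence $T(L) \le \Theta(r)\cdot\bigl(2\,T(\sqrt{2L}) + O(r^2 + \log \Delta)\bigr)$, and unwinding over the $O(\log\log L)$ levels of square-root recursion produces a multiplicative factor of $(2r)^{O(\log\log L)} = \log^{O(1+\log r)} L$ on top of the base-case cost, which, with the polylogarithmic slack needed for the fractionality condition, gives the claimed $O((r^2+\log\Delta)\log^{5+\log r} L)$ bound.

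The main obstacle I anticipate is making the interplay between the ``remove'' step (forming $\vec{x}^{(t)}$) and the ``add'' step (incorporating $\vec{z}^{(t)}$ into $\vec{y}$) tight enough that each outer iteration genuinely contributes a $\Theta(1/r^2)$ fraction of $\sum_e x_e$ to $\vec{y}$ while keeping $\vec{y}$ a valid fractional matching. One must carefully calibrate the saturation threshold at nodes so that it simultaneously lower-bounds the residual's mass (so the two recursive roundings still pay off in absolute terms) and upper-bounds the growth of $\vec{y}$ (so no node overflows when $\tfrac12 \vec{z}^{(t)}$ is added). Verifying that each recursive call indeed receives a proper $(1/d')$-fractional matching whose parameters satisfy the inductive hypothesis $\sqrt{2L}\log^2\sqrt{2L} \le d'$, across both chained invocations and across all $\Theta(r)$ iterations, is where the bookkeeping demands the most care.
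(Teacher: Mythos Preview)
Your proposal is correct and follows essentially the same approach as the paper's proof: $16r$ outer iterations, each forming a residual by zeroing edges at $\vec{y}$-half-tight vertices, applying two chained $\sqrt{2L}$-roundings, and adding half of the result to $\vec{y}$, with validity, size, and complexity argued just as you describe. One minor simplification: the ``discarding'' step you propose is unnecessary, since the recursive call guarantees $\vec{z}^{(t)}$ is $(2L/d)$-fractional, so $\tfrac12\vec{z}^{(t)}$ is already $(L/d)$-fractional and the accumulated $\vec{y}$ is automatically $(L/d)$-fractional throughout.
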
 

\paragraph{The Recursive Rounding Algorithm}
The method $\round(\vec{x}, L)$ consists of $16r$ iterations. Initially, we set $y_e=0$ for all edges. Then, in $16$ iterations, we gradually grow $\vec{y}$ while keeping it $(L/d)$-fractional. 

The process in each iteration is as follows: 
\begin{itemize}
\item We first generate a fractional matching $\vec{z}$ by initially setting it equal to $\vec{x}$, and then removing from it each edge $e$ that is incident to a at least one half-tight vertex of $\vec{y}$. In other words, for each vertex $v$ such that $\sum_{e\in E(v)} y_{e} \geq 1/2$, we set $z_{e} = 0$ for all $e$ with $v \in e$; for all other edges, we set $z_{e}=x_{e}$. 
\item We first perform $\mathsf{round}(\vec{z}, \sqrt{2L})$, producing some intermediate $(\sqrt{2L}/d)$-fractional matching $\vec{z'}$. Then we perform $\mathsf{round}(\vec{z'}, \sqrt{2L})$. This creates a $(2L/d)$-fractional matching $\vec{z''}$ whose size is at least a factor $(\frac{1}{4r})\cdot (\frac{1}{4r}) = \frac{1}{16r^2}$ of the size of $\vec{z}$. 
\item We divide the values of this fractional matching $\vec{z''}$ by a $2$-factor, creating an $(L/d)$-fractional matching, and we add the result to $\vec{y}$. Thus, we effectively update $\vec{y} \gets \vec{y}+\vec{z''}/2$.
\end{itemize} 

\begin{remark}
Recall the promise from \Cref{rmrk:fractionality} that we will apply the rounding method only for values such that ${L\log^2 L}\leq d$. The main reason for this stronger condition, compared to the more natural condition of $L\leq d$, is the $2$-factor that we have in the recursive rounding call. For instance, the matching $\vec{z''}$ is a $(2L/d)$-fractional matching and thus, for this to be meaningful, we need ${2L}\leq d$. However, with the stronger condition that ${L\log^2 L}\leq d$, we can say that the promise is satisfied throughout the recursive calls. For instance, in the second call to $\round(\vec{z'}, \sqrt{2L})$, the new condition would be ${\sqrt{2L} (\log \sqrt{2L})^2} \leq {d/\sqrt{2L}}$, which is readily satisfied given that ${L\log^2 L}\leq {d}$ and $L\geq 8$.
\end{remark}

\paragraph{Analysis of the Recursive Rounding} We next provide the related analysis. In particular, \Cref{lemma:valid} proves that the generated fractional matching $\vec{y}$ is valid, \Cref{lemma:y-to-x} proves that it is a good approximation of $\vec{x}$, and \Cref{lemma:recursiveComplexity} analyzes the running time of this recursive procedure.
\medskip
 
\begin{lemma}\label{lemma:valid}
The fractional matching $\vec{y}$ is valid, meaning that $\sum_{e\in E(v)} y_e \leq 1$ for all vertices $v$. 
\end{lemma}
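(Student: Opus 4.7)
The plan is to prove $\sum_{e \in E(v)} y_e \leq 1$ by induction on the iteration counter of the recursive rounding procedure. I would write $\vec{y}^{(t)}$, $\vec{z}^{(t)}$, $\vec{z''}^{(t)}$ for the corresponding fractional vectors during iteration $t$, so that $\vec{y}^{(0)} = \vec{0}$ is trivially valid and the update rule reads $\vec{y}^{(t)} = \vec{y}^{(t-1)} + \vec{z''}^{(t)}/2$. I would then verify the invariant is preserved from $\vec{y}^{(t-1)}$ to $\vec{y}^{(t)}$.

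The heart of the induction is a support-containment fact: the nonzero edges of $\vec{z''}^{(t)}$ are contained in the nonzero edges of $\vec{z}^{(t)}$, which by construction avoid every edge incident to a vertex that is half-tight in $\vec{y}^{(t-1)}$. The second containment is immediate from the definition of $\vec{z}^{(t)}$. For the first, I would observe — ideally recording it as a short standalone remark about $\round$ so it can be reused — that both the basic rounding (\Cref{lemma:hypergraphbasicRounding}) and the recursive rounding (\Cref{lemma:recursiveRounding}, applied inductively to the two inner $\sqrt{2L}$-factor calls) produce outputs whose support is contained in the support of their input. In the basic case this is explicit, since the algorithm only ever raises $y_e$ on edges of $E_x$; in the recursive case it follows by composition, since $\vec{z''}^{(t)}$ is obtained by two successive rounding operations starting from $\vec{z}^{(t)}$.

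With this observation in place, the inductive step splits into two cases for a fixed vertex $v$. If $v$ is half-tight in $\vec{y}^{(t-1)}$, then $z''^{(t)}_e = 0$ for every $e \ni v$, so $\sum_{e \in E(v)} y^{(t)}_e = \sum_{e \in E(v)} y^{(t-1)}_e \leq 1$ by the inductive hypothesis. Otherwise $\sum_{e \in E(v)} y^{(t-1)}_e < 1/2$, and since $\vec{z''}^{(t)}$ is itself a valid fractional matching we have $\tfrac{1}{2}\sum_{e \in E(v)} z''^{(t)}_e \leq \tfrac{1}{2}$; summing the two bounds gives $\sum_{e \in E(v)} y^{(t)}_e < 1$.

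The only mildly delicate point is nailing down the support-preservation property of $\round$; this is the single place where the recursion really needs to be unwound, and it is where I expect a careful reader to want the most detail. Once that lemma-like remark is in place, the rest is a two-line case split, and the $16r$ iteration count and the approximation guarantee play no role in validity — they are only needed for the size bound handled separately in \Cref{lemma:y-to-x}.
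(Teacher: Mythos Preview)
Your proposal is correct and follows essentially the same approach as the paper's proof: induction on the iteration counter, with a two-case split according to whether $v$ is half-tight in $\vec{y}^{(t-1)}$. The paper is terser and simply writes ``$z_e=0$ and hence $z''_e=0$'' without justification, whereas you explicitly isolate and argue the support-preservation property of $\round$; this is a sensible addition but not a different route.
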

\begin{proof}
We show by induction on $i$ that the fractional matching $\vec{y}$ in iteration $i$ does not violate the constraints $\sum_{e\in E(v)} y_e \leq 1$ for all $v$. 
At the beginning, the condition is trivially satisfied. 
If $v$ is half-tight at the beginning of an iteration, then $z_e=0$ and hence $z''_e=0$ for all $e\in E(v)$, thus no value is added to $\sum_{e\in E(v)} y_e$ in this iteration.
If $v$ is not half-tight at the beginning of an iteration, we add at most half of a fractional matching to edges incident to $v$, thus at most a value $1/2$ to the summation $\sum_{e\in E(v)} y_e$. More formally, we have $\sum_{e\in E(v)} z''_e \leq 1$, thus $\sum_{e \in E(v)} z''_e/2 \leq \frac{1}{2}$, which results in a new value of at most $\sum_{e\in E(v)} (y_e + z''_e /2) \leq 1$. 
\end{proof}

\begin{lemma}\label{lemma:y-to-x} At the end of $16r$ iterations, we have $\sum_{e\in E} y_{e} \geq \frac{1}{4r} \sum_{e\in E} x_{e}$.
\end{lemma}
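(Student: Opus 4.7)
The plan is to exploit monotonicity of $\vec{y}$ across iterations and show that whenever $\sum_e y_e$ has not yet hit the target threshold $\frac{1}{4r}\sum_e x_e$, the iteration adds an amount proportional to $\sum_e x_e$ to $\sum_e y_e$. Once this per-iteration progress is established, $16r$ iterations are just enough to accumulate the desired total increase. Concretely, I would assume, for contradiction, that $\sum_e y_e < \frac{1}{4r}\sum_e x_e$ at the end of all $16r$ iterations; since $\vec{y}$ is coordinate-wise non-decreasing (values are only added, never subtracted), this inequality holds at the start of every iteration.

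The main local step is to control the gap between $\sum_e x_e$ and $\sum_e z_e$, where $\vec{z}$ is obtained by zeroing out $x_e$ on edges incident to the set $H$ of half-tight vertices of the current $\vec{y}$. Counting tight vertices by double-counting gives
\[
\tfrac{1}{2}\,|H|\;\le\;\sum_{v\in V}\sum_{e\in E(v)} y_e\;=\;\sum_{e\in E}|e|\,y_e\;\le\;r\sum_{e\in E}y_e,
\]
which is the only place the rank-$r$ hypothesis enters. Hence $|H|\le 2r\sum_e y_e$. Because $\vec{x}$ is a fractional matching, $\sum_{e\in E(v)}x_e\le 1$ for every $v$, so
\[
\sum_e x_e-\sum_e z_e\;\le\;\sum_{v\in H}\sum_{e\in E(v)}x_e\;\le\;|H|\;\le\;2r\sum_e y_e\;<\;\tfrac{1}{2}\sum_e x_e,
\]
using the standing assumption $\sum_e y_e<\frac{1}{4r}\sum_e x_e$. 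Therefore $\sum_e z_e>\frac12\sum_e x_e$.

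Now I would invoke the two nested calls to $\round(\cdot,\sqrt{2L})$, each of which preserves at least a $\frac{1}{4r}$-fraction of the total weight: this gives $\sum_e z''_e\ge \frac{1}{16r^2}\sum_e z_e$. Since the iteration sets $\vec{y}\leftarrow\vec{y}+\vec{z''}/2$, the growth of $\sum_e y_e$ in a single iteration is at least
\[
\tfrac{1}{2}\sum_e z''_e\;\ge\;\tfrac{1}{32r^2}\sum_e z_e\;>\;\tfrac{1}{64r^2}\sum_e x_e.
\]
Summing over $16r$ iterations yields a total increase strictly greater than $16r\cdot\frac{1}{64r^2}\sum_e x_e=\frac{1}{4r}\sum_e x_e$, contradicting the assumption. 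Hence $\sum_e y_e\ge\frac{1}{4r}\sum_e x_e$ at the end, as claimed.

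The only conceptually interesting step is the half-tight vertex count, which is precisely what makes the rank-$r$ bound essential; the rest is bookkeeping combining Lemma~\ref{lemma:valid} (monotonicity and validity of $\vec{y}$), the approximation guarantee of $\round(\cdot,\sqrt{2L})$, and the fact that $\vec{x}$ is a fractional matching. I do not anticipate a serious obstacle, but one point to double-check is that the definition of $\vec{z}$ in terms of \emph{strict} versus non-strict half-tightness does not spoil the chain of inequalities; using the $\ge \tfrac12$ convention already fixed in the algorithm description makes the counting above tight.
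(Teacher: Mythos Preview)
Your proof is correct and reaches the same per-iteration growth bound $\tfrac{1}{64r^2}\sum_e x_e$ as the paper, but your route to the key inequality $\sum_e x_e-\sum_e z_e\le 2r\sum_e y_e$ is genuinely different and more elementary. The paper proves this via a ``blaming'' (charging) argument: each edge $e$ incident to a $\vec{y}$-half-tight vertex $v$ sends its $x_e$ dollars to $v$, which then redistributes proportionally to the $y_{e'}$ values; since $v$ receives at most $1$ dollar and is half-tight, each $e'$ receives at most $2y_{e'}$ from $v$, and the rank enters because $e'$ has at most $r$ endpoints. You instead bound the \emph{number} of half-tight vertices directly: $\tfrac12|H|\le\sum_{v\in V}\sum_{e\ni v}y_e=\sum_e|e|\,y_e\le r\sum_e y_e$, and then use $\sum_{v\in H}\sum_{e\ni v}x_e\le|H|$ from the fractional matching constraint. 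Your argument avoids the proportional redistribution step entirely; the paper's charging scheme, on the other hand, is the same mechanism reused in Lemmas~\ref{lemma:halftight-2rapprox} and~\ref{lemma:E_y-rounding}, so it is stylistically uniform across the section and is set up to generalize to the greedy-packing setting of Section~\ref{sec:boundedindep} (cf.\ \Cref{lemma:comparepackings}), where a straight vertex count is less natural. As a side note, the paper's final line reads ``after at most $16r^2$ iterations,'' which is a typo; your arithmetic with $16r$ iterations is the intended one.
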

\begin{proof}
Consider one iteration and suppose that $\sum_{e\in E} y_{e} \leq \frac{1}{4r} \sum_{e\in E} x_{e}$. We first show that then $\sum_{e\in E} z_{e} \geq \frac{1}{2} \sum_{e\in E} x_{e}$ by a blaming argument along the same lines as the proof of \Cref{lemma:halftight-2rapprox}. For that, we let every edge $e\in E$ which is incident to a half-tight vertex in $\vec{y}$ put $x_e$ dollars on edges in a manner that each edge $e'$ receives at most $2r y_{e'}$ dollars. This can be done by sending those $x_{e}$ dollars of $e$ to (one of) its $\vec{y}$-half-tight vertex $v$, and then letting $v$ distribute these $x_e$ dollars among its incident edges $e'\in E(v)$ proportionally to the values $y_{e'}$. Since $\vec{x}$ is a matching, each vertex $v$ in total receives at most $1$ dollar from its incident edges. Then, since $v$ is $\vec{y}$-half-tight, it can distribute this dollar among its incident edges such that no edge $e'$ receives more than $2y_{e'}$ dollars from one of its endpoints $v$. Now, an edge $e'\in E$ can possibly receive $2y_{e'}$ dollars from each of its (half-tight) endpoint vertices, thus in total at most $2ry_{e'}$ dollars. Therefore, indeed $$\sum_{e \in E \colon \exists v \in e \colon \sum_{e' \in E(v)} y_{e'} \geq 1/2} x_{e} \leq 2r \sum_{e \in E} y_e \leq \frac{2r}{4r}\sum_{e\in E} x_e=\frac{1}{2}\sum_{e\in E} x_e.$$It follows that if $\sum_{e\in E} y_{e} \leq \frac{1}{4r} \sum_{e\in E} x_{e}$, then $\sum_{e\in E}z_e \geq \frac{1}{2}\sum_{e \in E} x_e$. 
Thus, in each such iteration, the matching $\vec{y}$ grows by at least $$\sum_{e\in E} z''_{e}/2\geq \frac{1}{2}\cdot \frac{1}{16r^2}\sum_{e \in E} z_e \geq  \frac{1}{2} \cdot \frac{1}{16r^2} \cdot \frac{1}{2} \sum_{e\in E} x_e.$$ Therefore, after at most $16r^2$ iterations, we have $\sum_{e\in E} y_{e} \geq \frac{1}{4r} \sum_{e\in E} x_{e}$.
\end{proof}

\begin{lemma}\label{lemma:recursiveComplexity}The algorithm $\round(\vec{x}, L)$ has round complexity $O((r^2 + \log \Delta) \log^{5+\log r} L)$.
\end{lemma}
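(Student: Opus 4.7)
The plan is to set up a recurrence for the round complexity of $\round(\vec{x}, L)$ and solve it by unrolling. Let $T(L, \Delta)$ denote the round complexity on a rank-$r$ hypergraph with maximum degree $\Delta$. A single execution of $\round(\vec{x}, L)$ performs $\Theta(r)$ iterations, and within each iteration the algorithm (i) constructs the truncated matching $\vec{z}$ in $O(1)$ rounds---each vertex decides locally whether it is half-tight in the current $\vec{y}$ and communicates this one bit to its incident hyperedges---(ii) performs two \emph{sequential} recursive calls $\round(\vec{z}, \sqrt{2L})$ followed by $\round(\vec{z'}, \sqrt{2L})$, and (iii) halves $\vec{z''}$ and adds it into $\vec{y}$, which is purely local. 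Summing over the $16r$ iterations, this gives
\[
T(L, \Delta) \;\le\; 32\,r \cdot T\!\bigl(\sqrt{2L}, \Delta\bigr) + O(r).
\]

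For the base case I would appeal to \Cref{lemma:hypergraphbasicRounding}: once $L$ drops below a fixed constant $L_0$, a single application of the basic rounding subroutine produces the required fractional matching in $O(L_0^2 r^2 + \log \Delta) = O(r^2 + \log \Delta)$ rounds. The precondition $L \log^2 L \le d$ of \Cref{rmrk:fractionality} propagates through all recursive instances, as pointed out in that remark, so the basic rounding base case is always applicable. To bound the recursion depth, set $L_0 := L$ and $L_{i+1} := \sqrt{2 L_i}$; a direct computation gives $\log L_{i+1} - 1 = (\log L_i - 1)/2$, whence $\log L_i = 1 + (\log L - 1)/2^{i}$. Hence after $k = \log \log L + O(1)$ levels, $L_k$ falls below any prescribed constant and the base case takes over.

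Unrolling the recurrence now yields
\[
T(L, \Delta) \;\le\; (32r)^{k}\cdot O(r^2 + \log \Delta) \;+\; O(r) \sum_{i=0}^{k-1} (32r)^{i},
\]
where the geometric sum evaluates to $O\!\bigl((32r)^{k-1}\bigr)$ and is absorbed into the leading term because $r \le r^2 + \log \Delta$. Using
\[
(32r)^{\log \log L} \;=\; 2^{\log(32r)\cdot \log\log L} \;=\; (\log L)^{\log(32r)} \;=\; (\log L)^{5 + \log r},
\]
I arrive at $T(L, \Delta) = O\!\bigl((r^2 + \log \Delta)\cdot (\log L)^{5 + \log r}\bigr)$, the bound stated in the lemma. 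The main obstacle is bookkeeping rather than anything deep: one has to verify that each level of the recursion multiplies the cost by exactly $\Theta(r)$ (so that $\log(32r) = 5 + \log r$ is the true exponent), that the $O(r)$ per-level overhead really is swallowed by the base-case cost, and that the precondition from \Cref{rmrk:fractionality} survives each halving of the exponent so the base case is legitimately reached.
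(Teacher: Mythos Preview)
Your proposal is correct and follows essentially the same approach as the paper: both set up the recurrence $R(L)\le 32r\cdot R(\sqrt{2L})+O(r)$ with base case $O(r^2+\log\Delta)$, observe that the recursion depth is $\log\log L + O(1)$, and unroll to get the $(32r)^{\log\log L}=(\log L)^{5+\log r}$ factor. The paper delegates the explicit unrolling to an appendix lemma (\Cref{lem:recursion}), whereas you carry it out inline, but the argument is the same.
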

\begin{proof}
The complexity $R(L)$ of the rounding algorithm $\round(\vec{x}, L)$ follows the recursive inequality $R(L) \leq 16 r (R(\sqrt{2L}) + R(\sqrt{2L}) + O(1))$. Furthermore, we have the base case solution of $R(L) = O(L^2 r^2 + \log \Delta)$ for $L=O(1)$. The claim can now be proved by an induction on $L$, as formalized in \Cref{lem:recursion}. Here, instead of the formal calculations, we mention an intuitive explanation: the complexity gets multiplied by roughly $32r$ as we move from $L$ to $\sqrt{2L}$. There are, roughly, $\log\log L$ such moves, and hence the complexity gets multiplied by $(32r)^{\log\log L} < \log^{5+\log r} L$ until we reach the base case of $L=O(1)$, where the base complexity is $O(r^2 + \log \Delta)$ by \Cref{lemma:hypergraphbasicRounding}.
\end{proof}
\begin{remark}\label{rmrk:constant} We remark that we have not tried to optimize the constant that appears in the exponent of the round complexity $O((r^2 + \log \Delta) \log^{5+\log r} L)$. This constant mainly comes from the constant in the number of iterations in our recursive rounding, which is currently set to $16 r$, for simplicity. Optimizing this constant may be of interest specially for small values of $r$. In particular, for the case of $r=3$, which leads to our edge-coloring result via \Cref{lem:EdgeColoringViaHypergraphMatching}, the current constants lead to a complexity $O(\log^{6+\log_2{3}} \Delta) = O(\log^{7.6} \Delta)$ for $\Theta(1)$-approximation of maximum matching, thus an $O(\log^{7.6} \Delta \log n)$ complexity for maximal rank-$3$ matching and also edge-coloring. One can easily improve this to $O(\log^{6.75} \Delta \log n)$, and perhaps even further, by adjusting the constants throughout.
\end{remark}

\subsection{Wrap-up}\label{sec:wrap-up}
We now use our rounding procedure to find the approximate maximum matching of \Cref{lemma:MaxMatchingApprox}. 

\begin{proof}[Proof of \Cref{lemma:MaxMatchingApprox}]
First, we compute a $({1}/{\Delta})$-fractional $(2r)$-approximate matching $\vec{x}$ in $O(\log \Delta)$ rounds, by the greedy algorithm described in \Cref{sec:fractionalApprox}. Then, we apply the recursive rounding from \Cref{lemma:recursiveRounding} for $L = \Delta/\log^2 \Delta$. This produces a $(1/\log^2 \Delta)$-fractional matching $\vec{x'}$ whose size is a $(1/(8r^2))$-factor of the maximum fractional matching of the hypergraph, in $O\paren{\log^{5 + \log r}\Delta(r^2 + \log \Delta)}$ rounds. To finish up the rounding, we apply the basic rounding of \Cref{lemma:hypergraphbasicRounding} for $L=\log^2 \Delta$, which runs in $O(r^2\log^2 \Delta)$ and produces a $1$-fractional --- i.e. integral --- matching $\vec{x''}$ whose size is at least a $(1/(4r))$-factor of the size of $\vec{x'}$. Hence, the final produced integral matching is a $(32r^3)$-approximation of the maximum matching.
\end{proof}

We compute a maximal matching via iterative applications of this matching approximation.
\begin{proof}[Proof of \Cref{thm:HypergraphMatching}]
First, we pre-compute an $O(r^2 \Delta^2)$-edge-coloring of $H$ in $O(\log^*n)$ rounds by Linial's algorithm \cite{linial1987LOCAL}. Then, iteratively, we apply the maximum matching approximation procedure of \Cref{lemma:MaxMatchingApprox} to the remaining hypergraph. We add the found matching $M$ to the matching that we will output at the end, and remove $M$ along with its incident edges from the hypergraph. 

In each iteration, the size of the maximum matching of the remaining hypergraph goes down to at least a factor of $1-1/(32r^3)$ of the previous size. This is because otherwise we could combine the matching computed so far with the maximum matching in the remainder hypergraph to obtain a matching larger than the maximum matching in $H$. After $O(r^3 \log n)$ repetitions, the remaining maximum matching size is $0$, which means the remaining hypergraph is empty. Hence, we have found a maximal matching in $O\left(\log^*n + r^3 \log n \left(r^2 \log^{6+ \log r} \Delta \right)\right) = O(r^5 \log^{6 + \log r} \Delta \cdot \log n)$ rounds.
\end{proof}

\subsection{Edge-Coloring}\label{subsec:EdgeColoring}

\begin{lemma}\label{lem:ListEdgeColoringViaHypergraphMatching} Given a deterministic distributed algorithm $\mathcal{A}$ that computes a maximal matching in $N$-vertex hypergraphs of rank $3$ and maximum degree $d$ in $T(N, d)$ rounds, there is a deterministic distributed algorithm $\mathcal{B}$ that solves list-edge-coloring of any $n$-node graph $G=(V, E)$ with maximum degree $\Delta$ in at most $T(2n\Delta^2, 2\Delta-1)$ rounds. In the list-edge-coloring problem, each edge $e$ must choose its color from  an arbitrary given list $L_e$ of colors with $|L_e|=d_{e}+1$, where $d_{e}$ denotes the number of edges adjacent to $e$.
\end{lemma}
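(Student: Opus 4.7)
The plan is to generalize the construction used in the proof of \Cref{lem:EdgeColoringViaHypergraphMatching}, where the $2\Delta-1$ identical copies of $G$ implicitly played the role of colors common to every edge. In the list setting, different edges may have different color lists, so instead of taking copies of the whole graph I would introduce vertex copies only where needed: for each edge $e\in E$ add an auxiliary vertex $w_e$; for each graph vertex $v\in V$ and each color $c$ that appears in some list $L_e$ with $v\in e$ add a color-copy $v_c$; and for each edge $e=\{u,v\}$ and each $c\in L_e$ add a $3$-hyperedge $e_c:=\{u_c, v_c, w_e\}$. Algorithm $\mathcal{B}$ then simulates $\mathcal{A}$ on this hypergraph $H$ and assigns color $c$ to edge $e$ whenever $e_c$ belongs to the returned maximal matching $M$.

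Second, I would verify that $H$ meets the stated parameters. The rank is clearly $3$. The number of vertices is bounded by $|E|+\sum_{v}|\bigcup_{e\ni v} L_e|\leq n\Delta/2 + n\cdot \Delta(2\Delta-1)\leq 2n\Delta^2$, using $|L_e|\leq 2\Delta-1$. The maximum degree is attained at the auxiliary vertices $w_e$, each of degree $|L_e|=d_e+1\leq 2\Delta-1$; the color-copies $v_c$ have degree at most $\deg(v)\leq \Delta$. Finally, each hyperedge of $H$ is hosted at the two $G$-endpoints of some edge (both $u_c,v_c$ and $w_e$ can be placed at the endpoints of $e$), so a round of $H$ simulates in $O(1)$ rounds of $G$, giving the round complexity claimed.

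Third, I would establish correctness. For properness, if adjacent edges $e,f$ sharing an endpoint $v$ were assigned the same color $c$, then $e_c$ and $f_c$ would both contain $v_c$, contradicting $M$ being a matching. For completeness, note that the hyperedges $\{e_c:c\in L_e\}$ pairwise share $w_e$, so at most one belongs to $M$; I need to rule out that none does. This is the step I expect to be the main, though short, obstacle. Suppose no $e_c$ lies in $M$. By maximality, every $e_c$ is blocked by some matching hyperedge sharing a vertex with it. The blocker cannot share $w_e$ (else already some $e_{c'}\in M$), so it must share $u_c$ or $v_c$, and hence has the form $e'_c$ for some edge $e'$ adjacent to $e$. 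Since distinct $c\in L_e$ give distinct $e'$ (each edge carries at most one color), one obtains an injection from $L_e$ into the set of edges adjacent to $e$, forcing $|L_e|\leq d_e$ and contradicting $|L_e|=d_e+1$.

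The conceptual crux is the introduction of per-color vertex copies $v_c$: in \Cref{lem:EdgeColoringViaHypergraphMatching} this decoupling is automatic because whole copies of $G$ are taken, whereas in the list-coloring setting one must keep the vertex count to $O(n\Delta^2)$ by creating a copy $v_c$ only when some incident edge actually needs color $c$. Without this decoupling at $v$, the matching constraint at the original vertex would forbid any two adjacent edges from being simultaneously assigned colors; the shared auxiliary vertex $w_e$ then remains solely responsible for enforcing ``at most one color per edge.'' Everything else is a direct adaptation of the argument in \Cref{lem:EdgeColoringViaHypergraphMatching}.
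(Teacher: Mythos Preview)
Your proposal is correct and follows essentially the same construction as the paper's proof sketch: per-color vertex copies $v_c$ together with one auxiliary vertex $w_e$ per edge, and a $3$-hyperedge $\{u_c,v_c,w_e\}$ for each $c\in L_e$. Your write-up is in fact more explicit than the paper's sketch---in particular you spell out the pigeonhole argument showing that some $e_c$ must lie in $M$ (the paper only says ``one can verify''), and you make the role of $w_e$ explicit whereas the paper's list-coloring proof sketch leaves it implicit from the earlier \Cref{lem:EdgeColoringViaHypergraphMatching}.
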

\vspace{-5pt}
\begin{proof}[Proof Sketch]
To edge-color $G=(V, E)$, we generate a hypergraph $H$: For each edge $e\in E$ with list-color $L_e$, we take $|L_e|$ copies of $e=\{v, u\}$ as follows: For each color $i \in L_e$, we take one copy $e_i$ of $e$ which is put incident to copies $v_i$ and $u_i$ of $v$ and $u$. Thus, if two adjacent edges $e=\{v, u\}$ and $e'=\{v, u'\}$ have a common color $i \in L_{e} \cap L_{e'}$, then their $i^{th}$ copies $e_i$ and $e'_i$ will be present and will both be incident to $v_i$. Notice that for each vertex $v$, at most $2\Delta^2$ copies of it will be used because for each of the edges $e$ incident to $v$, at most $|L_{e}| < 2\Delta$ additional copies of $v$ are added.

Algorithm $\mathcal{B}$ runs the maximal matching algorithm $\mathcal{A}$ on $H$, and then, for each edge $e\in E$, if the copy $e_i$ of $e$ is in the computed maximal matching, $\mathcal{B}$ colors $e$ with color $i$. One can verify that each $G$-edge $e$ has exactly one copy $e_i$ in the maximal matching, and thus we get a list-edge-coloring.
\end{proof}  

\medskip
\noindent
\textbf{\Cref{thm:edge-coloring}.} \emph{There is a deterministic distributed algorithm that computes a $(2\Delta-1)$-edge-coloring in $O(\log^{7} \Delta \cdot \log n)$ rounds, in any $n$-node graph with maximum degree $\Delta$. Moreover, the same algorithm solves list-edge-coloring, where each edge $e\in E$ must get a color from a given list $L_e$ of colors with $|L_e|=d_{e}+1$, where $d_{e}$ denotes the number of edges adjacent to $e$.}
\medskip
\begin{proof}
Follows directly from \Cref{lem:ListEdgeColoringViaHypergraphMatching} and \Cref{thm:HypergraphMatching} (with the slightly improved exponent constant, noted in \Cref{rmrk:constant}).
\end{proof}

\begin{remark}
We note that \Cref{lem:ListEdgeColoringViaHypergraphMatching}, and hence also \Cref{thm:edge-coloring}, can be easily extended from graphs to hypergraphs. In particular, list-edge-coloring of hypergraphs of rank $r$ can be reduced to maximal matching in hypergraphs of rank $r+1$. Thus, we can obtain a deterministic list-edge-coloring algorithm for hypergraphs of rank $r$ with round complexity $O(r^5 \log^{6+ \log (r+1)} \Delta \log n)$ rounds.  
\end{remark}

As stated before, the deterministic list-edge-coloring algorithm of \Cref{thm:edge-coloring}, in combination with known randomized algroithms of Elkin, Pettie, and Su \cite{elkin2015EdgeColoring} and Johansson~\cite{johansson1999simple}, leads to a $\poly(\log\log n)$ randomized algorithm for $(2\Delta-1)$-edge-coloring.

\medskip
\noindent\textbf{\Cref{thm:Randedge-coloring}.} \emph{There is a randomized distributed algorithm that computes a $(2\Delta-1)$-edge-coloring in $O(\log^{8} \log n)$ rounds, with high probability.} 
\medskip

\begin{proof}[Proof of \Cref{thm:Randedge-coloring}]
For $\Delta=\Omega(\log^2n)$, we run the $O\paren{\log^*\Delta+ \frac{\log n}{\Delta^{1-o(1)}}}$-round algorithm by Elkin, Pettie, and Su \cite{elkin2015EdgeColoring} for $((1+\eps)\Delta)$-edge-coloring, in $O(\log^*\Delta)$ rounds.
For $\Delta=o(\log^2n)$, we first apply the simple randomized coloring algorithm of Johansson~\cite{johansson1999simple} for $O(\log \Delta) = O(\log\log n)$ rounds. In particular, in each iteration, every remaining edge $e$ independently picks a color $q_{e}$ from its remaining palette uniformly at random. If there is no incident edge that picked the same color $q_e$, then the edge $e$ is colored with this color $q_{e}$ and removed from the graph. Moreover, the color $q_e$ gets deleted from the palettes of every incident edge. As proved in e.g. \cite{barenboim2012locality}, after $O(\log \Delta)$ rounds, this procedure leaves us with a graph where each connected component of remaining edges has size at most $N=\poly \log n$. On these components, we then run the list-edge-coloring algorithm of \Cref{thm:edge-coloring} to complete the partial coloring. This takes at most $O(\log^{8} N) = O(\log^8 \log n)$ rounds. Hence, including the $O(\log\log n)$ initial rounds, the overall complexity is $O(\log^8 \log n)$ rounds.
\end{proof}

\section{MIS in Graphs of Bounded Neighborhood Independence}
\label{sec:boundedindep}

We will now generalize the hypergraph maximal matching algorithm of
\Cref{sec:hypergraphmatching} to computing maximal independent sets
and $(\Delta+1)$-vertex colorings of graphs of \emph{bounded neighborhood
  independence}, thus proving \Cref{thm:BIMIS,thm:BIVC}. Formally, the neighborhood independence of a graph is
defined as follows.

\begin{definition}[Bounded Neighborhood Independence]\label{def:BNI}
  For an integer $r\geq 1$, we say that a graph $G=(V,E)$ has
  \emph{neighborhood independence} at most $r$ if for every node
  $v\in V$, the graph $G[N(v)]$ induced by the set $N(v)$ of neighbors
  of $v$ has independence number at most $r$. 
\end{definition}

We note that the line graph of a hypergraph $H$ of rank $r$ has
neighborhood independence at most $r$. This is because for each
hyperedge $e$ of $H$, all incident hyperedges $f$ share at least
one node with $e$ and because all the hyperedges sharing a node of
$H$ form a clique in the line graph of $H$. Hence, the neighborhood of
each edge can be covered by at most $r$ cliques in the line
graph. Despite the fact that graphs of neighborhood independence $r$
are significantly more general than line graphs of rank-$r$
hypergraphs, we show that our techniques for computing a maximal
matching in rank-$r$ hypergraphs can be generalized to computing an
MIS in graphs of neighborhood independence $r$, and this, in fact, even with exactly the same asymptotic dependency on $r$ and
$\Delta$.

While in the case of hypergraph matchings, the natural LP relaxation
leads to fractional solutions that are within a small factor of a
maximum matching, it is not as straightforward to model fractional
versions of independent sets in graphs of bounded neighborhood
independence in a meaningful way. Note that for example even for
$r=O(1)$, the integrality gap of the natural LP relaxation of the
maximum independent set problem might be as large as $\Omega(\Delta)$.
In the following, we study special fractional solutions $\vec{x}$ that
assign a non-negative value $x_v\geq 0$ to each node $v\in V$ of a
graph $G=(V,E)$ and allow to approximate maximum and maximal
independent sets in $G$ if $G$ if a graph of bounded neighborhood
independence. For convenience, we first introduce some
notation. Recall that given a graph $G=(V,E)$ and a node $v\in V$, we
use $N(v)$ to denote the set of neighbors of $v$. Further, we define
\[
N^+(v) := \set{v} \cup N(v)
\]
to denote the set of nodes in the $1$-neighborhood of $v$. Moreover, 
for node vector $\vec{x}$ assigning values $x_v$ to every
node $v\in V$, for each node $v\in V$, we define
\[
\Sigma_{\vec{x}}(v) := \sum_{u\in N^+(v)} x_u
\]
to be the local sum of the values $x_u$ in the $1$-neighborhood of
$v$. As a fractional relaxation of the independent set of a graph $G$, we define
a \emph{greedy packing} as follows.

\begin{definition}[Greedy Packing]\label{def:greedypacking}
  For a graph $G=(V,E)$, a node vector
  $\vec{x}$ assigning a non-negative value $x_v\geq 0$ to each node
  $v\in V$ is called a \emph{greedy packing} if there
  exists a global order $\prec$ on the nodes $V$ such that
  \[
  \forall v\in V: \quad x_v + \!\!\sum_{u\in N(v) : u\prec
    v}\!\!\!\!\!\!  x_u\ \leq 1.
  \]
\end{definition}

Hence, in a greedy packing, the values $x_v$ can be assigned
to the nodes in some order such that for all nodes $v\in V$, when node
$v$ gets assigned value $x_v$, the sum of the values in $v$'s
$1$-neighborhood is bounded by $1$. 

Analogously to the matching algorithm in
\Cref{sec:hypergraphmatching}, the key part is a recursive algorithm
that finds and independent set that is an approximation of a maximum
independent set in graphs of bounded neighborhood independence. We
formally prove the following result.

\begin{lemma}\label{lemma:MaxISApprox}
  In graphs of neighborhood independence at most $r$, there is a
  deterministic distributeqd algorithm that computes a
  $(32r^3)$-approximate independent set in
  $O\paren{r^2 \log^{6 + \log r} \Delta }$ rounds, given an
  $O(\Delta^2)$-vertex-coloring.
\end{lemma}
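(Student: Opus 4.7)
The plan is to follow the three-part architecture of Lemma~\ref{lemma:MaxMatchingApprox} with fractional matchings replaced by fractional greedy packings and hyperedge incidence replaced by the closed-neighborhood structure $N^+(v)$. Call a node $v$ \emph{half-tight} in $\vec{x}$ when $\Sigma_{\vec{x}}(v) \geq 1/2$. Throughout, I would maintain the stronger invariant $\Sigma_{\vec{x}}(v) \leq 1$ at every $v$, which is a strengthening of the greedy-packing ordering condition (as seen on a star, a bare greedy packing may allow $\Sigma_{\vec{x}}(v) \gg 1$) and which makes the analogues of the blaming arguments work.

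First I would build a $(2r)$-approximate fractional greedy packing in $O(\log \Delta)$ rounds. Initialize $x_v = 1/\Delta$ and for $O(\log \Delta)$ iterations double $x_v$ at every not-yet-half-tight $v$. A pre-iteration value $\Sigma_{\vec{x}}(v) < 1/2$ and the fact that only values in $N^+(v)$ can double in one iteration give $\Sigma_{\vec{x}}(v) < 1$ throughout, so the result is a valid greedy packing (ordered by reverse last-doubling time). Termination yields $\Sigma_{\vec{x}}(u) \geq 1/2$ for every $u$; summing over $u$ in a maximum independent set $I^*$ and swapping sums gives $|I^*|/2 \leq \sum_{u \in I^*} \Sigma_{\vec{x}}(u) = \sum_v x_v \cdot |I^* \cap N^+(v)| \leq r \sum_v x_v$, because $I^* \cap N(v)$ is an independent subset of $G[N(v)]$ of size at most $r$ (and trivially $|I^* \cap N^+(v)|=1$ when $v \in I^*$).

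Next I would port Lemma~\ref{lemma:hypergraphbasicRounding}. Given a $(1/d)$-fractional greedy packing $\vec{x}$ with $\Sigma_{\vec{x}}(v) \leq 1$, use the input $O(\Delta^2)$-vertex-coloring and Kuhn's algorithm~\cite{kuhn2009weak} to compute, in $O(\log^*\Delta)$ rounds, a vertex-coloring of $\text{supp}(\vec{x})$ with $O(L^2)$ colors and defect $d/(2L)$. Process colors one by one: in class $i$, raise $y_v$ from $0$ to $L/d$ at every unfrozen $v$ of color $i$ and then freeze the closed neighborhoods of all nodes that became half-tight. The defect bound caps the per-class growth of $\Sigma_{\vec{y}}(v)$ at $(d/(2L))\cdot(L/d)=1/2$ and combined with freezing gives $\Sigma_{\vec{y}}(v) \leq 1$; a final $O(\log(d/L))$ doubling sweep ensures every $v \in \text{supp}(\vec{x})$ has a half-tight node in $N^+(v)$. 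The blaming analogue of Lemma~\ref{lemma:E_y-rounding}---charge each $v \in \text{supp}(\vec{x})$ through a half-tight $\phi(v) \in N^+(v)$ which distributes the $x_v$ dollars proportionally to $\vec{y}$ in $N^+(\phi(v))$, and use $\Sigma_{\vec{x}}(v) \leq 1$ to bound the mass arriving at any single node---then yields $\sum_v y_v \geq \tfrac{1}{2r}\sum_v x_v$ in $O(L^2+\log \Delta)$ rounds. The recursive rounding follows Lemma~\ref{lemma:recursiveRounding} verbatim: for $16r$ iterations (i) form $\vec{z}$ by zeroing $x_v$ at every $v$ whose $N^+$ contains a $\vec{y}$-half-tight vertex, (ii) call $\round(\vec{z},\sqrt{2L})$ twice to produce $\vec{z}''$ with $\sum z''_v \geq \tfrac{1}{16r^2}\sum z_v$, and (iii) update $\vec{y} \leftarrow \vec{y}+\vec{z}''/2$. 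Validity holds because $\vec{z}''$ is supported outside the 1-neighborhood of every $\vec{y}$-half-tight node, so the added $\Sigma_{\vec{z}''}(v)/2\leq 1/2$ combines with a not-yet-half-tight $\Sigma_{\vec{y}}(v)<1/2$ to preserve $\Sigma_{\vec{y}}\leq 1$, and the recurrence $R(L)\leq 32r\,R(\sqrt{2L})+O(1)$ solves to $O((r^2+\log\Delta)\log^{5+\log r}L)$ as before.

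The main obstacle I anticipate is carefully verifying these BNI-based blaming arguments and checking that the stronger invariant $\Sigma_{\vec{x}}(v)\leq 1$ (rather than just the greedy-packing ordering) is preserved through every recursive call, including the intermediate $\vec{z},\vec{z}',\vec{z}''$; this is the step where the switch from hyperedge incidence to closed-neighborhood incidence really has to be paid for, and where the factor $r$ rather than $\Delta$ must be extracted. Assuming this goes through, the pieces combine exactly as in the proof of Lemma~\ref{lemma:MaxMatchingApprox}---start with the initial $(2r)$-approximate $(1/\Delta)$-fractional greedy packing, apply $\round(\cdot,\Delta/\log^2\Delta)$ at an $O(r^2)$-loss, then finish with a basic rounding at $L=\log^2\Delta$ at a further $O(r)$-loss---yielding an integral greedy packing, i.e., an independent set, of size within a $(32r^3)$-factor of $I^*$ in $O(r^2\log^{6+\log r}\Delta)$ rounds.
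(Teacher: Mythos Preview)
Your overall architecture mirrors the paper's, but the invariant you propose to carry through---$\Sigma_{\vec{x}}(v)\leq 1$ for every $v$---is not preserved even by the first doubling phase, and this is a genuine failure rather than a verification chore. Take a $k$-clique on $a_1,\dots,a_k$ with pendants $b_1,\dots,b_k$ (each $b_i$ adjacent only to $a_i$); this graph has neighborhood independence $r=2$ and $\Delta=k$. Starting from $x_v=1/(k{+}1)$ one has $\Sigma_{\vec{x}}(a_i)=1$ while $\Sigma_{\vec{x}}(b_i)=2/(k{+}1)<1/2$, so only the $b_i$ double; afterwards $\Sigma_{\vec{x}}(a_i)=(k{+}2)/(k{+}1)>1$, and continued doubling drives it toward $3/2$. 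Your justification (``pre-iteration $\Sigma_{\vec{x}}(v)<1/2$ gives post-iteration $\Sigma_{\vec{x}}(v)<1$'') only covers the nodes that themselves double; it says nothing about an already half-tight $v$ whose neighbors double. The same breakdown recurs inside the recursion. Once the invariant fails, so does your degree bound on $\mathrm{supp}(\vec{x})$ (hence the $O(L^2)$-color defective coloring), and your blaming step no longer yields the factor $r$ you claim.

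The paper's route is exactly to abandon $\Sigma_{\vec{x}}\leq 1$ and work only with the weak greedy-packing ordering condition. The missing ingredient is a separate structural lemma (\Cref{lemma:greedypacking}): in a graph of neighborhood independence $r$, \emph{every} greedy packing satisfies $\Sigma_{\vec{x}}(v)\leq r$, proved by building a greedy MIS of $G[N^+(v)]$ in reverse $\prec$-order and charging each node of $N^+(v)$ to an MIS neighbor that precedes it. This is the step where the hypothesis is cashed in for a factor $r$ rather than $\Delta$. With $\Sigma_{\vec{x}}(v)\leq r$ in hand, the support of a $(1/d)$-fractional greedy packing has degree at most $rd$, the defective coloring needs $O((rL)^2)$ colors, and the blaming bound (\Cref{lemma:comparepackings}) reads $\alpha_v\leq 2y_v\,\Sigma_{\vec{x}}(v)\leq 2r\,y_v$. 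The paper also supplies small closure lemmas (\Cref{lemma:greedybasic} and \Cref{lemma:greedycombine}) showing that doubling the not-half-tight nodes and adding $\vec{z}''/2$ to $\vec{y}$ preserve the greedy-packing \emph{ordering} (by suitably extending the order), since your $\Sigma\leq 1$ is unavailable for this purpose. If you swap your invariant for $\Sigma_{\vec{x}}\leq r$ and insert these lemmas, the rest of your sketch goes through.
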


We first show that in graphs of bounded neighborhood independence for
any such greedy packing $\vec{x}$, the local sum $\Sigma_{\vec{x}}(v)$
is bounded for all nodes.

\begin{lemma}\label{lemma:greedypacking}
  Let $G=(V,E)$ be a graph with neighborhood independence at most $r$
  and assume that we are given a greedy packing $\vec{x}$. Then, for
  all $v\in V$, we have $\Sigma_{\vec{x}}(v)\leq r$.
\end{lemma}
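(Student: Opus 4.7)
My plan is to exploit the ordering $\prec$ given by the greedy packing to pick a small independent ``witness set'' $S \subseteq N^+(v)$ such that every node $u \in N^+(v)$ is ``covered'' by some $w \in S$ in the sense that $u \in N^+(w)$ and $u \preceq w$. Then the greedy packing inequality applied at each $w \in S$ will control $\sum_{u \in N^+(v)} x_u$ by $|S|$, and bounded neighborhood independence will control $|S|$ by $r$.

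Concretely, I would construct $S$ by the following greedy procedure. Iterate: among the nodes of $N^+(v)$ that are not yet covered, pick the one that is latest with respect to $\prec$, add it to $S$, and declare every $u \in N^+(w) \cap N^+(v)$ with $u \preceq w$ to be covered (note $w$ covers itself). Stop when every node of $N^+(v)$ is covered. Because each node eventually ``covers itself'' when chosen, the procedure terminates and produces an $S \subseteq N^+(v)$ with the property that every $u \in N^+(v)$ has some $w \in S$ with $u \in N^+(w)$ and $u \preceq w$.

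The key structural claim is that $S$ is an independent set in $G$. Suppose $w \in S$ was added after some $w' \in S$. By the greedy choice, $w \prec w'$, and $w$ was uncovered at the moment of its selection; in particular $w \notin N^+(w')$, so $w$ and $w'$ are non-adjacent. Consequently $S \subseteq N^+(v)$ is independent in $G$. Now if $v \in S$, then independence forces $S \cap N(v) = \emptyset$, so $S = \{v\}$ and $|S| = 1 \le r$. Otherwise $S \subseteq N(v)$ is independent in $G[N(v)]$, so $|S| \le r$ by \Cref{def:BNI}. Either way, $|S| \le r$.

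To finish, I bound $\Sigma_{\vec{x}}(v)$ by distributing each $u \in N^+(v)$ to a covering $w \in S$:
\[
\Sigma_{\vec{x}}(v) \;=\; \sum_{u \in N^+(v)} x_u \;\le\; \sum_{w \in S}\!\!\!\sum_{\substack{u \in N^+(w) \\ u \preceq w}}\!\! x_u \;=\; \sum_{w \in S}\!\Bigl(x_w + \!\!\!\sum_{\substack{u \in N(w) \\ u \prec w}}\!\! x_u\Bigr) \;\le\; |S| \;\le\; r,
\]
where the second inequality drops the restriction to $N^+(v)$, and the last inequality uses the defining inequality of a greedy packing (\Cref{def:greedypacking}) applied at each $w \in S$. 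The only mildly subtle step is the $v \in S$ edge case when arguing $|S| \le r$, which is easily handled as above; everything else is bookkeeping around the ordering and the covering relation, so I do not expect any serious technical obstacle.
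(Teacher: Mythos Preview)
Your proof is correct and is essentially the same argument as the paper's. The paper constructs $S$ by running a greedy MIS on $G[N^+(v)]$, processing nodes in decreasing $\prec$ order and adding each node whose neighbors are not yet in $S$; this is exactly your ``pick the latest uncovered node'' procedure, and the subsequent charging of each $u\in N^+(v)$ to its covering $w\in S$ together with the greedy-packing inequality at $w$ is identical. Your explicit handling of the $v\in S$ case is slightly more careful than the paper, which simply asserts $|S|\le r$; the only cosmetic slip is the labeling ``second inequality'' for the step that drops the restriction to $N^+(v)$, which is actually the first inequality in your display.
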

\begin{proof}
  Consider an arbitrary node $v\in
  V$ and let $G_v$ be the subgraph of $G$
  induced by the nodes in $N^+(v)$. Let $\prec$ be the global order on $V$ which is defined
  by \Cref{def:greedypacking} because $\vec{x}$ is a greedy
  packing. Assume that the nodes in $N^+(v)$ are named
  $u_0,\dots,u_{d(v)}$ such that for all $0\leq i< d(v)$, $u_i \succ
  u_{i+1}$. We construct an MIS $S$ of $G_v$ by processing the nodes
  in $N^+(v)$ in the order $u_0,u_1,\dots, u_{d(v)}$, always adding
  the current node $u_i$ to $S$ if no neighbor of $u_i$ has already
  been added to $S$. In this way, every node $u_i\in N^+(v)\setminus S$ has an
  MIS neighbor $u_j\in N^+(v)$ for which $j < i$ and thus $u_i \prec u_j$. We charge the value
  $x_{u_i}$ of every node $u_i\in N^+(v)\setminus S$ to some MIS
  neighbor $u_j$ for which $j<i$. In addition, the value $x_{u_j}$ of
  each MIS node $u_j\in S$ is charged to the node itself. For each MIS
  node $u_j\in S$, let $X_{u_j}$ be the total value charged to
  $u_j$. We can upper bound $X_{u_j}$ as follows: 
  \[
  X_{u_j} \leq \sum_{u_i \in N^+(u_j)\cap N^+(v) : i > j} x_{u_i} \leq
  x_{u_j} + \sum_{w\in N(u_j): w \prec u_j} x_w\leq 1.
  \]
  The last inequality follows because $\vec{x}$ is a greedy packing
  w.r.t.\ the global order $\prec$. The claim of the lemma now follows
  because $G$ has neighborhood independence bounded by $r$, and thus
  the MIS $S$ can contain at most $|S|\leq r$ nodes.
\end{proof}

We will show how to recursively compute a large greedy packing. Before
doing this, we first prove some useful simple properties of greedy
packings.

\begin{lemma}\label{lemma:greedybasic}
  Let $G=(V,E)$ be a graph, and assume that we are given a global
  order $\prec$ on $V$ and a greedy packing $\vec{x}$ w.r.t.\ the
  order $\prec$. Then, the following statements hold:
  \begin{enumerate}[(1)]
  \item Let $v\in V$ be node for which $\Sigma_{\vec{x}}(v)\leq 1$ and let
    $\vec{y}$ be a node vector such that $y_u = x_u$ for all $u\neq v$
    and such that $y_v \leq x_v + 1 - \Sigma_{\vec{x}}(v)$. Then
    $\vec{y}$ is also a greedy packing.
  \item Let $U\subseteq V$ be a subset of the nodes and let $\vec{y}$
    be a node vector such that $y_v=x_v$ for all nodes
    $v\in V\setminus U$ and such that $y_u\geq x_u$ for all $u\in U$.
    If $\Sigma_{\vec{y}}(u)\leq 1$ for all nodes $u\in U$, $\vec{y}$
    is also a greedy packing.
  \item Let $U\subseteq V$ be a set of nodes $u$ for which
    $\Sigma_{\vec{x}}(u)\leq 1/2$ and consider a node vector $\vec{y}$
    such that $y_v=x_v$ for all $v\not\in U$ and such that
    $y_u = 2x_u$ for all $u\in U$. Then $\vec{y}$ is also a greedy
    packing.
  \end{enumerate}
\end{lemma}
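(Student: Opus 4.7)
\medskip
\noindent\textbf{Proof proposal.} All three statements will be proved by exhibiting, for the node vector $\vec{y}$, an explicit global order $\prec'$ that witnesses $\vec{y}$ as a greedy packing in the sense of \Cref{def:greedypacking}. The guiding observation is that the defining inequality of a greedy packing at a node $w$ involves only $y_w$ and the values on the neighbors of $w$ that are $\prec'$-smaller than $w$. So if we place the ``modified'' nodes late in the order, then (a) their constraint sees all neighbors (or almost all), so it is the strongest form of the $\Sigma_{\vec{y}}$-bound we have available, and (b) the constraint of any ``unmodified'' node only loses terms compared with its original $\prec$-constraint, hence remains valid.

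For part (1), I would take $\prec'$ to be the order $\prec$ with $v$ moved to be the unique $\prec'$-maximum element, leaving the relative order of all other nodes unchanged. For $v$, every neighbor is a $\prec'$-predecessor, so the constraint reads
\[
y_v + \sum_{u\in N(v)} y_u \;\leq\; \bigl(x_v + 1 - \Sigma_{\vec{x}}(v)\bigr) + \sum_{u\in N(v)} x_u \;=\; 1,
\]
using $y_u=x_u$ for $u\neq v$ and the definition of $\Sigma_{\vec{x}}(v)$. For any other node $w$, the set $\set{u\in N(w): u\prec' w}$ is contained in $\set{u\in N(w): u\prec w}$ (we may have removed $v$ from this set since $v$ is now maximal), and the corresponding $y$-values equal the $x$-values on this set, so the original greedy-packing inequality at $w$ implies the new one.

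For part (2), I would define $\prec'$ by declaring every node of $V\setminus U$ to be $\prec'$-smaller than every node of $U$, breaking ties inside $V\setminus U$ and inside $U$ according to $\prec$. For $u\in U$, all $\prec'$-predecessors of $u$ lie in $V$, so
\[
y_u + \!\!\!\sum_{w\in N(u):\, w\prec' u}\!\!\!\! y_w \;\leq\; y_u + \sum_{w\in N(u)} y_w \;=\; \Sigma_{\vec{y}}(u) \;\leq\; 1
\]
by hypothesis. For $v\in V\setminus U$, the $\prec'$-predecessor neighbors of $v$ lie in $(V\setminus U)\cap N(v)$ and are exactly the $\prec$-predecessors of $v$ in $V\setminus U$; since $y$ and $x$ agree on $V\setminus U$ and since $y_v=x_v$, the $\prec'$-constraint at $v$ is bounded above by the original $\prec$-constraint and is therefore $\leq 1$. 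Part (3) will then drop out as an immediate corollary: doubling $x_u$ on $U$ yields $\Sigma_{\vec{y}}(u) = 2x_u + \sum_{w\in N(u)} y_w \leq 2\Sigma_{\vec{x}}(u) \leq 1$ for every $u\in U$, so (2) applies.

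There is no real obstacle here; the only thing to keep straight is the bookkeeping that the reordering never enlarges a node's predecessor set (for nodes whose value is unchanged) and gives modified nodes an ``all neighbors are predecessors'' constraint that is exactly what the $\Sigma_{\vec{y}}$-bound provides. I would also include a one-line remark before the three parts pointing out this uniform strategy, so that parts (2) and (3) can be stated briefly once part (1) has been handled in detail.
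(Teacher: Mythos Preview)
Your argument is correct. Parts (1) and (3) match the paper essentially verbatim: the paper also moves $v$ to the end of the order for (1), and derives (3) from (2) via the same $\Sigma_{\vec{y}}(u)\leq 2\Sigma_{\vec{x}}(u)\leq 1$ bound.

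For part (2) you take a genuinely different route. The paper proves (2) by \emph{iterating} part (1): it processes the nodes of $U$ one at a time, replacing $x_u$ by $y_u$, and observes that each intermediate vector $\vec{z}$ satisfies $\Sigma_{\vec{z}}(u)\leq \Sigma_{\vec{y}}(u)\leq 1$ (since $\vec{z}$ is coordinatewise between $\vec{x}$ and $\vec{y}$), so the hypothesis of (1) holds at each step. You instead construct a single witnessing order $\prec'$ in one shot by pushing all of $U$ after all of $V\setminus U$. Your approach is more direct and avoids the sequential bookkeeping; the paper's approach buys modularity by reducing everything to repeated applications of (1). Both are equally short, and neither yields a stronger statement than the other.
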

\begin{proof}
  We first prove claim (1). Consider a global order $\prec_0$ that is
  obtained from $\prec$ by moving node $v$ to the very end of the
  order (without changing the relative order of any of the other
  nodes). We claim that $\vec{y}$ is a greedy packing w.r.t.\ the
  global order $\prec_0$. For all nodes $u\neq v$, the condition of
  \Cref{def:greedypacking} follows because $\vec{x}$ is a greedy
  packing w.r.t.\ the global order $\prec$. For node $v$, the
  condition follows because
  $\Sigma_{\vec{y}}(v) = \Sigma_{\vec{x}}(v) + y_v - x_ v \leq 1$.

  Claim (2) follows from claim (1) by sequentially processing the
  nodes in $U$. We start with vector $\vec{x}$ and when processing
  node $u$, we replace the current value $x_u$ of node $u$ by
  $y_u$. Because for all nodes $y_v\geq x_v$, for each of the
  intermediate vectors $\vec{z}$, we have
  $\Sigma_{\vec{z}}(u)\leq 1$ for all $u\in U$. The conditions
  for claim (1) are therefore satisfied for each node $u\in U$.

  Finally, to prove claim (3), observe that because we have
  $\Sigma_{\vec{x}}(u)\leq 1/2$ for all nodes $u\in U$, the
  local sum for the nodes in $u$ is still bounded by $1$ even if
  we double the values of all nodes $v\in V$. Claim (3) therefore
  follows as a special case of claim (2).
\end{proof}

In order to recursively compute a large greedy packing, we will need
to be able to add a new greedy packing to an existing one. The
next lemma shows in which way this can be done. In the following,
given a real-valued non-negative node vector $\vec{x}$ and a parameter
$c>0$, we call a node $v\in V$ $c$-saturated if
$\Sigma_{\vec{x}}(v) \geq c$.

\begin{lemma}\label{lemma:greedycombine}
  Let $G=(V,E)$ be a graph and let $\vec{x}$ be a greedy packing of
  $G$. Further, let $F\subseteq V$ be the nodes of $G$ that are not
  $1/2$-saturated w.r.t.\ $\vec{x}$ and let $\vec{y}$ be a greedy
  packing of $G$ for which $y_v>0$ only for $v\in F$. Then, the
  fractional assignment $\vec{z} := \vec{x} + \vec{y}/2$ is a greedy
  packing of $G$.
\end{lemma}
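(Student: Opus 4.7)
The plan is to exhibit an explicit witness order for $\vec{z}$, obtained by concatenating the witness orders for $\vec{x}$ and $\vec{y}$ in the correct direction. Let $\prec_x$ and $\prec_y$ be global orders as in \Cref{def:greedypacking} for the greedy packings $\vec{x}$ and $\vec{y}$, respectively. Define $\prec_z$ by listing all nodes of $V\setminus F$ first (in the order induced by $\prec_x$) and then all nodes of $F$ (in the order induced by $\prec_y$). The key design choice is that every $F$-node comes after every non-$F$-node; this is precisely what will let me absorb the full $\vec{x}$-mass around any $v\in F$ into the $1/2$ slack granted by the non-$1/2$-saturation of $v$ in $\vec{x}$.

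I then verify the inequality $z_v+\sum_{u\in N(v):\,u\prec_z v} z_u\le 1$ in two cases. For $v\notin F$, we have $y_v=0$ and every $\prec_z$-predecessor of $v$ in $N(v)$ lies in $V\setminus F$ and precedes $v$ in $\prec_x$; on all these nodes $z_u=x_u$, so the inequality reduces to a truncation (by dropping non-negative terms) of the greedy condition for $\vec{x}$ at $v$, and is therefore at most $1$.

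For $v\in F$, split the sum into its $\vec{x}$-part $A$ and its $\vec{y}/2$-part $B$. Bound $A$ by extending the summation to all of $N(v)$: $A\le x_v+\sum_{u\in N(v)} x_u=\Sigma_{\vec{x}}(v)<1/2$ because $v\in F$. For $B$, use that $y_u=0$ outside $F$ to rewrite $B=\tfrac{1}{2}\bigl(y_v+\sum_{u\in N(v):\,u\prec_y v} y_u\bigr)\le 1/2$ by the greedy condition for $\vec{y}$ at $v$. Adding the two bounds gives $A+B<1$, so $\vec{z}$ is a greedy packing witnessed by $\prec_z$. The only real subtlety is the order of concatenation: placing $F$-nodes after non-$F$-nodes is what aligns the accumulated $\vec{x}$-mass with the $1/2$ slack available at each $v\in F$, and once that choice is made both case checks are direct consequences of the greedy conditions for $\vec{x}$ and $\vec{y}$ applied to their own witness orders, so \Cref{lemma:greedybasic} is not needed here.
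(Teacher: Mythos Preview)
Your proof is correct and follows essentially the same approach as the paper: you build the witness order for $\vec{z}$ by concatenating $V\setminus F$ (ordered by $\prec_x$) before $F$ (ordered by $\prec_y$), and then verify the greedy condition in the two cases exactly as the paper does. Your treatment of the case $v\notin F$ is in fact slightly more direct than the paper's, since you observe immediately that the $\prec_z$-sum at $v$ is a truncation of the original $\prec_x$-sum for $\vec{x}$, whereas the paper first argues that $\prec_x$ may w.l.o.g.\ be taken to place all of $V\setminus F$ before $F$.
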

\begin{proof}
  Assume that $\vec{x}$ is a greedy packing of $G$ w.r.t.\
  to the global order $\prec_x$ on $V$ and that $\vec{y}$ is a greedy
  packing of $G$ w.r.t.\ to the global order $\prec_y$ on
  $F$. Note that for all nodes $v\in F$, we have
  $\Sigma_{\vec{x}}(v)< 1/2$. Therefore for the greedy packing
  $\vec{x}$ the condition of \Cref{def:greedypacking} is satisfied for
  the nodes in $F$ for every choice of the global order
  $\prec_x$. W.l.o.g., we can therefore assume that $\prec_x$ first
  orders all nodes in $V\setminus F$ and it then orders the nodes in
  $F$ in an arbitrary way. We can thus define a global order $\prec$
  on the nodes $V$ as a combination of $\prec_x$ and $\prec_y$ in an
  obvious way. The order $\prec$ first orders the nodes in
  $V\setminus F$ in the same order as $\prec_x$ and it then orders the
  nodes in $F$ in the same order as $\prec_y$. We show that
  $\vec{z} = \vec{x} + \vec{y}/2$ is a greed packing of $G$
  w.r.t.\ to the global order $\prec$. For each node
  $v\in V\setminus F$, the condition of \Cref{def:greedypacking} is
  satisfied because $\vec{x}$ is a greedy packing. For a node
  $v\in F$, we have
  \[
  z_v + \sum_{u \in N(v) : u \prec v} z_u \ =\ 
  \frac{y_v}{2} + \sum_{u\in N(v)\cap F: u \prec_y v} \frac{y_u}{2} +
  x_v + \sum_{u\in N(v) : u \prec_x v} x_u\ <\ 
  \frac{1}{2} + \frac{1}{2}\ =\  1,
  \]
  and therefore the claim of the lemma follows.
\end{proof}

As in the case of computing matchings in hypergraphs, our goal is to
start with a large fractional greedy packing and to gradually round
the fractional solution to an integer one of approximately the same
size. For a given parameter $\delta>0$, we call a non-negative
real-valued node vector $\vec{x}$ $\delta$-fractional if for every
node $v\in V$, either $x_v=0$ or $x_v\geq\delta$. Given a parameter
$L>1$, we show how to recursively turn a $\delta$-fractional greedy
packing into an $(L\delta)$-fractional greedy packing of a similar
size. The proof follows the same basic structure as the rounding for
fractional hypergraph matchings in
\Cref{sec:basicRounding,sec:recursiveRounding}. The following lemma
provides a way to upper bound the size of a greedy packing
in terms of another greedy packing. We will use it to
compare the size of a computed $(L\delta)$-fractional greedy packing to
the existing $\delta$-fractional greedy packing.

\begin{lemma}\label{lemma:comparepackings}
  Let $\vec{x}$ and $\vec{y}$ be two
  greedy packings of a graph $G=(V,E)$ with neighborhood independence at most $r$. Further,
  let $U\subseteq V$ be the set of nodes of $V$ for which
  $\Sigma_{\vec{y}}(v)\geq 1/2$. We have
  \[
  \sum_{v\in V} y_v \geq \frac{1}{2r}\sum_{v\in U} x_v.
  \]
\end{lemma}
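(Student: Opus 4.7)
The plan is to follow the same double-counting blame argument that proved \Cref{lemma:halftight-2rapprox} and \Cref{lemma:y-to-x}, but applied to greedy packings and combined with \Cref{lemma:greedypacking} in place of the rank bound. Concretely, I would give each node $v \in U$ a budget of $x_v$ dollars, and have $v$ redistribute this budget to nodes in $N^+(v)$ in proportion to the $y$-values there. That is, each $u \in N^+(v)$ receives
\[
x_v \cdot \frac{y_u}{\Sigma_{\vec{y}}(v)}
\]
dollars from $v$. Since $v\in U$ means $\Sigma_{\vec{y}}(v) \geq 1/2$, the amount $u$ receives from $v$ is at most $2 x_v\, y_u$. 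A brief sanity check: if $y_u = 0$ for every $u \in N^+(v)$ then $\Sigma_{\vec{y}}(v) = 0 < 1/2$, contradicting $v\in U$, so the distribution is well-defined on $U$.

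Next I would bound the total received by each node $u \in V$. Summing the per-$v$ contributions,
\[
\mathrm{received}(u) \;\leq\; \sum_{v \in U \cap N^+(u)} 2\, x_v\, y_u \;=\; 2\, y_u \!\!\sum_{v \in U \cap N^+(u)}\!\! x_v \;\leq\; 2\, y_u\, \Sigma_{\vec{x}}(u).
\]
Here I use the key fact that $\vec{x}$ is itself a greedy packing on a graph of neighborhood independence at most $r$, so \Cref{lemma:greedypacking} applies and gives $\Sigma_{\vec{x}}(u) \leq r$. Hence $\mathrm{received}(u) \leq 2 r\, y_u$. Summing the budget two ways,
\[
\sum_{v \in U} x_v \;=\; \sum_{u \in V} \mathrm{received}(u) \;\leq\; 2r \sum_{u \in V} y_u,
\]
which rearranges to the claim.

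I do not expect a serious obstacle: the only subtlety is that \Cref{lemma:greedypacking} has to be invoked on $\vec{x}$ (not on $\vec{y}$), which is precisely where the bounded neighborhood independence hypothesis enters; the factor $2$ comes from the half-saturation threshold $1/2$ that defines $U$, and the factor $r$ from \Cref{lemma:greedypacking}, yielding the asserted $1/(2r)$ ratio. The structure is completely parallel to the earlier half-tightness arguments, with ``at least one half-tight endpoint of each edge'' replaced by ``each $v \in U$ is half-tight in $\vec{y}$'' and with the rank-$r$ bound on hyperedge size replaced by the bound $\Sigma_{\vec{x}}(u) \leq r$.
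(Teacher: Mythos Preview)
Your argument is correct and is essentially identical to the paper's own proof: both distribute each $x_v$ for $v\in U$ to the $1$-neighborhood of $v$ in proportion to the $y$-values, use $\Sigma_{\vec{y}}(v)\geq 1/2$ to pick up the factor $2$, and then invoke \Cref{lemma:greedypacking} on $\vec{x}$ to bound $\Sigma_{\vec{x}}(u)\leq r$. The only difference is notational (the paper swaps the roles of the letters $u$ and $v$ and sums over $V_y=\{v:y_v>0\}$ rather than all of $V$, which is immaterial).
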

\begin{proof}
  To prove the lemma, we use a blaming argument. Let $V_y$ be the set
  of nodes for which $y_v>0$. We distribute all the $x_v$-values of
  nodes in $U$ among the nodes in $V_y$. That is, for each node
  $v\in V_y$, we define a variable $\alpha_v$ such that
  $\sum_{v\in V_y} \alpha_v = \sum_{v\in U} x_v$. More concretely, we define the values
  $\alpha_v$ for each node $v\in V_y$ as follows:
  \begin{equation}\label{eq:basicsize}
    \alpha_v := \sum_{u\in N^+(v)\cap U} \!\!\! x_u \cdot
    \frac{y_v}{\Sigma_{\vec{y}}(u)}\ \leq\ 
    \sum_{u\in N^+(v)\cap U} \!\!\! x_u\cdot 2y_v\ =\
    2 y_v\cdot \Sigma_{\vec{x}}(v).
  \end{equation}
  Hence, every node $u\in U$ distributes its value $x_u$ among the
  neighboring nodes in $v\in V_y$ proportionally to the values
  $y_v$. Because $\vec{x}$ is a greedy packing of $G$,
  \Cref{lemma:greedypacking} implies that
  $\Sigma_{\vec{x}}(v)\leq r$ for all $v\in V$. Together
  with \Cref{eq:basicsize}, we thus get $\alpha_v\leq 2r y_v$ for all
  $v\in V_y$ and the claim of the lemma follows.
\end{proof}

\subsection{Basic Rounding of Greedy
  Packings}\label{sec:BNIbasicRounding}

\begin{lemma}[Basic Rounding of Greedy
  Packings]\label{lemma:BNIbasicRounding}
  Let $G=(V,E)$ be a graph with neighborhood independence at most
  $r\geq 1$. Further, assume that a parameter $L>1$, an integer
  $d\geq L$, and a $(1/d)$-fractional greedy packing $\vec{x}$ of
  $G$ are given. If in addition, an $O(\Delta^2)$-vertex-coloring of
  $G$ is given, there is an
  $O((rL)^2 + \log^*\Delta + \log d)$-round deterministic
  distributed algorithm that computes an $(L/d)$-fractional
  greedy packing $\vec{y}$ for which $y_v>0$ only if $x_v>0$ and
  such that $\vec{y}$ is of size
  \[
  \sum_{v\in V} y_v \geq \frac{1}{2r} \sum_{v\in V} x_v.
  \]
\end{lemma}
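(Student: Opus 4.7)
The plan is to mimic the basic-rounding procedure for hypergraph matchings in \Cref{lemma:hypergraphbasicRounding}, replacing each appeal to matching-LP constraints with the corresponding greedy-packing preservation properties from \Cref{lemma:greedybasic}. Let $V_x := \set{v \in V : x_v > 0}$ and let $G_x := G[V_x]$ be the subgraph induced on the support of $\vec{x}$. Because $\vec{x}$ is $(1/d)$-fractional, every $v \in V_x$ satisfies $|N^+(v) \cap V_x| \leq d\cdot\Sigma_{\vec{x}}(v) \leq rd$ by \Cref{lemma:greedypacking}, so $G_x$ has maximum degree at most $rd - 1$. Seeding the defective-coloring algorithm of Kuhn~\cite{kuhn2009weak} with the given $O(\Delta^2)$-vertex-coloring, I compute in $O(\log^* \Delta)$ rounds a vertex-coloring of $G_x$ with $O\paren{(rL)^2}$ color classes such that every node has at most $\floor{d/(2L)}-1$ same-color neighbors in $G_x$; in the degenerate regime $d/(2L)\leq 1$, I instead use a proper coloring of $G_x$, which still needs only $O((rd)^2)=O((rL)^2)$ colors since then $d=O(L)$.

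I initialize $y_v:=0$ everywhere and then sweep through the color classes one at a time. When processing color $i$, I simultaneously set $y_v := L/d$ at every node $v\in V_x$ of color $i$ that is currently \emph{not} half-tight in $\vec{y}$, i.e.\ for which $\Sigma_{\vec{y}}(v)<1/2$; already half-tight nodes are skipped. For any such raised $v$, the number of nodes in $N^+(v)$ raised during this iteration is at most $d/(2L)$ (the at most $d/(2L)-1$ same-color neighbors of $v$ in $G_x$, together with $v$ itself), each by an additive $L/d$. Hence $\Sigma_{\vec{y}}(v)$ grows by at most $1/2$ and stays strictly below $1$. \Cref{lemma:greedybasic}(2), applied to the set of raised nodes, then guarantees that $\vec{y}$ remains a greedy packing after the iteration. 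By construction, $\vec{y}$ is $(L/d)$-fractional throughout and $y_v>0$ only if $v\in V_x$, i.e., only if $x_v>0$.

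After all colors have been processed, some nodes of $V_x$ may still be non-half-tight; each such node has $y_v = L/d$, since it must have been raised at the iteration for its own color (it was not half-tight at that time). I then perform $\ceil{\log(d/L)}$ doubling iterations, in each of which $y_v$ is simultaneously doubled at every currently non-half-tight node. Since we only double nodes $v$ with $\Sigma_{\vec{y}}(v)<1/2$, \Cref{lemma:greedybasic}(3) implies that doubling preserves the greedy-packing property. A node $v\in V_x$ that survived all $\ceil{\log(d/L)}$ doublings as non-half-tight would have $y_v = (L/d)\cdot 2^{\ceil{\log(d/L)}} \geq 1$, forcing $\Sigma_{\vec{y}}(v)\geq 1 > 1/2$, a contradiction. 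Hence every $v\in V_x$ is half-tight in $\vec{y}$ when the algorithm halts.

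Applying \Cref{lemma:comparepackings} to the greedy packings $\vec{x}$ and $\vec{y}$, the set $U = \set{v : \Sigma_{\vec{y}}(v) \geq 1/2}$ contains $V_x$, so $\sum_{v\in V} y_v \geq \frac{1}{2r}\sum_{v\in U} x_v = \frac{1}{2r}\sum_{v\in V} x_v$. The total round complexity adds $O(\log^*\Delta)$ for the defective coloring, $O((rL)^2)$ for the color sweep, and $O(\log(d/L))$ for the doublings, matching the claimed $O((rL)^2 + \log^*\Delta + \log d)$. The main subtlety, inherited from the hypergraph matching analog, is calibrating the defect at about $d/(2L)$: it must be small enough that raising an entire color class at once cannot push any raised node's local $\vec{y}$-sum above $1$ (so that \Cref{lemma:greedybasic}(2) applies), yet large enough that only $O((rL)^2)$ colors are needed.
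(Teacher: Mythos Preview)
Your proof is correct and follows essentially the same approach as the paper's own proof: bound the degree of $G_x$ via \Cref{lemma:greedypacking}, compute a $d/(2L)$-defective coloring of $G_x$ with $O((rL)^2)$ colors using \cite{kuhn2009weak}, sweep through colors raising non-half-tight nodes to $L/d$ while invoking \Cref{lemma:greedybasic}(2), then double non-half-tight nodes for $O(\log(d/L))$ rounds invoking \Cref{lemma:greedybasic}(3), and finish with \Cref{lemma:comparepackings}. If anything, your accounting for the defect parameter (taking it as $\lfloor d/(2L)\rfloor - 1$ so that a raised node together with its same-color neighbors number at most $d/(2L)$) is slightly more careful than the paper's, which sets the defect to $d/(2L)$ and would then need the extra $+L/d$ slack absorbed.
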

\begin{proof}
  Let $V_x$ be the set of nodes $v\in V$ for which $x_v>0$ and let
  $G_x=G[V_x]$ be the subgraph of $G$ induced by $V_x$. Note that
  because $\vec{x}$ is a greedy packing,
  \Cref{lemma:greedypacking} implies that for every node $v\in V$,
  $\Sigma_{\vec{x}}(v)\leq r$ and since $\vec{x}$ is
  $(1/d)$-fractional, this implies that $G_x$ has maximum degree at
  most $r\cdot d$. We compute the $(L/d)$-fractional greedy
  packing $\vec{y}$ in two steps. In a first step, we compute an
  arbitrary $(L/d)$-fractional greedy packing $\vec{z}$ of $G$ by
  assigning value $z_v=L/d$ to a subset of the nodes $v\in V_x$. In
  the second step, we obtain $\vec{y}$ from $\vec{z}$ by iteratively
  doubling the value of each node that is not $(1/2)$-saturated at
  most $O(\log d)$ times.

  For the first step, we apply the deterministic defective coloring
  algorithm of Kuhn \cite{kuhn2009weak}. For a $C$-vertex-colored graph $G$
  of maximum degree $\Delta$ and a parameter $p\geq 1$, the algorithm
  allows to compute a $p$-defective $O((\Delta/p)^2)$-coloring of $G$
  in time $O(\log^*C)$. That is, the algorithm assigns one of
  $O((\Delta/p)^2)$ colors to each node of $G$ such that the subgraph
  induced by each of the colors has maximum degree at most $p$. We
  apply the defective coloring algorithm of \cite{kuhn2009weak} to the
  graph $G_x$ with parameter $p=d / (2L)$. Because we are given an
  $O(\Delta^2)$-coloring of $G$ (and thus also of $G_x$), the time for
  computing this defective coloring is $O(\log^*\Delta)$ and because
  the maximum degree of $G_x$ is at most $dr$, the number of colors of
  the defective coloring is at most $O((rL)^2)$.

  We now compute an initial $(L/d)$-fractional greedy packing
  $\vec{z}$ as follows. For all nodes $v\in V\setminus V_x$, we set
  $z_v=0$. For the nodes in $V_x$, we iterate through the
  $O((rL)^2)$ colors of the defective coloring of $G_x$ and 
	process all nodes of the same color in parallel. At the beginning,
  we set $z_v=0$ for all $v\in V_x$. When processing the nodes of
  colors $c$, for each node $v\in V_x$ of color $c$, we set $z_v=L/d$
  if and only if $\Sigma_{\vec{z}}(v) \leq 1/2$. Because each
  node of color $c$ has at most $d /(2L)$ neighbors of color $c$,
  this implies that even after this step, $\Sigma_{\vec{z}}(v) \leq 1$
  for all nodes of color $c$. Claim (2) of \Cref{lemma:greedybasic}
  therefore implies that throughout this process, vector $\vec{z}$
  remains a valid greedy packing. Because at the end all
  non-zero values of $\vec{z}$ are equal to $L/d$, clearly, $\vec{z}$
  is $(L/d)$-fractional. Note also that for all nodes $v\in V_x$ for
  which $z_v=0$, we have $\Sigma_{\vec{z}}(v)\geq 1/2$.

  To obtain the greedy packing $\vec{y}$ from $\vec{z}$ we
  first set $\vec{y}=\vec{z}$ and we then proceed in synchronous
  rounds. Let $V_y\subseteq V_x$ be the set of nodes for which
  $z_v>0$. In each round, each node $v\in V_y$ for which
  $\Sigma_{\vec{y}}(v)\leq 1/2$ doubles its value $y_v$. The process
  stops when $\Sigma_{\vec{y}}(v)\geq 1/2$ for all nodes $v\in V_y$.
  Because $y_v\geq 1/2$ implies that $\Sigma_{\vec{y}}(v)\geq 1/2$,
  this happens after at most $\log(d/L)\leq \log d$ rounds. Claim (3)
  of \Cref{lemma:greedybasic} implies that the vector $\vec{y}$
  remains a valid greedy packing throughout this process. 

  We therefore obtain an $(L/d)$-fractional greedy packing
  $\vec{y}$ where for each node $v\in V_x$, we have
  $\Sigma_{\vec{y}}(v)\geq 1/2$. \Cref{lemma:comparepackings} then
  shows that that
  $\sum_{v\in V} y_v \geq 1/(2r)\sum_{v\in V} x_v$, which
  concludes the proof.
\end{proof}

\subsection{Recursive Rounding of Greedy
  Packings}\label{sec:BNIrecursiveRounding}

We next explain a recursive method $\round(\vec{x},L)$ that given a
$\delta$-fractional greedy packing $\vec{x}$ of a graph $G=(V,E)$ of neighborhood independence $\leq r$
computes an $(L\delta)$-fractional greedy packing $\vec{y}$ of $G$
of size $\sum_{v\in V}y_v\geq \frac{1}{4r}\sum_{v\in V} x_v$ and
such that $y_v>0$ only if $x_v>0$. The method $\round(\vec{x},L)$ runs
in $16r$ phases. At the beginning, $\vec{y}=0$ and in each phase, some
values of $\vec{y}$ are increased. As soon as for some node $v\in V$,
$\Sigma_{\vec{y}}(v)\geq 1/2$, the value $y_v$ is not increased any
further. In each phase, the method therefore first defines a
$\delta$-fractional greedy packing $\vec{z}$ which is identical to
$\vec{x}$ on all nodes $v$ for which $\Sigma_{\vec{y}}(v)<1/2$ and which is
$0$ on all other nodes. On this vector $\vec{z}$, the method is called
recursively with parameter $\sqrt{2L}$, resulting in a
$(\sqrt{2L}\delta)$-fractional greedy packing
$\vec{z'}$. Afterwards, the method is again called recursively with
parameter $\sqrt{2L}$ on the vector $\vec{z'}$, resulting in a
$(2L\delta)$-fractional greedy packing $\vec{z''}$. Finally, the
vector $\vec{y}$ is updated by adding $\vec{z''}/2$ to it. The
algorithm is also summarized in \Cref{alg:recrounding}.

\begin{algorithm}[t]
  \caption{Recursive Rounding Algorithm $\round(\vec{x},L)$}
\label{alg:recrounding}
\SetKwInOut{Input}{Input}\SetKwInOut{Output}{Output}

\Input{A graph $G=(V,E)$ with a $\delta$-fractional greedy
  packing $\vec{x}$, a parameter $L<1/2\delta$}
\Output{A $\delta L$-fractional packing $\vec{y}$ of $G$}
\BlankLine
\uIf{$L\leq 4$}{run the basic rounding algorithm of
  \Cref{lemma:BNIbasicRounding}}
\Else{
  $\vec{y} := \vec{0}$\;
  \For {phase $1,\ldots,16r$}{
    $V_z := \set{v\in V : \Sigma_{\vec{y}}(v) < 1/2}$\;
    define $\vec{z}$ s.t.\ $z_v = x_v$ for $v\in V_z$ and $z_v=0$
    otherwise\;
    $\vec{z'} := \round(\vec{z}, \sqrt{2L})$\;
    $\vec{z''} := \round(\vec{z'}, \sqrt{2L})$\;
    $\vec{y} := \vec{y} + \vec{z''}/2$
  }
}
\end{algorithm}

The following lemma shows that the algorithm $\round(\vec{x}, L)$
computes an $(L\delta)$-fractional greedy packing of size within a
factor $4r$ of the size of $\vec{x}$.

\begin{lemma}\label{lemma:BNIrecursiveRounding}
  Assume that we are given a graph $G=(V,E)$ with neighborhood independence at most $r$, parameters $0<\delta<1$ and
  $L<1/(2\delta)$, and a $\delta$-fractional greedy packing
  $\vec{x}$ of $G$. The algorithm $\round(\vec{x},L)$ computes an
  $(L\delta)$-fractional greedy packing $\vec{y}$ for which $y_v>0$
  only if $x_v>0$ and such that
  \begin{equation}\label{eq:recursiveapprox}
  \sum_{v\in V}y_v \geq \frac{1}{4r}\sum_{v\in V} x_v.
  \end{equation}
\end{lemma}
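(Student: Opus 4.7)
The plan is to prove the lemma by induction on $L$ (equivalently, on the depth of recursion of $\round(\vec{x},L)$). The base case $L\leq 4$ is immediate from \Cref{lemma:BNIbasicRounding} invoked with $d=1/\delta$, which returns an $(L\delta)$-fractional greedy packing $\vec{y}$ supported in $\{v:x_v>0\}$ of size at least $\frac{1}{2r}\sum_{v} x_v \geq \frac{1}{4r}\sum_{v} x_v$. So I may assume $L>4$ and that the claim holds for parameter $\sqrt{2L}$.

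For the inductive step there are three properties of the output $\vec{y}$ to verify: it is a greedy packing; it is $(L\delta)$-fractional with $y_v>0 \Rightarrow x_v>0$; and it satisfies the size bound. Validity I would establish by a secondary induction over the $16r$ phases. At the start of a phase, the current $\vec{y}$ is a greedy packing by the previous phase's conclusion. Applying the inductive hypothesis on $L$ to the two calls $\round(\vec{z},\sqrt{2L})$ and $\round(\vec{z'},\sqrt{2L})$, both $\vec{z'}$ and $\vec{z''}$ are greedy packings whose supports are contained in the support of $\vec{z}$, namely in $V_z=\{v:\Sigma_{\vec{y}}(v)<1/2\}$. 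This places us in the hypothesis of \Cref{lemma:greedycombine} (with the current $\vec{y}$ playing the role of $\vec{x}$ and $\vec{z''}$ playing the role of $\vec{y}$), so the updated vector $\vec{y}+\vec{z''}/2$ is again a greedy packing. Fractionality and the support condition chain through the recursive calls: $\vec{z''}$ is $(2L\delta)$-fractional supported in $V_z\subseteq\{v:x_v>0\}$, hence $\vec{z''}/2$ is $(L\delta)$-fractional, and adding it to $\vec{y}$ preserves both $(L\delta)$-fractionality and the inclusion of $\mathrm{supp}(\vec{y})$ in $\{v:x_v>0\}$.

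The heart of the argument is the size bound. Suppose that at the start of some phase we still have $\sum_{v} y_v < \frac{1}{4r}\sum_{v} x_v$. Both $\vec{x}$ and $\vec{y}$ are greedy packings, so applying \Cref{lemma:comparepackings} with $U=V\setminus V_z$ (the set of $(1/2)$-saturated nodes of $\vec{y}$) yields $\sum_{v\notin V_z} x_v \leq 2r\sum_{v} y_v < \frac{1}{2}\sum_{v} x_v$, and thus $\sum_{v} z_v \geq \frac{1}{2}\sum_{v} x_v$. Chaining the inductive hypothesis through the two recursive $\sqrt{2L}$-factor calls gives $\sum_{v} z''_v \geq \frac{1}{16r^2}\sum_{v} z_v$, so this phase increases $\sum_{v} y_v$ by at least $\frac{1}{2}\sum_{v} z''_v \geq \frac{1}{64r^2}\sum_{v} x_v$. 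Since $\sum_{v} y_v$ is monotone nondecreasing across phases, after $16r$ phases we have $\sum_{v} y_v \geq 16r\cdot \frac{1}{64r^2}\sum_{v} x_v = \frac{1}{4r}\sum_{v} x_v$, proving \eqref{eq:recursiveapprox}.

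The main obstacle I anticipate is the careful support bookkeeping that justifies each appeal to \Cref{lemma:greedycombine}: one must propagate the support guarantee $z'_v>0 \Rightarrow z_v>0$ (and similarly for $\vec{z''}$) through both recursive calls before combining, which is what allows us to claim that $\vec{z''}$ lives entirely on nodes that are still not $(1/2)$-saturated in the current $\vec{y}$. The role of bounded neighborhood independence is concentrated inside \Cref{lemma:greedypacking} and \Cref{lemma:comparepackings} (the latter being what powers the blaming step in the size bound); once those are in hand, the recursion proceeds purely by bookkeeping, in direct parallel with the hypergraph-matching rounding of \Cref{lemma:recursiveRounding}.
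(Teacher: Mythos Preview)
Your proposal is correct and mirrors the paper's proof essentially step for step: induction on the recursion depth with the base case handled by \Cref{lemma:BNIbasicRounding}, validity of $\vec{y}$ maintained phase-by-phase via \Cref{lemma:greedycombine} (using the support chain $z''_v>0\Rightarrow z'_v>0\Rightarrow z_v>0$), and the size bound obtained by applying \Cref{lemma:comparepackings} to the $(1/2)$-saturated set to show $\sum_v z_v\geq \tfrac{1}{2}\sum_v x_v$ and then cascading the $\tfrac{1}{4r}$ guarantee through the two recursive calls. Your identification of the support bookkeeping as the delicate point and of \Cref{lemma:comparepackings} as the place where bounded neighborhood independence enters is exactly right.
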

\begin{proof}
  Note that for $L\leq 4$, the algorithm directly applies the basic
  rounding algorithm of \Cref{lemma:BNIbasicRounding} and the claims of
  the lemma thus directly hold by applying
  \Cref{lemma:BNIbasicRounding}. Let us therefore assume that $L>4$ and
  let us therefore (inductively) also assume that the recursive calls
  to $\round(\vec{z}, \sqrt{2L})$ and $\round(\vec{z'}, \sqrt{2L})$
  satisfy the claims of the lemma.

  We first show that $\vec{y}$ is an $(L\delta)$-fractional greedy
  packing of $G$ and that $y_v>0$ only if $x_v>0$. Note that
  $z_v>0$ only if $x_v>0$ and we have $z_v'>0$ only if $z_v>0$ and
  $z_v''>0$ only if $z_v'>0$ because that is guaranteed by the
  recursive calls to $\round(\vec{z}, \sqrt{2L})$ and
  $\round(\vec{z'}, \sqrt{2L})$. Because $y_v$ is only increased for
  nodes $v\in V$ for which $z_v''>0$, we therefore have $y_v>0$ only
  if $x_v>0$ throughout the algorithm. To see that $\vec{y}$ is
  $(L\delta)$-fractional, note that because $\vec{x}$ is
  $\delta$-fractional, the recursive calls to
  $\round(\vec{z},\sqrt{2L})$ and $\round(\vec{z'},\sqrt{2L})$
  guarantee that $\vec{z'}$ is $(\sqrt{2L}\delta)$-fractional and
  $\vec{z''}$ is $(2L\delta)$-fractional. We update $\vec{y}$ by adding
  $\vec{z''}/2$ and thus an $(L\delta)$-fractional vector to it. Thus,
  $\vec{y}$ is $(L\delta)$-fractional at all times during the execution of
  \Cref{alg:recrounding}. We prove that $\vec{y}$ at all times is a
  greedy packing by induction on the number of
  phases. Clearly at the beginning when $\vec{y}=\vec{0}$, $\vec{y}$
  is a greedy packing. Also, whenever, $\vec{y}$ is
  updated, we add $\vec{z''}/2$ to it. Note that because $\vec{z''}$
  is the result of the call to $\round(\vec{z'},\sqrt{2L})$,
  $\vec{z''}$ is a greedy packing. Further, $z_v''>0$
  only where $z_v>0$ and thus only for nodes $v$ where
  $\Sigma_{\vec{y}}(v)<1/2$ at the beginning of the respective
  phase. It therefore follows directly from \Cref{lemma:greedycombine}
  that $\vec{y} + \vec{z''}/2$ is a greedy packing of
  $G$.

  It thus remains to show \Cref{eq:recursiveapprox}. As long as
  \Cref{eq:recursiveapprox} does not hold, at the beginning of each of
  the $16r$ phases, we have
  $\sum_{v\in V} x_v > 4r\sum_{v\in V}y_v$. As in
  \Cref{alg:recrounding}, let $V_z$ be the set of nodes for which
  $\Sigma_{\vec{y}}(v) < 1/2$ and $\bar{V_z}:=V\setminus V_z$ be
  the set of nodes for which $\Sigma_{\vec{y}}(v) \geq 1/2$. From
  \Cref{lemma:comparepackings}, we get that
  $\sum_{v\in \bar{V_z}}x_v \leq 2r\sum_{v\in V} y_v$ and we thus
  have
  $\sum_{v\in V}z_v = \sum_{v\in V_z}x_v > \frac{1}{2}\sum_{v\in
    V} x_v$.
  From the guarantees of the recursive calls to
  $\round(\vec{z},\sqrt{2L})$ and $\round(\vec{z'},\sqrt{2L})$, the
  size of $\vec{z''}/2$ that we add to $\vec{y}$ is thus
  \[
  \frac{1}{2}\sum_{v\in V} z_v''\ \geq\ 
  \frac{1}{8r}\sum_{v\in V} z_v'\ \geq\ 
  \frac{1}{32r^2}\sum_{v\in V} z_v\ \geq\
  \frac{1}{64r^2} \sum_{v\in V} x_v.
  \]
  After $16r$ phases, we therefore have
  $\sum_{v\in V} y_v \geq \frac{16r}{64r^2}\sum_{v\in V} x_v$.
\end{proof}

\begin{lemma}\label{lemma:BNIrecursiveComplexity} 
  The algorithm $\round(\vec{x}, L)$ has round complexity $O((r^2 + \log \Delta) \log^{5+\log r} L)$.
\end{lemma}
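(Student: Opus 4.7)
The plan is to mirror almost verbatim the complexity analysis of \Cref{lemma:recursiveComplexity}, since \Cref{alg:recrounding} has the same recursive skeleton as the rounding algorithm for hypergraph matchings, and the per-round costs have the same structure.

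First I would extract the recurrence from one phase of $\round(\vec{x}, L)$. The phase performs two recursive calls to $\round(\cdot, \sqrt{2L})$, and the remaining work --- determining $V_z$ via a single round in which every node learns enough to compute $\Sigma_{\vec{y}}(v)$, locally constructing $\vec{z}$ from $\vec{x}$, and applying the purely local update $\vec{y} \gets \vec{y} + \vec{z''}/2$ --- costs only $O(1)$ rounds in \local. Since there are $16r$ phases, letting $R(L)$ denote the round complexity of $\round(\cdot, L)$, I obtain
\[
R(L) \;\leq\; 16r \cdot \bigl(2\, R(\sqrt{2L}) + O(1)\bigr).
\]
The base case is $L \leq 4$, where the algorithm defers to \Cref{lemma:BNIbasicRounding} and incurs $R(L) = O(r^2 + \log^*\Delta + \log d) = O(r^2 + \log \Delta)$.

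Next I would unroll the recurrence. The iteration $L_{i+1} = \sqrt{2 L_i}$ contracts $\log L_i - 1$ by a factor of two per step, so $\log L_i - 1 = (\log L - 1)/2^i$, and after $i^\star = O(\log \log L)$ levels the argument drops to a constant. Each level multiplies the cost by $32r$, giving
\[
R(L) \;=\; O\bigl((32r)^{\log \log L} \cdot (r^2 + \log \Delta)\bigr).
\]
A clean rewrite then finishes the job: $(32r)^{\log \log L} = 2^{\log(32r) \cdot \log \log L} = (\log L)^{\log(32r)} = (\log L)^{5 + \log r}$, using $\log 32 = 5$, which yields the claimed $O\bigl((r^2 + \log \Delta)\log^{5 + \log r} L\bigr)$ bound.

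I do not anticipate any substantive obstacle, as the argument is isomorphic in form to \Cref{lemma:recursiveComplexity} and can either cite or re-invoke the same inductive calculation on $\log\log L$. The only point worth double-checking is that the non-recursive overhead per phase really is $O(1)$ rounds: forming $V_z$ uses a single \local round since each node only needs the $\vec{y}$-values of its neighbors, and neither \Cref{lemma:greedybasic} nor \Cref{lemma:greedycombine} hides any additional communication in the update --- both are purely analytical guarantees about the resulting vector. Once that is confirmed, the arithmetic above immediately delivers the stated bound.
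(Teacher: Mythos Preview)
Your proposal is correct and follows essentially the same approach as the paper: the paper's proof simply observes that the argument is identical to that of \Cref{lemma:recursiveComplexity}, writes down the same recurrence $R(L)\leq 16r(R(\sqrt{2L})+R(\sqrt{2L})+O(1))$ with base case $R(L)=O(L^2r^2+\log\Delta)$ for $L=O(1)$, and appeals to \Cref{lem:recursion} for the inductive solution. Your explicit unrolling and the identity $(32r)^{\log\log L}=(\log L)^{5+\log r}$ match the intuitive explanation the paper gives in the proof of \Cref{lemma:recursiveComplexity}.
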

\begin{proof}
  The proof is identical to the proof of the analogous
  \Cref{lemma:recursiveComplexity} in the hypergraph matching
  analysis.  The complexity $R(L)$ of the rounding algorithm
  $\round(\vec{x}, L)$ follows the recursive inequality
  $R(L) \leq 16 r (R(\sqrt{2L}) + R(\sqrt{2L}) + O(1))$. Furthermore,
  we have the base case solution of $R(L) = O(L^2 r^2 + \log \Delta)$
  for $L=O(1)$. The claim can now be proved by an induction on $L$, as
  formalized in \Cref{lem:recursion}.
\end{proof}

\subsection{Wrap-up}\label{sec:wrap-upBNI}

We now use our rounding procedure to find the approximate independent
set of \Cref{lemma:MaxISApprox}.

\begin{proof}[Proof of \Cref{lemma:MaxISApprox}]
  Let $S^*$ be some maximum independent set of the given graph
  $G=(V,E)$ with neighborhood independence $\leq r$.  We first compute
  a $({1}/{\Delta})$-fractional greedy packing $\vec{x}$ of size at
  least $\frac{1}{2r}\cdot |S^*|$. To compute $\vec{x}$, we initially
  set $x_v=1/\Delta$ for all nodes $v\in V$. As this guarantees that
  $\Sigma_{\vec{x}}(v)\leq 1$ for all $v\in V$, this initial vector
  $\vec{x}$ clearly is a greedy packing. Now, we proceed in
  $\log \Delta$ synchronous rounds, where in each round, all nodes
  $v\in V$ for which $\Sigma_{\vec{x}}(v)<1/2$ double their value
  $x_v$. Claim (3) of \Cref{lemma:greedybasic} implies that the vector
  $\vec{x}$ remains a greedy packing throughout this process. Further,
  after at most $\log\Delta$ doubling steps, we certainly have
  $\Sigma_{\vec{x}}(v)\geq 1/2$ for all nodes $v\in
  V$.
  \Cref{lemma:comparepackings} therefore implies that
  $\sum_{v\in V} x_v \geq \frac{1}{2r}\sum_{v\in V} y_v$ for
  every greedy packing $\vec{y}$ of $G$. The claim that
  $\sum_{v\in V} x_v \geq |S^*|/(2r)$ now follows because for any
  independent $S$ set of $G$, setting $y_v=1$ for $v\in S$ and $y_v=0$
  otherwise results in a greedy packing $\vec{y}$.

  Given the greedy packing $\vec{x}$, we now apply the recursive
  rounding from \Cref{lemma:BNIrecursiveRounding} for
  $L = \Delta/\log^2 \Delta$.  This produces a
  $(1/\log^2 \Delta)$-fractional greedy packing $\vec{x'}$, in
  $O\paren{\log^{5 + \log r}\Delta(r^2 + \log \Delta)}$ rounds, whose
  size is a $(1/(8r^2))$-factor of $|S^*|$. To finish up the rounding,
  we apply the basic rounding of \Cref{lemma:BNIbasicRounding} for
  $L=\log^2 \Delta$, which runs in $O(r^2\log^2 \Delta)$ and produces
  a $1$-fractional --- i.e., integral --- greedy packing $\vec{x''}$
  of size is at least a $(1/(4r))$ times the size of $\vec{x'}$. Hence,
  the final produced integral greedy packing is a
  $(32r^3)$-approximation of the maximum independent set $S^*$.
\end{proof}

We compute a maximal independent set via iterative applications of
this independent set approximation.
\begin{proof}[Proof of \Cref{thm:BIMIS}]
  First, we pre-compute an $O(\Delta^2)$-vertex-coloring of $G$ in
  $O(\log^*n)$ rounds by Linial's algorithm
  \cite{linial1987LOCAL}. Then, iteratively, we apply the maximum
  independent set approximation procedure of \Cref{lemma:MaxISApprox}
  to the remaining graph. We add the found independent set $S$ to the
  independent set that we will output at the end, and remove $S$ along
  with its neighbors from the graph.

  In each iteration, the size of the maximum independent set of the
  remaining graph goes down to at least a factor of
  $1-1/(32r^3)$. This is because otherwise we could combine the
  independent set computed so far with the maximum independent set in
  the remaining graph to obtain an independent set larger than the
  maximum independent set in $G$. After at most $O(r^3 \log n)$ repetitions,
  the remaining independent set size is $0$, which means the
  remaining graph is empty. Hence, we have found a maximal
  independent set in 
  $O\left(\log^*n + r^3 \log n \left(r^2 \log^{6+ \log r} \Delta
    \right)\right) = O(r^5 \log^{6 + \log r} \Delta \cdot \log n)$
  rounds.
\end{proof}

\section{Other Implications: Matching Approximation, and Orientations}
\subsection{Approximating Maximum Matching in Graphs}
\begin{theorem} There is a deterministic distributed algorithm that computes a $(1+\eps)$-approximation of maximum matching in $O(\poly(\frac{1}{\eps})\cdot (\frac{1}{\eps} \log \Delta)^{7+\log 1/\eps})$ rounds, for any $\eps\in (0, 1]$.
\end{theorem}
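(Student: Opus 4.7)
The plan is to combine the classical Hopcroft--Karp augmenting-path framework with the hypergraph maximal matching algorithm of \Cref{thm:HypergraphMatching}, following the outline already sketched in \Cref{subsec:Implications}. Set $\ell := \Theta(1/\eps)$. Starting from $M := \emptyset$, I perform $O(1/\eps)$ phases, where each phase computes a maximal set $\mathcal{P}$ of vertex-disjoint $M$-augmenting paths of length at most $\ell$ in $G$ and updates $M \leftarrow M \oplus \bigl(\bigcup \mathcal{P}\bigr)$. The standard Hopcroft--Karp analysis then guarantees that after these $O(1/\eps)$ phases the current matching admits no augmenting path of length $\leq \ell$, which by the classical length-vs.-size inequality implies $|M| \geq (1-\eps)|M^*|$ for a maximum matching $M^*$.

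The heart of the proof is the reduction of a single phase to hypergraph maximal matching. Given the current matching $M$, I build an auxiliary hypergraph $H = H_M$ on the vertex set $V(G)$ whose hyperedges are exactly the vertex sets of all $M$-augmenting paths of length at most $\ell$ in $G$. The rank satisfies $r \leq \ell+1 = O(1/\eps)$; a matching of $H$ is precisely a collection of vertex-disjoint $M$-augmenting paths, and maximality transfers. The hyperedge-degree satisfies $\Delta_H \leq \Delta^{O(\ell)} = \Delta^{O(1/\eps)}$, since at most $\Delta^{\ell}$ $M$-alternating paths of length $\leq \ell$ pass through any vertex, and so $\log \Delta_H = O\bigl((1/\eps)\log \Delta\bigr)$. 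Plugging $r = O(1/\eps)$ and this value of $\log \Delta_H$ into \Cref{thm:HypergraphMatching} gives an $O\bigl(\poly(1/\eps)\cdot ((1/\eps)\log \Delta)^{7+\log(1/\eps)}\bigr)$ bound per phase, after absorbing the $r^5$, $\log n$, and the $O(\ell)$ simulation factor into the $\poly(1/\eps)$ prefactor; multiplying by the $O(1/\eps)$ phases and folding again yields the theorem.

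The main obstacle is that $H$ has degree $\Delta^{\Theta(1/\eps)}$ and cannot be materialized explicitly, so \Cref{thm:HypergraphMatching} must be simulated directly on the base graph $G$ in the \local model. The key observation that makes this possible is that each hyperedge of $H$ corresponds to a path of length $\leq \ell$ in $G$, so one round of communication in $H$ is simulated by $O(\ell) = O(1/\eps)$ rounds in $G$: hyperedge-internal messages are forwarded along the path, while each graph vertex coordinates its $\Delta^{O(\ell)}$ incident hyperedges in parallel within its $\ell$-neighborhood. I would then verify that every primitive invoked inside the proof of \Cref{thm:HypergraphMatching}---the initial $O(r^2 \Delta_H^2)$-edge-coloring of $H$ (via Linial's algorithm on the line graph of $H$), the defective coloring of \cite{kuhn2009weak}, and the fractional-matching rounding of \Cref{sec:rounding}---is local on $H$ in the sense of only requiring information about adjacent hyperedges, and hence carries over with the stated $O(\ell)$ simulation overhead. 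Checking this local-simulability of each sub-primitive (in particular, that the $O(r^2\Delta_H^2)$-edge-coloring of $H$ can be precomputed in $O(\ell \cdot \log^* \Delta_H)$ rounds of $G$) is the one step where a genuine amount of bookkeeping is needed; the rest of the argument is routine.
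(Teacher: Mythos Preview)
Your overall approach---Hopcroft--Karp augmenting paths reduced to hypergraph maximal matching via \Cref{thm:HypergraphMatching}, with each augmenting path encoded as a hyperedge and each round of $H$ simulated in $O(\ell)$ rounds of $G$---is exactly the paper's. The hypergraph construction, the rank bound $r=O(1/\eps)$, the degree bound $\Delta_H\le\Delta^{O(1/\eps)}$, and the simulation argument are all fine.

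However, there is a genuine gap in your round-complexity accounting. You write that you ``absorb the $r^5$, $\log n$, and the $O(\ell)$ simulation factor into the $\poly(1/\eps)$ prefactor.'' The $\log n$ factor coming from \Cref{thm:HypergraphMatching} cannot be absorbed this way: it is a parameter of the graph, not of $\eps$, and in general $\log n$ is not bounded by any function of $\frac{1}{\eps}\log\Delta$ (take $\Delta=O(1)$ and $\eps$ constant). A direct invocation of \Cref{thm:HypergraphMatching} therefore only yields the weaker bound $O\bigl(\poly(1/\eps)\cdot((1/\eps)\log\Delta)^{6+\log(1/\eps)}\cdot\log n\bigr)$, which is exactly what the paper obtains as its first, unrefined bound.

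The paper then removes the $\log n$ factor by an additional idea that your proposal is missing: one does not need a \emph{maximal} set of vertex-disjoint augmenting paths, only an \emph{almost} maximal one---specifically, one such that the number of vertices still lying on a surviving short augmenting path is at most a $\poly(\eps\Delta^{-1/\eps})$ fraction. Permanently discarding those vertices costs only a negligible fraction of the matching. With this relaxation, in the iterative scheme underlying \Cref{thm:HypergraphMatching} one runs only $O\bigl(r^3\log(\Delta^{1/\eps}/\eps)\bigr)$ iterations of the approximation procedure of \Cref{lemma:MaxMatchingApprox} instead of $O(r^3\log n)$, which replaces the $\log n$ factor by an additional $O((1/\eps)\log\Delta)$ factor and yields the claimed exponent $7+\log(1/\eps)$.
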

\begin{proof}
We first discuss an algorithm with complexity $O(\poly(\frac{1}{\eps})\cdot (\frac{1}{\eps} \log \Delta)^{6+\log 1/\eps} \cdot \log n)$, and then explain how a small change improves the complexity to $O(\poly(\frac{1}{\eps})\cdot (\frac{1}{\eps} \log \Delta)^{7+\log 1/\eps})$.

We follow a well-known approach of Hopcroft and Karp\cite{HopcroftKarp1973} of increasing the size of the matching using short augmenting paths.
Given a matching $M$, an augmenting path $P$ with respect to $M$ is a path that starts with an unmatched vertex, then alternates between non-matching and matching edges, and ends in an unmatched vertex. Augmenting the matching $M$ with this path $P$ means replacing the matching edges in $P \cap M$ with the edges $P\setminus M$. Notice that the result is a matching, with one more edge.

The approximation algorithm variant of Hopcroft and Karp\cite{HopcroftKarp1973} works as follows: For each $\ell =1$ to $2(1/\eps)-1$, we find a maximal set of vertex-disjoint augmenting paths of length $\ell$, and we augment them all. Hopcroft and Karp \cite{HopcroftKarp1973} show that this produces a $(1+\eps)$-approximation of maximum matching. See also \cite{lotkerMatchingImproved}, where they use the same method to obtain a $O(\log n/\eps^3)$-round randomized distributed algorithm for $(1+\eps)$-approximation of maximum matching, using the help of the $O(\log n)$ round randomized MIS algorithm of Luby \cite{luby1986simple}.

What remains to be discussed is how do we compute a maximal set of vertex-disjoint augmenting paths of a given length $\ell \leq 2(1/\eps)-1$. This can be easily formulated as a hypergraph maximal matching for a hypergraph of rank at most $1/\eps + 1$: create a hypergraph $H$ by including one vertex for each unmatched node and also one vertex for each matching edge. Then, each augmenting path is simply a hyperedge made of its elements, i.e., its unmatched vertices and its matching edges. This hypergraph has rank at most $1/\eps + 1$, maximum degree at most $\Delta^{2(1/\eps)}$, and the number of its vertices is no more than $n$. Moreover, a single round of communication on this hypergraph can be simulated in $O(1/\eps)$ rounds of the base graph, simply because each hyperedge spans a path of length at most $O(1/\eps)$.
Hence, we can directly apply \Cref{thm:HypergraphMatching} to compute a maximal matching of it, i.e., a maximal set of vertex-disjoint augmenting paths. This runs in $O(\frac{1}{\eps^6}(\frac{2}{\eps} \log \Delta)^{5+\log (1/\eps +1 )} \cdot \log n)$ rounds. This is the complexity of the algorithm for each one value of $\ell \in [1, 2/\eps -1]$. Thus, the overall complexity is at most $O(\frac{1}{\eps^7}(\frac{2}{\eps} \log \Delta)^{5+\log (1/\eps +1)} \cdot \log n)$. 

What remains to be discussed is removing the $\log n$ factor from the complexity. We next provide a sketch. In the above algorithm, we compute a maximal set of disjoint augmenting paths, and this precise maximality necessitates the $\log n$-factor (in our approach). However, we do not need such a precise maximality. It suffices if the set of disjoint augmenting paths is almost maximal, in particular in the sense that the fraction of the remaining augmenting paths is less than $\poly(\eps\Delta^{-1/\eps})$, say. Then, even if we permanently remove all nodes that have such a remaining augmenting path, we lose only a negligible $\poly({\eps}{\Delta})$-factor of the matching, which at the end only changes our approximation ratio to $1+2\eps$. To compute such an almost maximal set of disjoint augmenting paths, instead of $O(r^3 \log n)$ iterations in the proof of \Cref{thm:HypergraphMatching}, it suffices to have $O\big(r^3 \log(\poly(\Delta^{1/\eps}/\eps))\big)$ iterations. This brings down the overall complexity to 
$O(\poly(\frac{1}{\eps})\cdot (\frac{1}{\eps} \log \Delta)^{7+\log 1/\eps})$.
\end{proof}

\subsection{Orientations with Small Out-Degree}\label{subsec:orientation}
\begin{theorem} There is a deterministic distributed algorithm that computes an orientation with maximum out-degree at most  $\lceil\lambda (1+\eps)\rceil$ in $2^{O(\log^2 (\log n/\eps))}$ rounds, for any $\eps>0$, in any graph with arboricity at most $\lambda$.
\end{theorem}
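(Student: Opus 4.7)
The plan is to combine the low-out-degree orientation framework of Ghaffari and Su \cite{GS17} with our hypergraph maximal matching algorithm of \Cref{thm:HypergraphMatching}. That framework starts from an arbitrary orientation of $G$ and iteratively reverses a vertex-disjoint collection of augmenting paths of length at most $\ell = O(\log n/\eps)$, where each such path begins at a vertex whose out-degree exceeds $\lceil(1+\eps)\lambda\rceil$ and ends at a vertex of out-degree below $\lambda$. Reversing an augmenting path decreases the out-degree of its source by one and leaves all interior out-degrees unchanged, so it only improves the orientation. Ghaffari and Su show that $\poly(\log n/\eps)$ such outer iterations suffice to drive the maximum out-degree down to $\lceil(1+\eps)\lambda\rceil$, provided that each outer iteration finds essentially a maximal set of such augmenting paths.

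The only step in their framework with previously super-polylogarithmic deterministic complexity was precisely the computation of this maximal set, and this is exactly where I would plug in \Cref{thm:HypergraphMatching}. I build an auxiliary hypergraph $H$ on vertex set $V(G)$ whose hyperedges are the eligible length-$\ell$ augmenting paths---restricted to the canonical BFS-layered paths used in the Ghaffari-Su construction, which keeps the formulation local and caps the maximum degree of $H$ at $\Delta^{O(\ell)}$. A matching in $H$ is exactly a vertex-disjoint collection of augmenting paths, so running \Cref{thm:HypergraphMatching} on $H$ yields the desired set. Since $H$ has rank $r = \ell+1 = O(\log n/\eps)$, each round of communication on $H$ is simulated by $O(\ell)$ rounds on $G$.

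Substituting $r = O(\log n/\eps)$ and $\log\Delta_H = O(\ell \log \Delta) = O(\log^2 n/\eps)$ into \Cref{thm:HypergraphMatching}, the per-outer-iteration cost of the maximal matching call becomes
\[
O(\ell) \cdot O\bigl(r^5 \log^{6+\log r}\Delta_H \cdot \log n\bigr)
\;=\; 2^{O(\log^2(\log n/\eps))},
\]
and multiplying by the $\poly(\log n/\eps)$ outer iterations of the Ghaffari-Su framework is absorbed in the same bound, yielding the claimed round complexity.

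The main obstacle I expect is not the complexity accounting but the modeling step: one must ensure that the hypergraph $H$ can actually be constructed and communicated on within our budget, which forces a careful local parameterization of ``canonical'' augmenting paths rather than an enumeration of all length-$\ell$ paths, and one must verify that restricting to these canonical paths does not break the Ghaffari-Su convergence analysis. A secondary point is that strict maximality is unnecessary here; as in the $(1+\eps)$-approximation theorem for maximum matching, one can settle for an ``almost maximal'' set by running only $O\bigl(r^3 \log(\Delta^{O(\ell)}/\eps)\bigr)$ iterations of the approximation procedure inside \Cref{thm:HypergraphMatching}, which shaves the residual $\log n$ factor and preserves the $2^{O(\log^2(\log n/\eps))}$ bound even when $\eps$ is as small as $1/\poly\log n$.
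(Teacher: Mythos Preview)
Your overall strategy---plug \Cref{thm:HypergraphMatching} into the Ghaffari--Su augmenting-path framework---is exactly what the paper does, and your complexity arithmetic is correct. However, there is a genuine gap in the modeling step.

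You build the hypergraph on $V(G)$ and ask for \emph{vertex-disjoint} augmenting paths. The Ghaffari--Su convergence argument (their Lemma~D.6, cited in the paper) is a blocking-flow style statement: after reversing a maximal set of \emph{edge-disjoint} augmenting paths of length $i$, no augmenting path of length $\le i$ survives, so the shortest augmenting path strictly grows and $O(\log n/\eps)$ outer iterations suffice. This fails for vertex-disjoint maximality. Concretely, take $u_1\!\to\! v\!\to\! w_1$ and $u_2\!\to\! v\!\to\! w_2$ with $u_1,u_2$ over budget and $w_1,w_2$ under budget; the singleton $\{u_1\!\to\! v\!\to\! w_1\}$ is a maximal vertex-disjoint set, yet after reversing it the path $u_2\!\to\! v\!\to\! w_2$ is untouched and still has the same length. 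So your outer-iteration bound of $\poly(\log n/\eps)$ is unjustified as stated.

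The fix is the one the paper uses: take the hypergraph vertices to be the \emph{edges} of $G$ (so a hyperedge is the edge set of an augmenting path, rank still $\le \ell$), and compute a maximal matching there---i.e., a maximal set of edge-disjoint augmenting paths. With that change your restriction to ``canonical BFS-layered paths'' is unnecessary (the paper simply takes all augmenting paths of length at most $3+i$ as hyperedges), and the degree bound $\Delta_H\le \Delta^{O(\ell)}$ still holds. Two smaller points: the sink condition should be out-degree at most $D-1$ with $D=\lceil(1+\eps)\lambda\rceil$, not ``below $\lambda$''; and your closing remark about replacing maximality by almost-maximality is not needed here, since the extra $\log n$ factor is already absorbed in $2^{O(\log^2(\log n/\eps))}$.
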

\begin{proof}
We follow the approach of Ghaffari and Su\cite{GS17}, which iteratively improves the orientation, i.e., reduces its maximum out-degree, using suitably defined augmenting paths. They developed this approach and used it along with Luby's randomized MIS algorithm \cite{luby1986simple} to obtain a polylogarithmic round randomized algorithm for finding an orientation with out-degree at most $\lceil\lambda (1+\eps)\rceil$. We show how to turn that algorithm into a quasi-polylogarithmic round deterministic algorithm, mainly by replacing their MIS module with our hypergraph maximal matching algorithm.

Let $D=\lceil\lambda (1+\eps)\rceil$. Given an arbitrary orientation, we call a path $P$ an augmenting path for this orientation if $P$ is a directed path that starts in a node with out-degree at least $D+1$ and ends in a node with out-degree at most $D-1$. Augmenting this path means reversing the direction of all of its edges. Notice that this would improve the orientation, as it would decrease the out-degree of one of the nodes whose outdegree is above the budget $D$, without creating a new such node.

Let $G_0$ be the graph with our initial arbitrary orientation. Define $G'_{0}$ to be a directed graph obtained by adding a source node $s$ and a sink node $t$ to $G_0$. Then, we add $outdeg_{G_0}(u)-D$ edges from $s$ to every node $u$ with outdegree at least $D+1$, and $D-outdeg_{G_0}(u)$ edges from every node $u$ with outdegree at most $D-1$ to $t$. We will improve the orientation gradually, in $\ell=O(\log n/\eps)$ iterations. In the $i^{th}$ iteration, we find a maximal set of edge-disjoint augmenting paths of length $3+i$ from $s$ to $t$ in $G'_i$, and then we reverse all these augmenting paths. The resulting graph is called $G'_{i+1}$. 

Ghaffari and Su\cite[Lemma D.6]{GS17} showed that in this manner, each time the length of the augmenting path increases by at least an additive $1$. Moreover, they showed that at the end of the process, no augmenting paths of length at most $\ell=O(\log n/\eps)$ remains. They used this to prove that there must be no node of out-degree $D+1$ left, at the end of the process, as any such node would imply the existence of an augmenting path of length at most $\ell=O(\log n/\eps)$ \cite[Lemma D.9]{GS17}.

The only algorithmic piece that remains to be explained is how we compute a maximal set of edge-disjoint augmenting paths of length at most $3+i < \ell$, in a given orientation. Ghaffari and Su\cite[Theorem D.4]{GS17} solved this part using Luby's randomized MIS algorithm \cite{luby1986simple}. We instead use our hypergraph maximal matching algorithm. In particular, we view each edge as one vertex of our hypergraph, and each augmenting path of length at most $3+i < \ell$ as one hyperedge of our hypergraph. Then, we invoke \Cref{thm:HypergraphMatching}, which provides us with a maximal set of edge-disjoint augmenting paths. The round complexity of the process is at most $\poly(\ell) \cdot \log^{\log (\log n)/\eps + O(1)} \Delta \cdot \log n$, where the first term $\ell$ is because simulating each hyperedge needs $\ell$ rounds, and the second $\ell$-factor comes from the fact that the degree of the hypergraph may be as large as $\Delta^{\ell}$, which means the related logarithm is at most $\ell\log \Delta$. This is the complexity for each iteration. Since the algorithm has $\ell$ iterations, each time working on an incremented augmenting-path length, the overall complexity is at most  
$\poly(\ell) \cdot \log^{\log ((\log n)/\eps) + O(1)} \Delta \cdot \log n$. This is no more than $2^{O(\log^2 (\log n/\eps)})$ rounds, which is quasi-polylogarithmic in $n$ for most $\eps$-values of interest, e.g., $\eps = \Omega(1/\poly\log n)$.
\end{proof}

\section{Open Problems}
We believe that our techniques and results open the road for further progress on deterministic distributed graph algorithms, with clear consequences also on randomized algorithms, as exemplified by \Cref{thm:Randedge-coloring}. As Barenboim and Elkin suggested when discussing their Open Problem 5, perhaps these will serve as a ``\emph{good stepping stone}" towards obtaining an efficient deterministic algorithms for MIS, thus resolving Linial's long-standing question\cite{linial1987LOCAL}. As concrete steps on this path, we point out two smaller problems, which appear to be the immediate next steps.

\paragraph{Hypergraph Maximal Matching with Better Rank Dependency} For our hypergraph maximal matching algorithm, we have been more focused on the case of smaller ranks $r$, and the current complexity has a factor of $\log^{\log r} \Delta$ in it. Can we improve this to $\poly(r \log n)$, for instance? Notice that the case of $r=\poly\log n$ captures a range of problems of interest, see e.g. \Cref{subsec:orientation}, and this improvement would give a $\poly\log n$-time algorithm for these cases, including a resolution of Open Problem 10 of Barenboim and Elkin's book\cite{barenboim2013monograph}. 

\paragraph{Better than $(2\Delta-1)$-Edge-Coloring} We obtained a polylogarithmic-time algorithm for $(2\Delta-1)$-edge-coloring, as formalized in \Cref{thm:edge-coloring}. This value of $2\Delta-1$ is a natural threshold, because this is what greedy sequential arguments obtain, which made it the classic target of (deterministic) distributed algorithms. However, as Vizing's theorem\cite{vizing1964estimate} shows, every graph has a $(\Delta+1)$-edge-coloring. How close can we get to this, while remaining with polylogarithmic-time \local algorithms?

We are confident that by combining \Cref{thm:edge-coloring} with ideas of Panconesi and Srinivasan~\cite{panconesi1995local} for $\Delta$-vertex-coloring, we can obtain a polylogarithmic-time algorithm for $(2\Delta-2)$-edge-coloring.  But how about $(2\Delta-3)$-edge-coloring, or even $(3\Delta/2)$-edge-coloring?
\medskip

Interestingly, we can already make some progress on this question for graphs with small arboricity. This result is achieved by combining our list-edge-coloring algorithm of \Cref{thm:edge-coloring} with an $H$-partitioning method of Barenboim and Elkin\cite[Chapter 5.1]{barenboim2013monograph}. This significantly generalizes the $(\Delta+o(\Delta))$-edge-coloring results of \cite{barenboim2016edgecoloring}, which worked for $a \leq \Delta^{1-\delta}$ for some constant $\delta>0$.

\begin{corollary}\label{lem:edge-coloring-lowArb} There is a deterministic distributed algorithm that computes an edge-coloring with $\Delta + (2+\eps) a -1$ colors in $O(\frac{1}{\eps}\log^7 \Delta \log^2 n)$ rounds, on any $n$-node graph $G=(V, E)$ with maximum degree $\Delta$ and arboricity $a$. 
\end{corollary}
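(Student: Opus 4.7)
The plan is to combine the list-edge-coloring algorithm of \Cref{thm:edge-coloring} with the $H$-partitioning technique of Barenboim and Elkin. First, I would compute an $H$-partition $V = H_1 \cup \dots \cup H_\ell$ with $\ell = O(\log n / \eps)$ layers such that every $v \in H_i$ has at most $(2+\eps) a$ neighbors in $\bigcup_{j \ge i} H_j$; by iteratively peeling vertices of degree at most $(2+\eps) a$ in the current residual graph---which always exist in sufficient fraction since arboricity $a$ implies that any subgraph has average degree at most $2a$---this step runs in $O(\log n / \eps)$ rounds. I would then orient each edge from its lower-layer endpoint to its higher-layer endpoint, breaking ties by node IDs, so each vertex has out-degree $d^+(v) \le (2+\eps) a$. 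Finally, fix a global palette of $C = \Delta + (2+\eps) a - 1$ colors.

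Next, I would process the layers in reverse order $i = \ell, \ell-1, \ldots, 1$. At step $i$, I color exactly the out-edges of vertices in $H_i$; call this residual subgraph $G_i$. For each $e = \{v,u\}$ of $G_i$ with $v \in H_i$, define the list $L_e$ as the palette $[C]$ with the (at most $(2+\eps) a$) colors already used by edges incident to $u$ in earlier steps removed---no colors are blocked at $v$ since $v$'s edges are first being touched now. Then invoke the list-edge-coloring algorithm of \Cref{thm:edge-coloring} on $G_i$ with these lists (truncating each list to exactly $d_{G_i}(e)+1$ entries if needed to match the theorem's format).

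The main obstacle, and the crux of the argument, is verifying that the list-size invariant $|L_e| \ge d_{G_i}(e) + 1$ holds uniformly, where $d_{G_i}(e) = d_{G_i}(v) + d_{G_i}(u) - 2$. For $u \in H_j$ with $j > i$, at most $d^+(u) \le (2+\eps) a$ colors are blocked at $u$, and the $G_i$-edges at $u$ are precisely the incoming edges from $H_i$, which are disjoint from the $d^+(u)$ already-colored out-edges of $u$; therefore $d_{G_i}(u) \le d(u) - d^+(u)$. Combined with $d_{G_i}(v) \le d^+(v) \le (2+\eps) a$, one gets
\[
d_{G_i}(v) + d_{G_i}(u) - 1 \;\le\; (2+\eps) a + d(u) - d^+(u) - 1 \;\le\; \Delta + (2+\eps) a - 1 - d^+(u) \;=\; |L_e|,
\]
where the last inequality uses $d(u) \le \Delta$. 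The case $u \in H_i$ is analogous with no colors blocked at $u$. The palette size $\Delta + (2+\eps) a - 1$ is calibrated precisely so this accounting is tight.

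For round complexity, each of the $\ell = O(\log n / \eps)$ invocations of \Cref{thm:edge-coloring} runs in $O(\log^7 \Delta \cdot \log n)$ rounds on a subgraph of max degree $\le \Delta$ with $\le n$ vertices; summing these with the $O(\log n / \eps)$-round $H$-partition preprocessing yields a total of $O(\tfrac{1}{\eps} \log^7 \Delta \cdot \log^2 n)$ rounds, matching the claim.
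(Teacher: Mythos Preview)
Your overall plan matches the paper's: compute an $H$-partition, process the layers in reverse, and at each layer invoke \Cref{thm:edge-coloring} on the newly introduced edges. The round-complexity accounting is also correct. However, there is a genuine gap in your list-size argument.

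You claim that for $u \in H_j$ with $j > i$, at most $d^+(u)$ colors are blocked at $u$. This is false. By the time you reach step $i$, the already-colored edges are exactly the out-edges of vertices in $H_{i+1},\ldots,H_\ell$, which is the set of all edges with both endpoints in $\bigcup_{k>i} H_k$. Hence the colored edges incident to $u$ are \emph{all} of $u$'s edges to layers $>i$, not merely $u$'s out-edges (which only go to layers $\ge j$); in particular, $u$'s in-edges from layers $i+1,\ldots,j-1$ (and from lower-ID vertices in $H_j$) are also already colored. The number of blocked colors at $u$ can thus be as large as $d(u)$ minus the number of $u$'s edges to layers $\le i$, which may far exceed $(2+\eps)a$. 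Consequently your final equality $|L_e| = C - d^+(u)$ fails, and the displayed chain of inequalities does not establish $|L_e|\ge d_{G_i}(e)+1$.

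The repair is simple and is essentially what the paper does. Let $b(u)$ be the actual number of blocked colors at $u$. The blocked edges at $u$ (those to layers $>i$) and the $G_i$-edges at $u$ (those to layer $i$) are disjoint subsets of $u$'s incident edges, so $b(u)+d_{G_i}(u)\le d(u)\le \Delta$. Combined with $d_{G_i}(v)\le (2+\eps)a$, this yields
\[
d_{G_i}(v)+d_{G_i}(u)-1 \;\le\; (2+\eps)a + \bigl(\Delta - b(u)\bigr) - 1 \;=\; C - b(u) \;=\; |L_e|,
\]
which is exactly the invariant you need. Your case $u\in H_i$ (no blocked colors) is handled correctly.
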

Notice that any graph has arboricity $a \leq \Delta/2$. The above corollary shows that we start seeing savings in the number of colors as soon as the arboricity goes slightly below this upper bound, e.g., for $a< \Delta (1-\eps)/2$, we already get colorings with less than $2\Delta-2$ colors.

\begin{proof}[Proof of \Cref{lem:edge-coloring-lowArb}]
First, we compute an $H$-partitioning\cite[Chapter 5.1]{barenboim2013monograph} in $O(\log n/\eps)$ rounds. This decomposes $V$ into disjoint vertex sets $H_1$, $H_2$, \dots, $H_\ell$, for $\ell=O(\log n/\eps)$, with the property that each node in $H_i$ has degree at most $(2+\eps) a$ in the graph $G[\cup_{j=i}^{\ell} H_j]$. To compute this decomposition, one just needs to iteratively peel vertices of degree at most $(2+\eps)a$ from the remaining graph.

Having this partitioning, we compute a $(\Delta + (2+\eps) a-1)$-edge-coloring by gradually moving backwards in this partition, from $H_\ell$ towards $H_{1}$. Each step is as follows. Suppose we already have a coloring of edges of $G[\cup_{j=i+1}^{\ell} H_j]$. We now introduce the vertices of $H_i$ and also their edges whose other endpoint is in $\cup_{j=i}^{\ell} H_j$. Each such edge $e$ has at most $(2+\eps)a-1$ other incident edges on the side of its $H_i$-endpoint and at most $\Delta-1$ other incident edges on the other endpoint. If we take away the colors of $\{1, 2, \dots, \Delta + (2+\eps) a-1\}$ that are already used by neighboring edges $e'$ whose both endpoints are in $\cup_{j=i+1}^{\ell} H_j$, the edge $e$ would still have at least $d_{e}+1$ remaining colors in its palette, where $d_e$ is the number of edges in $G[\cup_{j=i}^{\ell} H_j]$ incident on $e$ who remain uncolored. Hence, we can color all these edges by applying the list-edge-coloring algorithm of \Cref{thm:edge-coloring}, in $O(\log^7 \Delta \log n)$ rounds. This is the round complexity needed for coloring new edges after introducing each layer $H_i$. Hence, the overall complexity until we go through all the $\ell$ layers and finish the edge-coloring of $G=G[\cup_{j=1}^{\ell} H_j]$ is $\ell \cdot O(\log^7 \Delta \log n) = O(\frac{1}{\eps}\log^7 \Delta \log^2 n)$.
\end{proof}

\section*{Acknowledgment} We are grateful to Moab Arar and Shiri Chechik for sharing with us their manuscript about distributed matching approximation in graphs\cite{ArarChechik2017Matching}. We note that we arrived at a prior (and slower) version of \Cref{lemma:hypergraphbasicRounding} (for rank-$3$ hypergraphs) inspired by a concept they use, called the \emph{kernel of a graph}, which itself is borrowed from \cite{bhattacharya2016new}.  

\bibliographystyle{alpha}
\bibliography{ref}

\newpage

\begin{appendix}
 
{\noindent\huge\bf Appendix}

\section{Solution of the Recurrence Relation of \Cref{sec:hypergraphmatching,sec:boundedindep}}

\begin{lemma}\label{lem:recursion}
  Let $r\geq 2$ and $\Delta\geq 2$ be two parameters and let $\alpha\geq 1$ and $c>0$ be two
  given constants. Further, let $R(L)$ be a function that is defined
  for $L\geq 1$ by the following recurrence relation:
  \begin{equation}\label{eq:recursion}
    R(L)\ :=\
    \begin{cases}
      cr^2 + c\log\Delta, & \text{ if } L\leq 4,\\
      \alpha r \cdot R(\sqrt{2L}) + c r, & \text{ otherwise.}
    \end{cases}
  \end{equation}
  Then we have
  $R(L) = O\big(r^2 + (\log L)^{\log_2 \alpha + \log_2 r}(r^2 + \log\Delta)\big)$.
\end{lemma}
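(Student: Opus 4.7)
The plan is to solve the recurrence by direct unrolling. The only substantive quantitative step is an analysis of how many iterations of $L \mapsto \sqrt{2L}$ are needed before the argument drops into the base-case region $L \le 4$.

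First, I would establish a closed form for the iterates. Setting $L_0 := L$ and $L_{i+1} := \sqrt{2 L_i}$, a short induction gives $L_i = 2^{1 - 2^{-i}} \cdot L^{2^{-i}}$, and therefore $\log_2 L_i - 1 = (\log_2 L - 1)/2^i$. Consequently $L_i \le 4$ holds as soon as $2^i \ge \log_2 L - 1$, so the recursion reaches the base case after $k := \lceil \log_2 \max(\log_2 L - 1,\, 1) \rceil = O(\log\log L)$ steps.

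Next, I would remove the additive $cr$ term by the standard substitution $R'(L) := R(L) + cr/(\alpha r - 1)$, which is well-defined and $O(1)$ since $\alpha r \ge 2$ (using $r\ge 2$, $\alpha\ge 1$). A one-line check shows that $R'$ satisfies the clean homogeneous recurrence $R'(L) = \alpha r \cdot R'(\sqrt{2L})$ for $L > 4$, with $R'(L) = cr^2 + c\log\Delta + O(1) = O(r^2 + \log\Delta)$ in the base case. Unrolling this geometric recurrence $k$ times yields $R'(L) = (\alpha r)^k \cdot R'(L_k)$, and hence $R(L) \le (\alpha r)^k \cdot O(r^2 + \log\Delta)$. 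To convert $(\alpha r)^k$ into a power of $\log L$, I would use $k \le \log_2(\log_2 L - 1) + 1$ to obtain $(\alpha r)^k \le \alpha r \cdot (\log_2 L - 1)^{\log_2(\alpha r)} = O\bigl((\log L)^{\log_2 \alpha + \log_2 r}\bigr)$, where the hidden constant depends only on $\alpha$. Combining these two estimates gives $R(L) = O\bigl((\log L)^{\log_2 \alpha + \log_2 r}(r^2 + \log\Delta)\bigr)$ for $L > 4$, and for $L \le 4$ the base value $cr^2 + c\log\Delta$ is absorbed by the additive $r^2$ term in the claimed bound (which is there precisely for the regime where $(\log L)^{\log_2 \alpha + \log_2 r}$ is not yet a useful quantity).

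The only genuine bookkeeping nuisance is the ceiling in the definition of $k$: because $k$ can exceed $\log_2(\log L - 1)$ by almost a unit, $(\alpha r)^k$ picks up a worst-case extra factor of $\alpha r$ relative to $(\log L)^{\log_2(\alpha r)}$. The factor of $\alpha$ is absorbed into the big-$O$ constant (since $\alpha$ is treated as a constant), and the residual factor of $r$ is absorbed into the $(r^2 + \log\Delta)$ term. A direct strong induction on $L$ with the hypothesis $R(L) + cr \le B(r^2 + \log\Delta)(\log L - 1)^{\log_2(\alpha r)}$ for $L \ge 4$, choosing $B$ to be a sufficiently large absolute multiple of $c$, offers an equally clean alternative; the inductive step for $L \ge 8$ telescopes using the crucial identity $2^{\log_2(\alpha r)} = \alpha r$, while $L \in (4, 8]$ is handled directly from the base case.
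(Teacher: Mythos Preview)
Your approach is essentially the paper's: both unroll the iteration $L\mapsto\sqrt{2L}$, observe that the depth is $k=\Theta(\log\log L)$ via the identity $\log_2 L_i-1=(\log_2 L-1)/2^i$, and bound $R(L)$ by $(\alpha r)^k$ times the base value. Your substitution $R'(L)=R(L)+cr/(\alpha r-1)$ to homogenize the recursion is a cosmetic alternative to the paper's explicit geometric sum $cr\sum_{i=0}^{t_L-1}(\alpha r)^i$; the computations are otherwise identical.

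One small correction worth flagging: your claim that the residual factor of $r$ coming from the ceiling in $k$ ``is absorbed into the $(r^2+\log\Delta)$ term'' is not right, since $r\,(r^2+\log\Delta)$ is not $O(r^2+\log\Delta)$ uniformly in $r$. The paper makes the same slip in a different guise, asserting $t_x\le\log_2\log_2 x$, whereas in fact $t_x=\lceil\log_2(\log_2 x-1)\rceil$ can exceed $\log_2\log_2 x$ by almost~$1$ (e.g.\ at $x=33$ one has $t_x=3>\log_2\log_2 33\approx 2.33$). In both arguments the stated bound is therefore potentially loose by a single factor of~$r$; this is harmless for the downstream applications, where the polynomial dependence on $r$ is not optimized anyway.
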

\begin{proof}
  For all $x\geq 1$, we define a non-negative integer $t_x$ as
  \[
  t_x := \min\set{t \in \mathbb{N}_0 \,:\, \left(\frac{x}{2}\right)^{2^{-t}} \leq 2}.
  \]

  We prove that for all $L\geq 1$, we have
  \begin{equation}\label{eq:recsolution}
    R(L) \leq (\alpha r)^{t_L}\cdot (cr^2 +c\log\Delta) +
    cr\sum_{i=0}^{t_L-1}(\alpha r)^i 
    \ \stackrel{(\alpha r \geq 2)}{<}\  
    2(\alpha r)^{t_L}\cdot (c r^2 + c\log\Delta).
  \end{equation}
  For $x\geq 1$, we have $t_x \leq \max\set{0, \log_2\log_2 x}$ and
  thus the claim of the lemma directly follows from
  \Cref{eq:recsolution}.

  To prove \Cref{eq:recsolution}, first note that for $L\leq 4$, we
  have $t_L\geq 0$ and because $\alpha r \geq 1$, we thus have
  $R(L)\leq c r^2$ as required by \Cref{eq:recursion}. For $L>4$, we
  prove \Cref{eq:recsolution} by induction. More formally, for each
  $L>4$, we show that there is a finite sequence
  $L=L_k > L_{k-1} > \cdots > L_0$ such that $L_0 \leq 4$ and such
  that for each $i\in \set{1,\dots, k}$, \Cref{eq:recursion} implies
  that if \Cref{eq:recsolution} holds for $L_{i-1}$, it also holds
  for $L_{i}$.
  
  Let us therefore assume that $L_k=L>4$. For $i\geq 1$, we define
  $L_{i-1} := \sqrt{2L_{i}}$. First note that because for $x>4$,
  $\sqrt{2x} \leq x/\sqrt{2}$ and thus we reach a value smaller than
  $4$ in a bounded number of steps. For every $i\geq 1$ such that
  $L_{i}>4$, we have
  \[
  t_{L_{i-1}} = \min\set{t\in \mathbb{N}_0 \,:\,
    \left(\sqrt{\frac{L_{i}}{2}}\right)^{2^{-t}} =
    \left(\frac{L_{i}}{2}\right)^{2^{-(t+1)}} \leq 2}\ =\ t_{L_{i}} - 1.
  \]
  From \Cref{eq:recursion}, for $L_{i}>4$, we therefore have
  \begin{eqnarray*}
    R(L_{i}) 
    & \leq & \alpha r \cdot R(L_{i-1}) + cr \\
    & \leq & \alpha r \cdot \left((\alpha r)^{t_{L_{i}}-1}\cdot(cr^2
             + c\log\Delta)
             + cr\cdot
             \sum_{j=0}^{t_{L_i}-2}(\alpha r)^j\right) + cr \\
    & = & (\alpha r)^{t_{L_i}}\cdot (cr^2 +\log\Delta)+
    cr\cdot\sum_{j=0}^{t_{L_i}-1}(\alpha r)^j.
  \end{eqnarray*}
  This proves \Cref{eq:recsolution} and thus concludes the proof.
\end{proof}
\end{appendix}
\end{document}